\newtheorem{theorem}{Theorem}
\newtheorem{lemma}[theorem]{Lemma}
\newtheorem{definition}{Definition}
\newtheorem{claim}[theorem]{Claim}
\newtheorem{invariant}[theorem]{Invariant}
\newcommand{\m}[1]{\ensuremath{\mathcal{#1}}}
\algnewcommand\algorithmiccase{\textbf{case}}
\algnewcommand\StateCase[1]{\State\hphantom{x}\ #1 \algorithmicthen} 
\algnewcommand{\IfL}[1]{\State\algorithmicif\ #1\ \algorithmicthen}
\algnewcommand{\EndIfL}{\unskip\ \algorithmicend\ \algorithmicif}
\newcommand{\rd}[1]{\textsc{read}}
\newcommand{\wrt}[1]{\textsc{write}}
\newcommand{\adt}[1]{\textsc{audit#1}}
\newcommand{\mwrt}[1]{\textsc{maxWrite}}
\newcommand{\wrtm}{\textsc{writeMax}}
\newcommand{\upd}[1]{\textsc{update#1}}
\newcommand{\scn}[1]{\textsc{scan#1}}
\newcommand{\cas}{\ensuremath{\mathsf{compare\&swap}}}
\newcommand{\fx}{\ensuremath{\mathsf{fetch\&xor}}}
\newcommand{\ind}[3]{#1\stackrel{#2}{\sim}#3}
\newcommand{\indp}[2]{\ind{#1}{p}{#2}}
\newcommand*\patchAmsMathEnvironmentForLineno[1]{%
  \expandafter\let\csname old#1\expandafter\endcsname\csname #1\endcsname
  \expandafter\let\csname oldend#1\expandafter\endcsname\csname end#1\endcsname
  \renewenvironment{#1}%
     {\linenomath\csname old#1\endcsname}%
     {\csname oldend#1\endcsname\endlinenomath}}%
\newcommand*\patchBothAmsMathEnvironmentsForLineno[1]{%
  \patchAmsMathEnvironmentForLineno{#1}%
  \patchAmsMathEnvironmentForLineno{#1*}}%
\newif\ifannote
    \newcommand{\anncomment}[3]{{\color{#1}[#2: #3]}}
    \newcommand{\anncomment}[3]{}
\newcommand{\af}[1]{{#1}}
\newcommand{\ha}[1]{{#1}}
\title{Auditing without Leaks Despite Curiosity}
\author{Hagit Attiya}
\affiliation{%
  \institution{Technion}
  \city{Haifa}
  \country{Israel}
  }
\email{hagit@cs.technion.ac.il}
\author{Antonio Fernández Anta}
\affiliation{%
  \institution{IMDEA Software \& Networks Inst.}
  \city{Madrid}
  \country{Spain}
}
\email{antonio.fernandez@imdea.org}
\author{Alessia Milani}
\affiliation{%
  \institution{Aix Marseille Univ, CNRS, LIS}
  \city{Marseille}
  \country{France}
}
\email{alessia.milani@lis-lab.fr}
\author{Alexandre Rapetti}
\affiliation{%
  \institution{Université Paris-Saclay, CEA, List}
  \city{Palaiseau}
  \country{France}
}
\email{alexandre.rapetti@cea.fr}
\author{Corentin Travers}
\affiliation{%
  \institution{Aix Marseille Univ, CNRS, LIS}
  \city{Marseille}
  \country{France}
}
\email{corentin.travers@lis-lab.fr}
\begin{document}

\begin{abstract}
\textit{Auditing} data accesses helps preserve privacy and ensures accountability by allowing one to determine who accessed (potentially sensitive) information. A prior formal definition of register auditability was based on the values returned by read operations, \emph{without accounting for cases where a reader might learn a value without explicitly reading it or gain knowledge of data access without being an auditor}.

This paper introduces a refined definition of auditability that focuses on when a read operation is \emph{effective}, rather than relying on its completion and return of a value. Furthermore, we formally specify the constraints that \textit{prevent readers from learning values they did not explicitly read or from auditing other readers' accesses.}

Our primary algorithmic contribution is a wait-free implementation of a \emph{multi-writer, multi-reader register} that tracks effective reads while preventing unauthorized audits. The key challenge is ensuring that a read is auditable as soon as it becomes effective, which we achieve by combining value access and access logging into a single atomic operation. Another challenge is recording accesses without exposing them to readers, which we address using a simple encryption technique (one-time pad).

We extend this implementation to an \emph{auditable max register} 
that tracks the largest value ever written. 
The implementation deals with the additional challenge posed by the
max register semantics, which allows readers to learn prior values without reading them.

The max register, in turn, serves as the foundation for implementing an \emph{auditable snapshot} object and, more generally, \emph{versioned types}. These extensions maintain the strengthened notion of auditability, appropriately adapted from multi-writer, multi-reader registers.
\end{abstract}

\begin{CCSXML}
<ccs2012>
<concept>
<concept_id>10003752.10003809.10010172</concept_id>
<concept_desc>Theory of computation~Distributed algorithms</concept_desc>
<concept_significance>500</concept_significance>
</concept>
</ccs2012>
\end{CCSXML}

\ccsdesc[500]{Theory of computation~Distributed algorithms}

\keywords{Auditability, Wait-free implementation, Synchronization power, Distributed objects, Shared
memory}

\maketitle

\section{Introduction}

\emph{Auditing} is a powerful tool for determining \emph{who} had access to \emph{which} (potentially sensitive) information. Auditability is crucial for preserving data privacy, as it ensures accountability for data access. This is particularly important in shared, remotely accessed storage systems, where understanding the extent of a data breach can help mitigate its impact.

\subsection{Auditable Read/Write Registers}

Auditability was introduced by Cogo and Bessani~\cite{BessaniDisc} in the context of replicated \emph{read/write registers}. An auditable register extends traditional read and write operations with an additional \emph{audit} operation that reports which register values have been read and by whom. 
The auditability definition by Cogo and Bessani 
is tightly coupled with their multi-writer, multi-reader register 
emulation in a replicated storage system using an information-dispersal scheme.

An implementation-agnostic auditability definition was later proposed~\cite{AttiyaPMPR23}, 
based on collectively linearizing read, write, and audit operations. 
This work also analyzes the consensus number required for implementing auditable 
single-writer registers, showing that it scales with the number of readers and auditors. 
However, this definition assumes that a reader only gains access to values that 
are explicitly \emph{returned} by its read operations. 
This assumption does not account for situations where a reader learns 
the register's value before it has officially returned, 
making the read operation \emph{effective}.
Hence, a notable limitation of this definition is that a process with an effective read can refuse to complete the operation, thereby avoiding detection by the audit mechanism. 

Prior work has also overlooked the risk of non-auditors learning values without explicitly reading them or inferring accesses of other processes. Even when processes follow their prescribed algorithms without active misbehavior, existing auditable register implementations allow an ``honest but curious'' process to learn more than what its read operations officially return. Additionally, extending auditability beyond read/write registers remained an unexplored territory.

\subsection{Our Contributions and Techniques}

In this work, we propose a stronger form of auditability for read/write registers, 
ensuring that all effective reads are auditable and that non-auditors cannot 
infer the values read by other processes. 
We further extend these properties to other data structures 
and propose new algorithms that fulfill these guarantees.

We define new properties that ensure operations do not leak information when processes \ha{are honest-but-curious \cite{DBLP:journals/dc/Goldreich03}}
(see Section~\ref{sec:definitions}).
Firstly, we introduce an implementation-agnostic definition of an \emph{effective operation}, 
which is applicable, for instance, to read operations in an auditable register. 
An operation is effective if a process has determined its return value 
in all executions indistinguishable to it.
Secondly, we define \emph{uncompromised operations}, saying, for example, 
that in a register, readers do not learn which values were read by other readers 
or gain information about values they do not read.
This definition is extended beyond registers.
For arbitrary data objects, we specify that an operation is \emph{uncompromised} if 
there is an indistinguishable execution where the operation does not occur. 

Enforcing uncompromised operations in auditable objects poses a challenge
since it is, in a sense, antithetical to securely logging data accesses.
Our primary algorithmic contribution (Section~\ref{sec:register}) 
is a wait-free, linearizable implementation of an auditable multi-writer, 
multi-reader register. 
Our implementation ensures that all effective reads are auditable 
while preventing information leaks:
reads are uncompromised by other readers, 
and cannot learn previous values --- unless they actually read them. 
As a consequence, the implementation is immune to a \af{honest-but-curious} attacker.

To achieve these properties, our algorithm carefully combines value 
access with access logging. 
Additionally, access logs are encrypted using one-time pads 
known only to writers and auditors.
The subtle synchronization required in our implementation 
is achieved by using \cas{} and \fx{}  
(in addition to ordinary reads and writes).
Such strong synchronization primitives are necessary since even simple single-writer 
auditable registers can solve consensus~\cite{AttiyaPMPR23}.
The correctness proof of the algorithm, 
of basic linearizability properties 
as well as of advanced auditability properties, 
is intricate and relies on a careful linearization function.

Our second algorithmic contribution is an elegant extension of the 
register implementation to other commonly-used objects. 
We first extend our framework to a wait-free, linearizable implementation 
of an auditable multi-writer, multi-reader \emph{max register}~\cite{AspnesAC2012}, 
which returns the largest value ever written.
The semantics of a max register, together with tracking the number of 
operations applied to it (needed for logging accesses), may leak information
to the reader about values it has not effectively read.
We avoid this leakage by adding a \emph{random nonce}, 
serving to introduce some noisiness, to the values written.
(See Section~\ref{sec:maxreg}.)
As before, all effective reads are auditable, 
and no additional information is leaked.

In Section~\ref{sec:snapshot}, we demonstrate how an auditable max register
enables auditability in other data structures. 
Specifically, we implement auditable extension of \emph{atomic snapshots}~\cite{AfekADGMS93} 
and more generally, of \emph{versioned types}~\cite{DenysyukW15}. 
Many useful objects, such as counters and logical clocks, 
are naturally versioned or can be made so with minimal modification.

\subsection{Related Work}

Cogo and Bessani~\cite{BessaniDisc} present an algorithm to 
implement an auditable \emph{regular} register,
using $n \geq 4f+1$ atomic read/write shared objects,
$f$ of which may fail by crashing.
Their high-level register implementation
relies on information dispersal schemes, where the input of a high-level write
is split into several pieces, each written in a different low-level shared object.
Each low-level shared object keeps a trace of each access, and in order to read, 
a process has to collect sufficiently many pieces of information in many low-level 
shared objects, which allows to audit the read.

In asynchronous message-passing systems where $f$ processes can be Byzantine,
Del Pozzo, Milani and Rapetti~\cite{SRDS22} study the possibility of implementing
an atomic auditable register, 
as defined by Cogo and Bessani, with fewer than $4f+1$ servers.
They prove that without communication between servers,
auditability requires at least $4f+1$ servers, $f$ of which may be Byzantine.
They also show that allowing servers to communicate with each other admits
an auditable atomic register with optimal resilience of $3f+1$. 

Attiya, Del Pozzo, Milani, Pavloff and Rapetti~\cite{AttiyaPMPR23} 
provides the first implementation-agnostic auditability definition. 
Using this definition they show that auditing adds power to reading and writing, 
as it allows processes to solve consensus, 
implying that auditing requires strong synchronization primitives.
They also give several implementations that use non-universal primitives
(like swap and fetch\&add), for a single writer 
and either several readers or several auditors 
(but not both). 

When faulty processes are \emph{malicious},
\emph{accountability}~\cite{peerreview,Civit22Crime, DBLP:journals/jpdc/CivitGGGK23, DR2022Tenderbake}
aims to produce proofs of misbehavior in instances where processes deviate, 
in an observable way, from the prescribed protocol.
This allows the identification and removal of malicious processes from the system
as a way to clean the system after a safety violation.
In contrast, auditability logs the processes' actions and lets the auditor
derive conclusions about the processes' behavior.

In addition to tracking access to shared data, it might be desirable to
give to some designated processes the ability to grant and/or revoke
access rights to the data. Frey, Gestin and Raynal~\cite{DBLP:conf/wdag/FreyGR23}
specify and investigate the synchronization power of shared objects
called  \emph{AllowList} and \emph{DenyList}, allowing a set of
manager processes  to grant or revoke access rights for a given set of
resources.

 \section{Definitions}
\label{sec:definitions}

\paragraph{Basic notions}

We use a standard model, in which a set of processes $p_1, \dots, p_n$, 
communicate through a shared memory consisting of \emph{base objects}.
The base objects are accessed with \emph{primitive operations}.
In addition to atomic reads and writes, our implementations use 
two additional standard synchronization primitives:
$\cas(R,old,new)$ atomically compares the current value of $R$ with 
$old$ and if they are equal, replaces the current value of $R$ with $new$;
$\fx(R,arg)$ atomically replaces the current value of $R$ with a 
bitwise XOR of the current value and $arg$.\footnote{\fx{} 
is part of the ISO C++ standard since C++11~\cite{CPP}.} 
 
An \emph{implementation} of a (high-level) object $T$ specifies a program for each process and each operation of the object $T$;
when receiving an \emph{invocation} of an operation,
the process takes \emph{steps} according to this program. 
Each step by a process consists of some local computation,
followed by a single primitive operation on the shared memory.
The process may change its local state after a step, and it
may return a \emph{response} to the operation of the high-level object.

Implemented (high-level) operations are denoted with capital letters, 
e.g., \rd{}, \wrt{}, \adt{}, 
while primitives applied to base objects, 
appear in normal font, e.g., read and write.

A \emph{configuration} $C$ specifies the state of every process and of every base object.
An \emph{execution} $\alpha$ is an alternating sequence of configurations and events, starting with an \emph{initial configuration}; it can be finite or infinite.
For an execution $\alpha$ and a process $p$, 
$\alpha |_p$ is the projection of $\alpha$ on events by $p$.
For two executions $\alpha$ and $\beta$, we write 
$\ind{\alpha}{p}{\beta}$ when $\alpha |_p = \beta|_p$, 
and say that $\alpha$ and $\beta$ are \textit{indistinguishable} to process $p$.

An operation $op$ \emph{completes} in an execution $\alpha$ if $\alpha$ 
includes both the invocation and response of $op$;
if $\alpha$ includes the invocation of $op$, 
but no matching response, then $op$ is \emph{pending}.
An operation $op$ \emph{precedes} another operation $op'$ in $\alpha$
if the response of $op$ appears before the invocation of $op'$ in $\alpha$.

A \emph{history} $H$ is a sequence of invocation and response events;
no two events occur at the same time.
The notions of \emph{complete}, \emph{pending} and \emph{preceding} operations
extend naturally to histories. 

The standard correctness condition for concurrent implementations 
is \emph{linearizability}~\cite{HerlihyWing90}: intuitively,
it requires that each operation appears to take place instantaneously 
at some point between its invocation and its response.
Formally:

\begin{definition}
Let \m{A} be an implementation of an object $T$.
An execution $\alpha$ of \m{A} is \emph{linearizable} if there is 
a sequential execution $L$ (a \emph{linearization} of the operations on $T$ in $\alpha$) 
such that:
\begin{itemize}
    \item $L$ contains all complete operations in $\alpha$, 
    and a (possibly empty) subset of the pending operations in $\alpha$ (completed with response events),
    \item 
    If an operation $op$ precedes an operation $op'$ in $\alpha$, 
    then $op$ appears before $op'$ in $L$, and
    \item $L$ respects the sequential specification of the high-level object.
\end{itemize}
\m{A} is \emph{linearizable} if all its executions are linearizable. 
\end{definition}


An implementation is \emph{lock-free} if, whenever there is a pending operation, 
some operation returns in a finite number of steps of all processes.
Finally, an implementation is \emph{wait-free} if, whenever there is 
a pending operation by process $p$, 
this operation returns in a finite number of steps by $p$.

\paragraph{Auditable objects}

An auditable register supports, in addition to the standard 
$\rd{}$ and $\wrt{}$ operations, 
also an $\adt{}$ operation that reports which values were read by each process. 
Formally, an \adt{} has no parameters and it returns a set of pairs, 
$(j,v)$, where $j$ is a process id, and $v$ is a value of the register.
A pair $(j,v)$ indicates that process $p_j$ has read the value $v$.

Formally, the sequential specification of an auditable register enforces, 
in addition to the requirement on \rd{} and \wrt{} operations, 
that a pair appears in the set returned by an \adt{} operation if and only if 
it corresponds to a preceding \rd{} operation.
In prior work~\cite{AttiyaPMPR23}, 
this \emph{if and only if} property was stated as a 
combination of two properties of the sequential execution: 
\emph{accuracy}, if a \rd{} is in the response set of the \adt{},
then the \rd{} is before the \adt{} (the \emph{only if} part), 
and \emph{completeness}, any \rd{} before the \adt{} is in its response set 
(the \emph{if} part).

We wish to capture in a precise, 
implementation-agnostic manner, 
the notion of an \emph{effective operation},
which we will use to ensure that an \adt{} operation will report 
all \emph{effective} operations. 
Assume an algorithm $\m{A}$ that implements an object $T$.
The next definition characterizes, 
in an execution in which a process $p$ invokes an operation, 
a point at which $p$ knows the value that the operation returns, 
even if the response event is not present.

\begin{definition}[effective operation]
\label{def:non_effective}
An operation $op$ on object $T$ by process $p$ is 
$v$-effective after a finite execution prefix $\alpha$ if, 
for every execution prefix $\beta$ indistinguishable from $\alpha$ to $p$ 
(i.e., such that $\ind{\alpha}{p}{\beta}$), $op$ returns $v$ in every extension $\beta'$ of $\beta$ in which $op$ completes.
\end{definition}

Observe that in this definition, 
$\alpha$ itself is also trivially an execution prefix indistinguishable to $p$, 
and hence in any extension $\alpha'$ in which $op$ completes returns value $v$. 
Observe as well that $op$ could already be completed in $\alpha$ or not be invoked (yet). 
However, the most interesting case is when $op$ is pending in $\alpha$.


We next define the property that an operation on $T$ 
is not compromised in an execution prefix by a process. 
As we will see, in our register algorithm, 
a \rd{} by $p$  is linearized as soon as it becomes $v$-effective, 
in a such way that in any extension including a complete \adt{}, 
$p$ is reported as a reader of $v$ by this \adt{}. 
This, however, does not prevent a curious reader $p$ from learning another 
value $v'$ for which none of its \rd{} operations is $v'$-effective. 
In such a situation, the \wrt{} operation with input $v'$ 
is said to be \emph{compromised} by $p$.
The next definition states that this can happen only if 
a \rd{} operation by $p$ becomes $v'$-effective. 
The definition is general, and applies to any object.

\begin{definition}[uncompromised operation]
\label{def:uncompromised_op}
Consider a finite execution prefix $\alpha$ and an operation $op$ 
by process $q$ whose invocation is in $\alpha$.
We say that $op$ is \emph{uncompromised in $\alpha$ by process $p$} if there is 
another finite execution $\beta$ such that $\ind{\alpha}{p}{\beta}$ 
and $op$ is not invoked in $\beta$.
\end{definition}

\ha{A value $v$ is \emph{uncompromised by a reader $p$} if all \wrt{}($v$) operations are uncompromised by $p$, 
unless $p$ has an effective \rd{} returning $v$.}



\paragraph{One-time pads}

To avoid data leakage, we employ \emph{one-time pads} \cite{miller2024telegraphic,vernam}.
Essentially, a one-time pad is a random string---known only to 
the writers and auditors---with a bit for each reader. 
To encrypt a message $m$, $m$ is bitwise XORed with the pad obtaining a ciphertext $c$.
Our algorithm relies on an infinite sequence of one-time pads.
A one-time pad is \emph{additively malleable}, i.e., 
when $f$ is an additive function, it is possible 
to obtain a valid encryption of $f(m)$ by applying a corresponding 
function $f'$ to the ciphertext $c$ corresponding to $m$.

\paragraph{Attacks}
We consider an \af{honest-but-curious (aka, semi-honest and passive) \cite{DBLP:journals/dc/Goldreich03}} \ha{
attacker 
that} interacts with the 
implementation of $T$ by performing operations, and adheres to its code. 
It may however stop prematurely and perform arbitrary local computations 
on the responses obtained from base objects. 
For instance, for an auditable register, 
the attacker can attempt to infer in a \rd{} operation the current 
or a past value of the register, without being reported in \adt{} operations.

\section{An Auditable Multi-writer, Multi-reader Register}
\label{sec:register}

We present a wait-free and linearizable implementation of 
a multi-writer, multi-reader register (Alg.~\ref{alg:cas_mwa}), 
in which effective reads are auditable. 
Furthermore, the implementation does not compromise other reads, 
as while performing a read operation, a process is neither able to learn previous values, 
nor whether some other process has read the current value. 
We ensure that a read operation is linearized as soon as, 
and not before it becomes effective. 
Audits hence report exactly those reads that have made enough progress 
to infer the current value of the register. 
As a consequence, the implementation is immune to \af{an
honest-but-curious} attacker.


\subsection{Description of the Algorithm}
\label{subsec:register_descrption}

The basic idea of the implementation is to store in a single register $R$,
the current value and a sequence number, 
as well as the set of its readers, encoded as a bitset. 
Past values, as well as their reader set, 
are stored in other registers (arrays $V$ and $B$ in the code, 
indexed by sequence numbers), so auditors can retrieve them. 
Changing the current value from $v$ to $w$ consists in first copying $v$ and 
its reader set to the appropriate registers $V[s]$ and $B[s]$,
respectively  (where $s$ is $v$'s sequence number), 
before updating $R$ to a triple formed by $w$, a new sequence number,
and an empty reader set. 
This is done with a \cas{} in order not to miss changes to the reader set 
occurring between the copy and the update.  
An auditor starts by reading $R$, obtaining the current value $w$, 
its set of readers, and its sequence number $s$. 
Then it goes over arrays $B$ and $V$ to retrieve previous values 
written and the processes that have read them. 

In an initial design of the implementation, a \rd{} operation obtains from $R$ the current value $v$ and the reader set, 
adding locally the ID of the  reader to this set before writing it back to $R$, 
using \cas{}. 
This simple design is easy to linearize (each operation is linearized with 
a \cas{} or a $\mathsf{read}$ applied to $R$). 
However, besides the fact that \rd{} and \wrt{} are only lock-free, 
this design has two drawbacks regarding information leaking:

\emph{First}, a reader can read the current value without being reported by \adt{} operations,
simply by not writing to the memory after reading $R$, 
when it already knows the current value $v$ of the register. 
This step does not modify the state of $R$ (nor of any other shared variables), 
and it thus cannot be detected by any other operation. 
Therefore, by following its code, but pretending to stop immediately after accessing $R$, 
a reader is able to know the current value without ever being reported by \adt{} operations. 

\emph{Second}, each time  $R$ is read by some process $p$, 
it learns which readers have already read the current value. 
Namely, while performing a \rd{} operation, 
a process can compromise other reads. 

Alg.~\ref{alg:cas_mwa} presents the proposed implementation of an auditable register.
We deflect the ``crash-simulating'' attack by having each \rd{} operation apply at most one primitive to $R$ that atomically returns the content of $R$ and updates the reader set.  To avoid partial auditing, the reader set is encrypted, while still permitting insertion by modifying the encrypted set (i.e., a light form of homomorphic encryption.).  Inserting the reader ID into the encrypted set should be kept simple, as it is part of an atomic modification of $R$. We apply to the reader set a simple cipher (the one-time pad~\cite{miller2024telegraphic,vernam}), 
and benefit from its additive malleability. 
Specifically, the IDs of the readers of the current value are tracked by the last $m$ bits of $R$, where $m$ is the number of readers. When a new value with sequence number $s$ is written in $R$, these bits are set to a random $m$-bit string, $\mathit{rand}_s$, only known by writers and auditors. This corresponds to encrypting the empty set with a random mask. 
Process $p_i$ is inserted in  the set by XORing the $i$th tracking bit with $1$. Therefore, retrieving the value stored in $R$ and updating the reader set can be done atomically by applying \fx{}. Determining set-membership requires the mask $\mathit{rand}_s$, known only to auditors and writers. 

The one-time pad, as its name indicates, is secure as long as each mask is used at most once. 
This means we need to make sure that different sets encrypted with the same mask $\mathit{rand}_s$ 
are never observed by a particular reader, otherwise,
the reader may infer some set member by XORing the two ciphered sets.
To ensure that, we introduce an additional register $\emph{SN}$, 
which stores only the sequence number of the current value. 
A \rd{} operation by process $p_i$ starts by reading $\emph{SN}$, 
and, if it has not changed since the previous \rd{} by the same process, 
immediately returns the latest value read. 
Otherwise, $p_i$ obtains the current value $v$ and records itself as one of its readers 
by applying a \fx($2^i$) operation to $R$. 
This changes the $i$th tracking bit, leaving the rest of $R$ intact. 
Finally, $p_i$ updates $\emph{SN}$ to the current sequence number read from $R$, 
thus ensuring that $p_i$ will not read $R$ again, unless its sequence number field is changed.  
This is done with a \cas{} to avoid writing an old sequence number in $SN$.

Writing a new value $w$ requires retrieving and storing the IDs of the readers of the current 
value $v$ for future \adt{}, writing $w$, the new sequence number $s+1$, and an empty reader set encrypted with a fresh mask $\mathit{rand}_{s+1}$ to $R$ before announcing the new sequence number in $\mathit{SN}$. To that end, $p_j$ first locally gets a new sequence number
$s+1$, where $s$ is read from $SN$. It then repeatedly reads $R$, deciphers the tracking bits and updates shared registers $V[s]$ and $B[s]$ accordingly until it succeeds in changing it to $(s+1,w,\mathit{rand}_{s+1})$ or it discovers a sequence number $s' \geq s+1$ in $R$. In the latter case, a concurrent \wrt{}($w'$) has succeeded, and may be seen as occurring immediately after $p_j$'s operation, which therefore can be abandoned. In the absence of a concurrent \wrt{}, the \cas{} applied to $R$ may fail as  the tracking bits are modified by a concurrent \rd{}. This happens at most $m$ times, as each reader applies at most one \fx{} to $R$ while its sequence number field does not change. Whether or not $p_j$ succeeds in modifying $R$, we make sure that before \wrt{}($w$) terminates, the sequence number $SN$ is at least as large as the new sequence number  $s+1$. In this way, after that, \wrt{} operations overwrite the new value $w$ and \rd{} operations return $w$ or a more recent value.

Because $\mathit{SN}$ and $R$ are not  updated atomically, their sequence number fields may differ. 
In fact, an execution of Alg.~\ref{alg:cas_mwa} alternates between 
\emph{normal} $E$ phases, in which both sequence numbers are \emph{equal}, 
and \emph{transition} $D$ phases in which they \emph{differ}. 
A transition phase is triggered by a \wrt{}($w$) with sequence number $s$ and 
ends when the \wrt{} completes or it is helped to complete by updating  $SN$ to $s$. 
Care must be taken during a $D$ phase, as some \rd{}, which is \emph{silent}, 
may return the old value $v$, 
while another, \emph{direct}, \rd{} returns the value $w$ being written. 
For linearization, we push back silent \rd{} before the \cas{} applied to $R$ 
that marks the beginning of phase $D$, 
while a direct \rd{} is linearized with its \fx{} applied to $R$.

An \adt{}  starts by reading $R$, thus obtaining the current value $v$, and its sequence number $s$;
it is linearized with this step. 
It then returns the set of readers for  $v$ (inferred from the tracking bits read from $R$) 
as well as  for each previously written value  (which can be found in the registers $V[s']$ and $B[s']$, for $s'<s$.). 
In a $D$ phase, a silent \rd{} operation may start after an \adt{} reads $R$ while being linearized before this step, 
so we make sure that the $D$ phase ends before  the \adt{} returns. 
This is done, as in direct \rd{} and \wrt{}, by making sure that $\mathit{SN}$ is at least as large 
as the sequence number $s$ read from $R$. 
In this way, a silent \rd{} (this  also holds for a \wrt{} that is immediately overwritten)  
whose linearization point is pushed back before that of an \adt{} is concurrent with this \adt{}, 
ensuring that the linearization order respects the real time order between these operations.

Suppose that an \adt{} by some process $p_i$ reports $p_j$ as a reader of some value $v$.  
This happens because $p_i$  directly identifies $p_j$ as a reader of $v$ from the tracking bits in $R$, 
or indirectly by reading the registers $V[s]$ and $B[s]$, where $V[s] = v$. 
In both cases, in a \rd{} instance $op$, reader $p_j$ has previously applied a \fx{} to $R$ while its value field is $v$. 
Since the response of this \fx{} operation completely determines the return value of $op$, 
independently of future or past steps taken by $p_j$, 
$op$ is effective. 
Therefore, only effective operations are reported by \adt{}, 
and if an \adt{} that starts after $op$ is effective, 
it will discover that $p_j$ read $v$, 
again either directly in the tracking bits of $R$, or indirectly 
after the reader set has been copied to $B[s]$.

\begin{algorithm}[tbp]
  \begin{algorithmic}[1]
    \Statex \textbf{shared registers:}
    \Statex\hspace{\algorithmicindent} $\mathit{R}$: a register supporting $\mathsf{read}$, \cas{}, and \Statex\hspace{\algorithmicindent}\hspace{\algorithmicindent} \fx{}, initially $(0,v_0,\mathit{rand}_0)$
    \Statex \Comment{store a triple (sequence number, value, $m$-bits string)}
    \Statex\hspace{\algorithmicindent} $\mathit{SN}$: a register supporting $\mathsf{read}$ and \cas, 
    \Statex\hspace{\algorithmicindent}\hspace{\algorithmicindent} initially $0$ 
    \Statex\hspace{\algorithmicindent} $V[0..+\infty]$ 
    registers, initially $[\bot,\ldots,\bot]$
    \Statex\hspace{\algorithmicindent} $B[0..+\infty][0..m-1]$ 
    Boolean registers, initially, 
    \Statex\hspace{\algorithmicindent}\hspace{\algorithmicindent} $B[s,j] = \emph{false}$ for every $(s,j) : s  \geq 0, 0 \leq j < m$.  
    
    \Statex \textbf{local variables: reader}
    \Statex\hspace{\algorithmicindent} $prev\_val, prev\_sn$: latest value read ($\bot$ initially) 
    \Statex\hspace{\algorithmicindent}\hspace{\algorithmicindent} and its sequence number ($-1$ initially)
    \Statex \textbf{local variables common to writers and auditors}
    \Statex\hspace{\algorithmicindent} $\mathit{rand}_0, \mathit{rand}_1, \ldots$: 
    sequence of random $m$-bit strings 
    \Statex \textbf{local variables: auditor}
    \Statex\hspace{\algorithmicindent}  $A$: audit set, initially $\emptyset$; \Statex\hspace{\algorithmicindent}  $lsa$: latest ``audited'' seq. number, initially $0$

    \Function{read}{$\,$} \Comment{code for reader $p_j$, $0 \leq j < m$}
    \State $sn \gets \mathit{SN}.\mathsf{read}()$ \label{line:mwmrma:read_sn}
    \State \textbf{if} $sn = prev\_sn$ \textbf{then} \Return{$prev\_val$} \label{line:mwmrma:no_new_write}\label{line:mwmrma:return_prev_val} 
    \Statex\Comment{no new write since latest \rd{} operation}
    \State  $(sn, val, \_) \gets R.\fx(2^j)$ \label{line:mwmrma:read_fetch} 
    \Statex\Comment{fetch current value and insert $j$ in reader set}
    \State $SN.\cas(sn-1, sn)$\label{line:mwmrma:reader_cas}  
  \Comment{help complete $sn$th \wrt{}}
    \State $prev\_sn \gets sn$; $prev\_val \gets val$; \Return{$val$} \label{line:mwmrma:return_val}
    \EndFunction

    \Function{write}{$v$} \Comment{code for writer $p_i, i \notin \{0,\ldots,m-1\}$}
    \State $sn \gets \mathit{SN}.\mathsf{read}()+1$ \label{line:mwmrma:writer_read_sn}
    \Repeat\label{line:mwmrma:writer_repeat}
    \State $(lsn,lval, bits) \gets R.\mathsf{read}()$ \label{line:mwmrma:writer_read_R}
    \State \textbf{if} $lsn \geq sn$ \textbf{then} \textbf{break} \label{line:mwmrma:writer_break}
    \State $V[lsn].\mathsf{write}(lval)$; 
    \State \textbf{for each} $j : bits[j] \neq \mathit{rand}_{lsn}[j]$ \textbf{do} \Statex\hspace{\algorithmicindent} $B[lsn][j].\mathsf{write}(true)$ \label{line:mwmrma:writer_record_reads}
    \Until{$R.\cas((lsn,\mathit{lval},bits),(sn,v,\mathit{rand}_{sn}))$}
    \label{line:mwmrma:writer_cas}
    \State $SN.\cas(sn-1,sn)$; \Return{} \label{line:mwmrma:writer_return}
\EndFunction

\Function{audit}{$\,$}
\State $(rsn,rval,rbits) \gets R.\mathsf{read}()$  \label{line:mwmrma:audit_read}
\For{$s = lsa,lsa+1,\ldots,rsn-1$}\label{line:mwmrma:audit_for}
\State $val \gets V[s].\mathsf{read}()$;
\State $A \gets A \cup \{(j,val):  0 \leq j < m ,  B[s][j].\mathsf{read}() = \emph{true}\}$\label{line:mwmrma:audit_definitive} 
\EndFor
\State $A \gets A \cup \{(j,rval) : 0 \leq j < m, bits[j] \neq \mathit{rand}_{rsn}[j]\}$\label{line:mwmrma:audit_non_definitive}
\State $lsa \gets sn$; $SN.\mathsf{\cas}(rsn-1,rsn)$;  \Return{$A$}\label{line:mwmrma:audit_return}
\EndFunction

\end{algorithmic}
\caption{Multi-writer, $m$-reader auditable register implementation
}
\label{alg:cas_mwa}
\end{algorithm}

\subsection{Proof of Correctness}
\label{subsec:register_proof}

\paragraph{Partitioning into phases}

We denote by $R.seq, R.val$ and $R.bits$ the sequence number, 
value and $m$-bits string, respectively, stored in $R$.
We start by observing that the pair of values in $(R.seq,SN)$ 
takes on the following sequence: $(0,0), (1,0), (1,1), \ldots, (x,x-1), (x,x), \ldots$ 
Indeed, when the state of the implemented register changes to a new value $v$, 
this value is written to $R$ 
together with a sequence number $x+1$, 
where $x$ is the current value of $SN$. 
$SN$ is then updated to $x+1$, and so on.


Initially, $(R.seq,SN) = (0,0)$. 
By invariants that can be proved on the algorithm, 
the successive values of $R.seq$ and $SN$ are $0,1,2,\ldots$,
$SN \geq x-1$ when $R.seq$ is changed to $x$, 
and when $SN$ is changed to $x$, $R.seq$ has previously been updated to $x$. 
Therefore, the sequence of successive values of the pair $(R.seq,SN)$ 
is $(0,0), (1,0), (1,1), \ldots,$ $(x,x-1),(x,x), \ldots$. 
We can therefore partition any execution into intervals 
$E_x$ and $D_x$ (for $E$qual and $D$ifferent), so that $R.seq =x$ and $SN=x$ during $E_x$,
and $R.seq =x$ and $SN=x-1$ during $D_x$:


\begin{lemma}
\label{lem:epoch}
A finite execution $\alpha$ can be written, for an integer $k \geq 0$, 
either as $E_0 \rho_1 D_1 \sigma_1 E_1 \ldots \rho_k D_k \sigma_k E_k$ 
or as $E_0 \rho_1 D_1 \sigma_1 E_1 \ldots$ $\sigma_{k-1}$ $E_{k-1} \rho_k D_k$,
where:
  \begin{itemize}
  \item $\rho_\ell$ and $\sigma_\ell$ are the steps that respectively 
  change the value of $R.seq$ and $SN$  from $\ell-1$ to $\ell$ 
  ($\rho_\ell$ is a successful $R.\cas{}$, line~\ref{line:mwmrma:writer_cas}, 
  $\sigma_\ell$ is also a successful $\mathit{SN}.\cas{}$, applied within a \rd{}, line~\ref{line:mwmrma:reader_cas}, 
  a \wrt{}, line~\ref{line:mwmrma:writer_return}, or an \adt{}, line~\ref{line:mwmrma:audit_return}).
  \item in any configuration in $E_\ell$, $R.seq = SN = \ell$,  and in any configuration in $D_\ell$, $R.seq = \ell = SN +1$. 
  \end{itemize}
\end{lemma}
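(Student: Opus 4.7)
The plan is to prove the statement by induction on the number of events in the execution, maintaining as the inductive invariant a strengthened version of the lemma: after every event, the execution so far can be written in the alternating form claimed, and the current configuration satisfies either $R.seq = SN = \ell$ (in the $E_\ell$ phase) or $R.seq = \ell = SN+1$ (in the $D_\ell$ phase), for a unique $\ell$ that is determined by the events seen so far. The base case is immediate since initially $R = (0, v_0, \mathit{rand}_0)$ and $SN = 0$, so we sit in $E_0$.

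Before handling the inductive step, I would isolate two easy monotonicity facts. First, $R.seq$ can be modified only by a successful \cas{} at line~\ref{line:mwmrma:writer_cas}; since a writer breaks out of the loop when $lsn \geq sn$, the CAS, whenever it is attempted, moves $R.seq$ from $lsn$ to $sn > lsn$. Second, $SN$ is modified only by the three $SN.\cas(x-1,x)$ calls at lines~\ref{line:mwmrma:reader_cas}, \ref{line:mwmrma:writer_return}, \ref{line:mwmrma:audit_return}, each of which strictly increases $SN$ by one. Combined with the fact that every writer's local $sn$ equals $SN.\mathsf{read}()+1$ at line~\ref{line:mwmrma:writer_read_sn}, the monotonicity of $SN$ yields $SN \geq sn - 1$ at every subsequent moment in that writer's execution.

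The inductive step is a case analysis on the new event. Local steps, reads, and \fx{} operations (which touch only the bits field of $R$) leave $R.seq$ and $SN$ unchanged, so the invariant trivially carries over. The nontrivial cases are the two kinds of shared-memory updates. If the event is a successful \cas{} on $R$ at line~\ref{line:mwmrma:writer_cas}, then by the induction hypothesis we are in $E_\ell$ or $D_\ell$ with $R.seq = \ell$; matching $R.seq = lsn$ forces $\ell = lsn$, and the inequalities $sn - 1 \leq SN$ (monotonicity) and $lsn < sn$ (loop condition) combined with $R.seq \in \{SN, SN+1\}$ give $lsn = sn - 1$ and $SN = sn - 1$, which rules out $D_{sn-1}$ and places us in $E_{sn-1}$. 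The CAS then bumps $R.seq$ to $sn$, producing the configuration of $D_{sn}$ as required. If the event is a successful \cas{} on $SN$ of the form $SN.\cas(x-1, x)$, a symmetric argument shows we must currently be in $D_x$ (since $R.seq \geq x$ at the moment, as guaranteed in each of the three call sites, conflicts with $R.seq = x-1$ that would hold in $E_{x-1}$), and the CAS moves the configuration from $D_x$ to $E_x$.

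The main obstacle is the mild circularity between the writer's ``$sn = SN.\mathsf{read}()+1$ at line~\ref{line:mwmrma:writer_read_sn}'' and the invariant $R.seq \in \{SN, SN+1\}$: to argue $lsn = sn - 1$ at the successful \cas{}, I need to use the invariant throughout the interval between the writer's read of $SN$ and its CAS on $R$, not merely at the present instant. I will handle this by invoking the inductive hypothesis pointwise at every intermediate configuration together with the monotonicity facts above, rather than trying to track a writer's whole loop as a single atomic action. Once these facts are in place, identifying $\rho_\ell$ as the successful $R.\cas{}$ that creates $D_\ell$ and $\sigma_\ell$ as the successful $SN.\cas{}$ that closes it yields the claimed decomposition directly from the inductive sequence of phases.
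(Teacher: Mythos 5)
Your proof is correct and follows essentially the same route as the paper: the paper also proceeds by induction on the execution length, first establishing that $SN$ and $R.seq$ take successive values $0,1,2,\ldots$ and that $SN \leq R.seq \leq SN+1$ (its Invariants on monotonicity and Lemma~\ref{lem:SN_R}, proved by exactly your case analysis on which step modifies $R.seq$ or $SN$, using the prior read of the other register plus monotonicity), and then reads off the $E_\ell/D_\ell$ decomposition. Your version merely folds these auxiliary invariants into a single induction maintaining the phase structure directly, which is a cosmetic repackaging rather than a different argument.
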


\paragraph{Termination}

It is clear that \adt{} and \rd{} operations are wait-free. 
We prove that \wrt{} operations are also wait-free, by showing that 
the repeat loop (lines~\ref{line:mwmrma:writer_repeat}-\ref{line:mwmrma:writer_cas}) 
terminates after at most $m+1$ iterations. 
This holds since each reader may change $R$ at most once 
(by applying a $R.\fx{}$, line~\ref{line:mwmrma:read_fetch}) 
while $R.seq$ remains the same.

\begin{restatable}{lemma}{WaitFree}
  \label{lem:cas_mwa_wf}
  Every operation terminates within a finite number of its own steps. 
\end{restatable}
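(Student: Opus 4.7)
The plan is to handle the three operation types separately. A \rd{} call executes a bounded, loop-free block (at most three primitive operations) and therefore returns within $O(1)$ steps of its invoker. For \adt{}, the only loop (line~\ref{line:mwmrma:audit_for}) ranges from $lsa$ to $rsn-1$, where $rsn$ is the sequence number read from $R$ on line~\ref{line:mwmrma:audit_read}; by Lemma~\ref{lem:epoch}, $R.seq$ takes only finitely many values in any finite execution prefix, so this loop executes a bounded number of iterations, each performing $O(m)$ primitive operations on $V$ and $B$.

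The substantive case is \wrt{}, where the goal is to bound the number of iterations of the \textbf{repeat} loop on lines~\ref{line:mwmrma:writer_repeat}--\ref{line:mwmrma:writer_cas} by $m+1$. Let $sn := SN_0 + 1$ be the target sequence number set at line~\ref{line:mwmrma:writer_read_sn}, where $SN_0$ is the value of $SN$ read there. By Lemma~\ref{lem:epoch}, $R.seq \in \{SN, SN+1\}$ holds always and both $SN$ and $R.seq$ are monotone non-decreasing; hence at the time of line~\ref{line:mwmrma:writer_read_sn}, $R.seq \leq sn$, and any iteration that does not \textbf{break} on line~\ref{line:mwmrma:writer_break} must read $lsn = sn-1$ and attempt the transition $R.seq : sn-1 \mapsto sn$ on line~\ref{line:mwmrma:writer_cas}. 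A failed \cas{} means $R$ was modified between lines~\ref{line:mwmrma:writer_read_R} and~\ref{line:mwmrma:writer_cas}, which can only be caused either by a reader's $R.\fx{}$ on line~\ref{line:mwmrma:read_fetch} (leaving $R.seq$ unchanged) or by a concurrent writer's successful \cas{} (raising $R.seq$ to $sn$, after which our next iteration breaks). The bound then follows once we establish the sub-claim: for any fixed $s$, each of the $m$ readers performs at most one $R.\fx{}$ on line~\ref{line:mwmrma:read_fetch} while $R.seq = s$.

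I would prove the sub-claim by tracking reader $p_j$'s local variable $prev\_sn$. After $p_j$'s $R.\fx{}$ returns a triple whose first component is $s$, line~\ref{line:mwmrma:reader_cas} together with monotonicity of $SN$ and Lemma~\ref{lem:epoch} guarantees $SN \geq s$ in every subsequent configuration, and line~\ref{line:mwmrma:return_val} sets $prev\_sn \gets s$. At any later \rd{} invocation by $p_j$, line~\ref{line:mwmrma:read_sn} returns some $sn' \geq s$; if $sn' = s$ the call returns immediately on line~\ref{line:mwmrma:no_new_write} without another $R.\fx{}$, and if $sn' > s$ then the invariant $R.seq \geq SN$ of Lemma~\ref{lem:epoch} forces the subsequent $R.\fx{}$ to see $R.seq \geq sn' > s$. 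Combining this with the writer analysis yields at most $m$ reader-caused \cas{} failures, plus at most one additional iteration triggered by a concurrent writer's success, for a total of at most $m + 1$ iterations.

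The main obstacle is making the reader sub-claim fully rigorous. The subtlety is that $R.\fx{}$, the $SN.\cas{}$ on line~\ref{line:mwmrma:reader_cas}, and the local update of $prev\_sn$ on line~\ref{line:mwmrma:return_val} are three separate steps that can be arbitrarily interleaved with actions of other processes; the argument must therefore lean on the global invariants of Lemma~\ref{lem:epoch} (monotonicity together with the pairing $R.seq \in \{SN, SN+1\}$) to conclude that once $p_j$ has fired $R.\fx{}$ and observed $R.seq = s$, no later $R.\fx{}$ by $p_j$ can still see $R.seq = s$. Everything else (the counting argument for \wrt{}'s loop and the termination of \rd{} and \adt{}) is routine once this sub-claim and Lemma~\ref{lem:epoch} are in place.
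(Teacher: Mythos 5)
Your proposal is correct and follows essentially the same route as the paper: \rd{} and \adt{} terminate trivially, and the \wrt{} loop is bounded by showing that every \cas{} failure not accompanied by a change of $R.seq$ must be caused by a reader's \fx{}, of which each reader contributes at most one per value of $R.seq$ --- your sub-claim is exactly the paper's Lemma~\ref{lem:reader_once}, proved the same way via $prev\_sn$ and the $SN$/$R.seq$ invariants. (The exact constant $m+1$ versus $m+2$ when a concurrent writer also interferes is immaterial to wait-freedom, and the paper's own prose makes the same count.)
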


\begin{proof}[Proof sketch]
  The lemma clearly holds for \rd{} and \adt{} operations.
  Let $wop$ be a \wrt{} operation, and assume, towards a contradiction, that it does not terminate. Let $sn = x+1$ be the sequence number obtained at the beginning of $wop$ at line~\ref{line:mwmrma:writer_read_sn}, where $x$ is the value read from $SN$.  We denote by $(sr,vr,br)$  the triple read from $R$ in the first iteration of the repeat loop. 
  It can be shown that $x \leq sr$.  As $sr < sn = x+1$ (otherwise the loop breaks in the first iteration at line~\ref{line:mwmrma:writer_break}, and the operation terminates),  we have $sr = x$.

  As $wop$ does not terminate, in particular the \cas{} applied to $R$ at the end of the first iteration fails. Let $(sr',vr',br')$ be the value of $R$ immediately before this step is applied. 
  This can be used to show that if $sr' \neq sr$ or $vr' \neq vr$, 
  then $sr' > sr$. Therefore, $wop$ terminates in the next iteration as the sequence number read from $R$ in that iteration is greater than or equal to $sn$ (line~\ref{line:mwmrma:writer_break}). It thus follows that $sr = sr'$, $vr = vr'$, and  $br \neq br'$: at least one reader applies a \fx{} to $R$ during the first iteration of repeat loop. 

  The same reasoning applies to the next iterations of the repeat loop. In each of them, the sequence number and the value stored in  $R$ are the same, $sr$ and $vr$ respectively (otherwise the loop would break at line~\ref{line:mwmrma:writer_break}), and thus a reader applies a \fx{} to $R$ before the \cas{} of line~\ref{line:mwmrma:writer_cas} (otherwise the \cas{} succeeds and $wop$ terminates). But 
  it can be shown that each reader applies at most one \fx{} to $R$ while it holds the same sequence number, which is a contradiction.
\end{proof}

\paragraph{Linearizability}

Let $\alpha$ be a finite execution, and $H$ be the history of the \rd{}, \wrt{}, and \adt{} operations in $\alpha$.
We classify and associate a sequence number with some of \rd{} and \wrt{} operations in $H$ 
as explained next.
Some operations that did not terminate are not classified,
and they will later be discarded.
\begin{itemize}
\item A \rd{} operation $op$ is \emph{silent} if it reads $x = prev\_sn$ at line~\ref{line:mwmrma:read_sn}. 
  The sequence number $sn(op)$ associated with a \emph{silent} \rd{} operation $op$ is the value $x$ returned by the read from $\mathit{SN}$.
  Otherwise, if $op$ applies a \fx{} to $R$, it is said to be \emph{direct}. Its sequence number $sn(op)$ is the one fetched from $R$ (line~\ref{line:mwmrma:read_fetch}). 
\item A \wrt{} operation $op$ is \emph{visible} if it applies a successful \cas{} to $R$ (line~\ref{line:mwmrma:writer_cas}). Otherwise, if $op$ terminates without applying a successful \cas{} on $R$ (by exiting the repeat loop from the break statement, line~\ref{line:mwmrma:writer_break}), it is said to be \emph{silent}. For both cases, the sequence number $sn(op)$ associated with $op$  is $x+1$, where $x$ is the value read from $SN$ at the beginning of $op$ (line~\ref{line:mwmrma:writer_read_sn}). 
\end{itemize}
Note that all terminated \rd{} or \wrt{} operations are classified as silent, direct, or visible. 
An \adt{} operation $op$ is associated with the sequence number read from $R$ at line~\ref{line:mwmrma:audit_read}. 

We define a complete  history $H'$ by removing or completing the operations that do not terminate in $\alpha$, as follows:
Among the operations that do not terminate, we remove every \adt{} and every unclassified \rd{} or \wrt{}. 
For a silent \rd{} that does not terminate in $\alpha$, 
we add a response immediately after $\emph{SN}$ is read at line \ref{line:mwmrma:read_sn}. 
The value returned is $prev\_val$, that is the value returned by the previous \rd{} by the same process. 
For each direct \rd{} operation $op$ that does not terminate in $\alpha$, 
we add a response with value $v$ defined as follows. 
Since $op$ is direct, it applies a \fx{} on $R$ that returns a triple $(sr,vr,br)$; $v$ is the value $vr$ in that triple. In $H'$, we place the response of non-terminating direct \rd{} and visible \wrt{} after every response and every remaining invocation of $H$, in an arbitrary order. 

Finally, to simplify the proof, we add at the beginning of $H'$ an invocation immediately followed by a response of a \wrt{} operation with input $v_0$ 
(the initial value of the auditable register.). 
This fictitious operation has sequence number $0$ and is visible.

Essentially, in the implemented register updating to a new value $v$  is done in two phases. $R$ is first modified to store $v$ and a fresh sequence number $x+1$, and then the new sequence number is announced in $SN$. Visible \wrt{}, direct \rd{}, and \adt{} operations may be linearized with respect to the
\cas{}, \fx{} or $\mathsf{read}$ they apply to $R$. 
Special care should be taken for silent \rd{} and \wrt{} operations. Indeed, a silent \rd{} that reads $x$ from $SN$, may return  the previous value $u$  stored in the implemented register or $v$, depending on the sequence number of the last preceding direct \rd{} by the same process. Similarly, a silent \wrt{}($v'$) may not access $R$ at all, or apply a \cas{} after $R.seq$ has already been changed to $x+1$. However,  \wrt{}($v'$) has to be linearized before  \wrt{}($v$), in such a way that $v'$ is immediately overwritten. 

Hence, direct \rd{}, visible \wrt{}, and \adt{} 
are linearized first, according to the order in which they apply a primitive to $R$. We then place the remaining operations with respect to this partial linearization.  $L(\alpha)$ is  the total order on the operations in $H'$ obtained by the following rules:
\begin{enumerate}
\item[R1] For direct \rd{}, visible \wrt{},  \adt{} and some silent \rd{} operations we defined an associated step $ls$ applied by the operation. These operations are then ordered according to the order in which their associated step takes place in $\alpha$. For a direct \rd{}, visible \wrt{}, or  \adt{} operation $op$, its associated step $ls(op)$ is respectively  the \fx{} at line~\ref{line:mwmrma:read_fetch}, the successful \cas{} at  line~\ref{line:mwmrma:writer_cas}, and the $\mathsf{read}$ at line~\ref{line:mwmrma:audit_read} applied to $R$. For a silent \rd{} operation $op$ with sequence number $sn(op) = x$, if $SN.\mathsf{read}$ (line~\ref{line:mwmrma:read_sn}) is applied in $op$ during  $E_x$ (that is, $R.seq = x$ when this read occurs), $ls(op)$ is this $\mathsf{read}$ step. The other silent \rd{} operations do not have a linearization step, and are not ordered by this rule. They are instead linearized by Rule R2.
\end{enumerate}
Recall that $\rho_{x+1}$ is the successful \cas{} applied to $R$ that changes $R.seq$ from $x$ to $x+1$ (Lemma~\ref{lem:epoch}). 
By rule R1, the visible \wrt{} with sequence number $x+1$ is linearized at $\rho_{x+1}$.  
\begin{enumerate}
\item[R2] For every $x \geq 0$,  every remaining silent \rd{} 
  $op$ with sequence number $sn(op) = x$ is placed immediately before the unique visible \wrt{} operation with sequence number $x+1$. Their relative order follows  the order in which their $\mathsf{read}$ step of $SN$ (line~\ref{line:mwmrma:read_sn}) 
  is applied in $\alpha$.
\item[R3] Finally, we place for each $x \geq 0$ every silent \wrt{} operation $op$ with sequence number $sn(op) =  x+1$.  They are placed after the silent \rd{} operations with sequence number $x$ ordered according to rule R2, and before the unique  visible \wrt{} operation with sequence number $x+1$. As above, their respective order is determined by the order in which  their $\mathsf{read}$ step of $SN$ (line~\ref{line:mwmrma:writer_read_sn}) is applied in $\alpha$. 
\end{enumerate}

Rules R2 and R3 are well-defined, 
is we can prove the existence and uniqueness 
of a visible \wrt{} with sequence number $x$, 
if there is an operation $op$ with $sn(op) = x$. 

We can show that the linearization $L(\alpha)$ extends the real-time order 
between operations, 
and that the \rd{} and \wrt{} operations satisfy the sequential 
specification of a register. 

\paragraph{Audit Properties}
For the rest of the proof, fix a finite execution $\alpha$.
The next lemma helps to show that effective operations are audited;
it demonstrates how indistinguishability is used in our proofs.

\begin{lemma}
\label{lem:eff_read_lin}
A \rd{} operation $rop$ that is invoked in $\alpha$ is in $L(\alpha)$ 
if and only if $rop$ is effective in $\alpha$. 
\end{lemma}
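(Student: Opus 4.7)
The plan is to prove both directions by case analysis on the classification of $rop$ used in the linearization construction. Recall from the definition of $H'$ and rules R1--R3 that every \rd{} appearing in $L(\alpha)$ is either \emph{silent} or \emph{direct}, possibly with a synthesized response for a non-terminating instance. For the forward direction, if $rop$ is silent then $p$ has already executed line~\ref{line:mwmrma:read_sn} obtaining $sn = prev\_sn$, so the local variable $prev\_val$ is already the value that the code will return at line~\ref{line:mwmrma:return_prev_val}. Since for every $\beta$ with $\ind{\alpha}{p}{\beta}$ the local state of $p$ (and in particular $prev\_val$) is identical, every extension of $\beta$ in which $rop$ completes returns $prev\_val$, so $rop$ is $prev\_val$-effective after $\alpha$. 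If $rop$ is direct then $p$ has obtained a triple $(sn,val,\_)$ from \fx{} at line~\ref{line:mwmrma:read_fetch}; the remaining steps ($SN.\cas$ at line~\ref{line:mwmrma:reader_cas} and the local assignment) leave the returned value equal to $val$, and the same indistinguishability argument shows $rop$ is $val$-effective after $\alpha$.

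For the backward direction I would argue the contrapositive: if $rop$ is invoked in $\alpha$ but does not appear in $L(\alpha)$, then $rop$ is not effective in $\alpha$. Being unclassified means $p$ has either taken no step of $rop$ beyond its invocation, or has executed only line~\ref{line:mwmrma:read_sn} obtaining some $sn \neq prev\_sn$ (so the silent shortcut is not taken and the code must proceed to \fx{}). In either case the value of $R.val$ at the moment $p$ will apply \fx{} at line~\ref{line:mwmrma:read_fetch} is not pinned down by $p$'s current local state. I would construct two extensions of $\alpha$: in $\alpha_1$, an idle writer is scheduled to execute a complete $\wrt{}(u_1)$ (updating both $R$ and $SN$) before $p$ takes any further step, after which $p$ is activated to run $rop$ to completion, returning $u_1$; in $\alpha_2$, the same scheduling is used with $\wrt{}(u_2)$ for some $u_2 \neq u_1$. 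Taking $\beta = \alpha$ in the definition of effectiveness, these two extensions witness $rop$ completing with two distinct return values, so no single $v$ can satisfy the $v$-effectiveness condition after $\alpha$, contradicting the hypothesis.

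The main obstacle is verifying that the backward construction's scheduling genuinely produces the claimed return values, given the algorithm's control flow. One must check that after the injected write, $p$'s next steps reach the \fx{} branch and fetch exactly the freshly installed $R.val$: in the ``no step taken'' sub-case, $p$'s read of $SN$ returns a value distinct from its stored $prev\_sn$, so the silent shortcut on line~\ref{line:mwmrma:no_new_write} is not taken; in the ``$SN$ already read'' sub-case, the subsequent \fx{} picks up the injected value directly. Boundary subtleties --- for example, the first \rd{} by $p$ has $prev\_sn = -1$ and $SN \geq 0$ always forces the \fx{} branch, and a later failure of $rop$'s $SN.\cas$ at line~\ref{line:mwmrma:reader_cas} does not alter the returned value --- follow from direct inspection of the algorithm together with the phase structure established in Lemma~\ref{lem:epoch}.
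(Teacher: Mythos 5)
Your overall strategy is the same as the paper's: characterize the reads that land in $L(\alpha)$ as exactly the classified ones (silent or direct), show by an indistinguishability argument that each classified read has its return value pinned down by $p$'s local state, and refute effectiveness for unclassified reads by exhibiting two extensions of $\alpha$ with different return values. The forward direction is fine. But there is one concrete flaw in your backward construction: a single complete $\wrt{}(u_1)$ by an idle writer does \emph{not} necessarily update $R$, contrary to your parenthetical claim. If $\alpha$ ends in a transition phase $D_x$ (some other writer's pending operation has already executed the successful \cas{} $\rho_x$ on $R$ but $SN$ still holds $x-1$), then your injected writer reads $x-1$ from $SN$, computes $sn=x$, reads $lsn = x \geq sn$ from $R$, and exits via the break at line~\ref{line:mwmrma:writer_break} --- a \emph{silent} write. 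It helps advance $SN$ to $x$ but never installs $u_1$ into $R$. Consequently $R.val$ equals the same value $v_x$ in both $\alpha_1$ and $\alpha_2$, the reader's subsequent \fx{} (or silent shortcut) returns $v_x$ in both, and the two extensions fail to witness two distinct return values.

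The fix is exactly what the paper does: schedule the writer to first complete its pending \wrt{} if any, and then \emph{repeatedly} perform $\wrt{}(u_1)$ until one instance is visible (i.e., its \cas{} on $R$ at line~\ref{line:mwmrma:writer_cas} succeeds). Running solo, at most one extra attempt is needed, since after the first (possibly silent) write $SN$ has caught up with $R.seq$ and no concurrent \fx{} can perturb the \cas{}. With that amendment, your argument that $p$ then necessarily reaches the \fx{} branch (because $SN$ has strictly increased past $prev\_sn$) and fetches the freshly installed value goes through, and the rest of your proof matches the paper's.
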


\begin{proof}
If $rop$ completes in $\alpha$, then it is effective and it is in $L(\alpha)$. 
Otherwise, $rop$ is pending after $\alpha$.
Let $p_j$ be the process that invokes $rop$.
We can show:

\begin{claim}
$rop$ is effective after $\alpha$ if and only if either\\ 
(1) $p_j$ has read $x$ from $\mathit{SN}$ and $x = prev\_sn$ (line~\ref{line:mwmrma:read_sn}) or \\
(2) $p_j$ has applied  \fx{} to $R$ (line~\ref{line:mwmrma:read_fetch}). 
\end{claim}

\begin{proof}
First, 
let $\alpha'$ be an arbitrary extension of $\alpha$ in which $rop$ returns some value $a$, 
$\beta$ a finite  execution indistinguishable from $\alpha$ to $p_j$, 
and $\beta'$ one of its extensions in which $rop$ returns some value $b$. 
We show that if $\alpha$ satisfies (1) or (2), then $a = b$. 
(1) If in $\alpha$ after invoking $rop$, $p_j$ reads $x = prev\_sn$ from $\mathit{SN}$ at line~\ref{line:mwmrma:read_sn}, 
then $rop$ returns $a = prev\_val$ in $\alpha'$. 
Since $\ind{\alpha}{p_j}{\beta}$, $prev\_val = a$ and $prev\_sn = x$ when $rop$ starts in $\beta$,  
and $p_j$ reads also $x$ from $SN$. 
Therefore, $rop$ returns $b = a$ in $\beta'$. 
(2) If $p_j$ applies a \fx{} to $R$ (line~\ref{line:mwmrma:read_fetch}) while performing $rop$ in $\alpha$,
then $rop$ returns $a = v$ (line~\ref{line:mwmrma:return_val}), where $v$ is the value fetched from $R.val$ in $\alpha'$. 
Since  $\ind{\alpha}{p_j}{\beta}$, $p_j$ also applies a \fx{} to $R$ while performing $rop$ in $\beta$, 
and fetches $v$ from $R.val$. Therefore $rop$ also returns $v$ in $\beta'$.

Conversely, suppose that neither (1) nor (2) hold for $\alpha$. 
That is, $p_j$ has not applied a \fx{} to $R$ and, 
if $x$ has been read from $\mathit{SN}$, $x \neq prev\_sn$. 
We construct two extensions $\alpha'$ and $\alpha''$ in which $rop$ returns $v' \neq v''$, respectively. 
Let $X$ be the value of $SN$ at the end of $\alpha$, and $p_i$ be a writer. 
In $\alpha'$, $p_i$ first completes its pending \wrt{} if it has one, 
before repeatedly writing the same value $v'$ until performing a visible \wrt{}($v'$). 
Finally, $p_j$ completes $rop$. 
Since $p_i$ is the only writer that takes steps in $\alpha$, 
it eventually has a visible \wrt{}($v'$), 
that is in which $R.val$ is changed to $v'$. 
Note also that when this happens, $\emph{SN} > X$. 
The extension $\alpha''$ is similar, except that $v'$ is replaced by $v''$. 

Since conditions (1) and (2) do not hold, 
$p_i$'s next step in $rop$ is reading $SN$ or issuing  $R.\fx{}$.  
If $p_j$ reads $SN$ after resuming $rop$, it gets a value $x > prev\_val$. 
Thus, in both cases, $p_j$ accesses $R$ in which it reads $R.val = v'$ (or $R.val = v''$). 
Therefore, $rop$ returns $v'$ in $\alpha'$ and $v''$ in $\alpha''$.
\end{proof}

Now, if (1) holds ($p_j$  reads $x = prev\_val$ from $SN$ at line~\ref{line:mwmrma:read_sn}),
then $rop$ is classified as a silent \rd{}, and it appears in $L(\alpha)$,
by rule $R1$ if $R.seq = x$ when $SN$ is read or rule $R2$, otherwise.
If (2) holds ($p_j$ applies a \fx{} to $R$), then $op$ is a direct \rd{}, 
and linearized in $L(\alpha)$ by rule $R1$.

If neither (1) nor (2) hold, then $p_j$ has either not read $SN$, 
or read a value $\neq prev\_val$ from $SN$ but without yet accessing $R$. 
In both cases, $op$ is unclassified and hence not linearized. 
\end{proof}

We can prove that an audit $aop$ includes a pair $(j,v)$ in its response set
\emph{if and only if} 
a \rd{} operation by process $p_j$ with output $v$ is linearized before it.
Since a \rd{} is linearized if and only it is effective 
(Lemma~\ref{lem:eff_read_lin}), 
any \adt{} operation that is linearized after 
the \rd{} is effective, must report it. 
This implies:

\begin{lemma}
\label{lem:lin_adt_after_eff_rd}
If an \adt{} operation $aop$ is invoked and returns 
in an extension $\alpha'$ of $\alpha$,
and $\alpha$ contains a $v$-effective \rd{} operation by process $p_j$, 
then $(j,v)$ is contained in the response set of $aop$.
\end{lemma}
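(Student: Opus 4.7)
The approach chains Lemma~\ref{lem:eff_read_lin} with the ``iff'' characterization of \adt{} response sets stated just above: it suffices to show that the $v$-effective \rd{} operation $rop$ by $p_j$ is linearized strictly before $aop$ in $L(\alpha')$. By Lemma~\ref{lem:eff_read_lin}, $rop$ is in $L(\alpha)$, and the same classification rule places it in $L(\alpha')$. The claim inside the proof of Lemma~\ref{lem:eff_read_lin} guarantees that by the end of $\alpha$, either (i) $p_j$ has read $x = prev\_sn$ from $\mathit{SN}$ at line~\ref{line:mwmrma:read_sn}, so $rop$ is silent with $sn(rop) = x$; or (ii) $p_j$ has applied $R.\fx$ at line~\ref{line:mwmrma:read_fetch}, so $rop$ is direct.

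The next step is to verify by case analysis that the step anchoring $rop$ in $L(\alpha')$ lies within the prefix $\alpha$. If $rop$ is direct, rule R1 anchors it at the $R.\fx$ step of case (ii), which is in $\alpha$. If $rop$ is silent and R1 applies, the anchor is the $\mathit{SN}.\mathsf{read}$ of case (i), also in $\alpha$. If $rop$ is silent and R2 applies, it is placed immediately before the unique visible \wrt{} of sequence number $sn(rop)+1$; by the definition of R2, this branch is taken precisely when $R.seq = sn(rop)+1$ holds at the moment of $p_j$'s $\mathit{SN}.\mathsf{read}$, which by Lemma~\ref{lem:epoch} means the transition $\rho_{sn(rop)+1}$---the anchor of that visible \wrt{}---has already fired in $\alpha$.

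Since $aop$ is anchored by rule R1 at its $R.\mathsf{read}$ of line~\ref{line:mwmrma:audit_read}, and the hypothesis that $aop$ is invoked in the extension $\alpha'$ of $\alpha$ places this step strictly after $\alpha$, we conclude that $rop$ precedes $aop$ in $L(\alpha')$. The iff characterization then yields $(j,v)$ in the response set of $aop$. The principal subtlety is the silent--R2 case: there the anchor of $rop$ is not one of its own steps but a \cas{} applied by another operation, so one must appeal to the phase condition triggering R2 to guarantee that this borrowed anchor still lies within $\alpha$.
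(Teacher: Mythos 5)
Your proof is correct and follows essentially the same route as the paper: combine Lemma~\ref{lem:eff_read_lin} with the if-and-only-if characterization of \adt{} response sets, and observe that the effective \rd{} is linearized before the \adt{} invoked in the extension. The paper leaves the ordering step implicit, whereas you spell out the case analysis on the linearization anchors (including the silent/R2 case where the anchor is borrowed from $\rho_{sn(rop)+1}$); that extra detail is accurate and consistent with the paper's linearization rules.
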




Lemma~\ref{lem:writeNonAudit} shows that writes are uncompromised by readers, 
namely, a read cannot learn of a value written, 
unless it has an effective \rd{} that returned this value.
Lemma~\ref{lem:rd_non_auditable} shows that reads are uncompromised by other readers, 
namely, they do not learn of each other.

\begin{restatable}{lemma}{writeNonAudit}
\label{lem:writeNonAudit}
Assume $p_j$ only performs \rd{} operations.
Then for every value $v$ either there is a \rd{} operation by $p_j$ in $\alpha$ that is $v$-effective,
or there is $\alpha'$, $\ind{\alpha'}{p_j}{\alpha}$ in which no \wrt{} has input $v$.
\end{restatable}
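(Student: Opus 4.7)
The plan is to handle the nontrivial case where $p_j$ has no $v$-effective \rd{} in $\alpha$, and exhibit an execution $\alpha'$ with $\ind{\alpha'}{p_j}{\alpha}$ and no \wrt{}($v$). I will obtain $\alpha'$ from $\alpha$ by picking a value $v' \neq v$, replacing every invocation of \wrt{}($v$) by a \wrt{}($v'$), and propagating the substitution $v \mapsto v'$ through the execution: wherever $v$ appears as the value field of $R$ or of some $V[\cdot]$, or in the local state of writers and auditors, it is replaced by $v'$; sequence numbers, the bit-fields, and all of $p_j$'s local state are left untouched. The step sequence and process scheduling of $\alpha'$ will match those of $\alpha$ one-to-one.

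Before the construction I would establish two consequences of the hypothesis: (a) no direct \rd{} of $p_j$ in $\alpha$ fetches a triple whose value field is $v$, and (b) $p_j$'s local variable $prev\_val$ is never equal to $v$ at any point of $\alpha$. For (a), if some direct \rd{} $rop$ of $p_j$ fetches $(sn,v,bits)$ from $R$ at line~\ref{line:mwmrma:read_fetch}, then the reasoning used in case (2) of the internal claim of Lemma~\ref{lem:eff_read_lin} shows that $rop$ is $v$-effective after the prefix ending at that \fx{} step, contradicting the hypothesis. Statement (b) is then immediate: $prev\_val$ is initialized to $\bot$ and is only overwritten at line~\ref{line:mwmrma:return_val} with the value fetched by a direct \rd{} of $p_j$, which by (a) is never $v$.

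With (a) and (b) in place, I would verify by induction on the prefix length that $\alpha'$ is a legal execution and that its shared state at every point coincides with that of $\alpha$ after applying the $v \mapsto v'$ substitution. The key observation is that the control flow of every process depends only on sequence numbers, reader-bit fields, and equality tests on whole triples; and that every primitive used, in particular \cas{}, compares triples as a whole. Because the substitution is applied uniformly to both the expected value and the register content, every \cas{} that succeeded (resp., failed) in $\alpha$ still succeeds (resp., fails) in $\alpha'$. $p_j$'s own view is preserved because silent reads only consult $\mathit{SN}$ and return $prev\_val$, which by (b) is never $v$ and hence unaffected, while direct reads fetch a val that by (a) differs from $v$ and is therefore also unchanged. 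The main obstacle I expect is a careful bookkeeping argument that the interleaving of other readers' \fx{} steps with writers' \cas{} and with the copies to $V[\cdot]$, $B[\cdot]$ all remain consistent under the substitution: each \fx{} touches only the bits field of $R$ and therefore commutes with the substitution, and the copies to $V[\cdot]$, $B[\cdot]$ just carry $v$ or $v'$ faithfully along with the current value in $R$. This yields $\ind{\alpha'}{p_j}{\alpha}$, and by construction no \wrt{} invocation in $\alpha'$ has input $v$, completing the proof.
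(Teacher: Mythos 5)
Your proposal is correct and follows essentially the same route as the paper's proof: observe that if $p_j$ ever fetched $v$ from $R$ via \fx{} the corresponding \rd{} would be $v$-effective, and otherwise replace the input $v$ of the offending \wrt{} operations by some $v'\neq v$ to obtain an execution indistinguishable to $p_j$. You simply carry out the substitution argument in more detail (uniform replacement, whole-triple \cas{} comparisons, induction on prefixes) where the paper compresses it into the remark that $R$ is the only shared variable holding write inputs that $p_j$ reads.
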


\begin{proof}
If $v$ is not an input of some \wrt{} operation in $\alpha$, the lemma follows by taking $\alpha' = \alpha$. 
If there is no visible \wrt{}($v$) operation in $\alpha$,
then, since a silent \wrt{}($v$) does not change $R.val$ to $v$, 
the lemma follows by changing its input to some value $v' \neq v$ to obtain 
an execution $\ind{\alpha'}{p_j}{\alpha}$

Let $wop$ be a visible \wrt{}($v$) operation in $\alpha$. 
Since it is visible, $wop$ applies a \cas{} to $R$ that changes 
$(R.seq,R.val)$ to $(x,v)$ where $x$ is some sequence number. 
If $p_j$ applies a \fx{} to $R$ while $R.val = v$, 
then the corresponding \rd{} operation $rop$ it is performing is direct and $v$-effective. 
Otherwise, $p_j$ never applies a \fx{} to $R$ while $R.val = v$. 
$R$ is the only shared variable in which inputs of \wrt{} are written 
and that is read by $p_j$.
Hence, the input of $wop$ can be replaced by another value $v'\neq v$, 
creating an indistinguishable execution $\alpha'$ 
without a \wrt{} with input $v$. 
\end{proof}

\begin{restatable}{lemma}{uncompromisedR}
  \label{lem:rd_non_auditable}
Assume $p_j$ only performs \rd{} operations, 
then for any reader $p_k$, $k \neq j$, there is an execution 
$\ind{\alpha'}{p_j}{\alpha}$ in which no \rd{} by $p_k$ is $v$-effective,
for any value $v$.
\end{restatable}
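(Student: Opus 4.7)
The plan is to construct an execution $\alpha'$ indistinguishable from $\alpha$ to $p_j$ in which $p_k$ takes no steps at all; then $p_k$ invokes no \rd{} operation, and hence no \rd{} by $p_k$ can be $v$-effective. The main insight will be that $p_j$ does not know the random masks $\mathit{rand}_s$, so the effect of $p_k$'s $R.\fx(2^k)$ calls on $R.bits$ can be hidden from $p_j$ by re-choosing the masks that writers install when they update $R$.

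First I would establish a simple structural fact: $p_j$ accesses $R$ at most once during any maximal interval in which $R.seq$ equals a fixed value $s$. Indeed, after $p_j$'s first $R.\fx(2^j)$ in such an interval it sets $prev\_sn \gets s$, so every subsequent \rd{} that reads $SN=s$ returns immediately at line~\ref{line:mwmrma:no_new_write} without touching $R$. Using this, I would define $\alpha'$ by keeping the steps and relative order of every process other than $p_k$ the same as in $\alpha$, and by setting, for each sequence number $s$, $\mathit{rand}_s^{\alpha'} = \mathit{rand}_s^\alpha \oplus 2^k$ exactly when $p_k$ applied $R.\fx(2^k)$ while $R.seq=s$ before $p_j$'s (unique) $R.\fx(2^j)$ in that interval, and $\mathit{rand}_s^{\alpha'} = \mathit{rand}_s^\alpha$ otherwise. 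A direct case analysis then shows that at each of $p_j$'s $R$-accesses the triple $(R.seq,R.val,R.bits)$ returned is the same in $\alpha$ and $\alpha'$, while $R.seq$ and $R.val$ are automatically preserved because the writers' successful $R.\cas$es are unchanged.

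It remains to match the value of $SN$ at each of $p_j$'s $SN.\mathsf{read}$s. A discrepancy can arise only when the successful $SN.\cas(s-1,s)$ in $\alpha$ was performed by $p_k$ at line~\ref{line:mwmrma:reader_cas}; in that case I would reschedule the writer of sequence $s$ to perform its own $SN.\cas(s-1,s)$ attempt at line~\ref{line:mwmrma:writer_return} at the same moment, which is legal because the writer always issues this CAS after its successful $R.\cas$ has taken place. Since $p_j$ never reads $V$ or $B$, the fact that writers and auditors may observe different $R.bits$ in $\alpha'$ (and therefore write to $B[s][\cdot]$ differently) is harmless for $p_j$'s view. The main obstacle will be the careful bookkeeping needed to verify that the $\alpha'$ thus constructed is a legal execution of the algorithm and that $\alpha|_{p_j} = \alpha'|_{p_j}$ holds step by step; once this is checked, the lemma follows immediately, because $p_k$ contributes no steps in $\alpha'$ and therefore neither condition (1) nor condition (2) of the characterization of effective reads used in Lemma~\ref{lem:eff_read_lin} can hold for $p_k$.
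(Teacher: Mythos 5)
Your proof takes essentially the same approach as the paper: the key idea in both is to erase $p_k$'s $R.\fx(2^k)$ steps and compensate by flipping the $k$th bit of the corresponding one-time pad $\mathit{rand}_s$, so that $p_j$'s (at most one) subsequent access to $R$ in that phase returns the same triple. Your version is somewhat more ambitious---you delete all of $p_k$'s steps rather than only its $v$-effective \rd{} operations, and you explicitly handle the case where $p_k$'s $SN.\cas$ was the successful one, a scheduling detail the paper's proof glosses over---but the core indistinguishability argument is the same.
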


\begin{proof}
The lemma clearly holdes if there is no $v$-effective \rd{} by process $p_k$. 
So, assume there is a $v$-effective \rd{} operation $rop$ by $p_k$. 
Let $\alpha'$ be the execution in which we remove all $v$-effective \rd{} 
operations performed by $p_k$ that are silent. 
Such operations do not change any shared variables, 
and therefore, $\ind{\alpha'}{p_j}{\alpha}$. 

So, let $rop$ be a direct, $v$-effective \rd{} by $p_k$. 
When performing $rop$, $p_k$ applies \fx{} to $R$ 
(line~\ref{line:mwmrma:read_fetch}), when $(R.seq,R.val) = (x,v)$, 
for some sequence number $x$. 
This step only changes the $k$th tracking bit of $R$ unchanged to, say, $b$. 
Recall that $R$ is accessed (by applying a \fx{}) at most once by $p_j$ while $R.seq = x$. 
If no \fx{} by $p_j$ is applied to $R$ while $R.seq = x$, or one is applied before $p_k$'s, 
$rop$ can be removed without being noticed by $p_j$. 
Suppose that both $p_k$ and $p_j$ apply a \fx{} to $R$ 
while $R.seq = x$, and that $p_j$'s \fx{} is after $p_k$'s. 
Let $\alpha'_{x,b}$ be the execution identical to $\alpha'$, 
except that (1) the $k$th bit of $rand_x$ is $b$ and, 
(2) $rop$ is removed. 
Therefore, $\ind{\alpha'_{x,b}}{p_j}{\alpha'}$,
and since $\ind{\alpha'}{p_j}{\alpha}$,
we have that $\ind{\alpha'_{x,b}}{p_j}{\alpha}$. 
\end{proof}


\begin{theorem}
Alg.~\ref{alg:cas_mwa} is a linearizable and wait-free implementation of 
an auditable multi-writer, multi-reader register.
Moreover, 
  \begin{itemize}
  \item An \adt{} reports $(j,v)$ 
  if and only if $p_j$ has an $v$-effective \rd{} operation in $\alpha$. 
  \item a \wrt{$v$} is uncompromised by a reader $p_j$, unless $p_j$ has a $v$-effective \rd{}.
  \item a \rd{} by $p_k$ is uncompromised by a reader $p_j \neq p_k$.
  \end{itemize}
\end{theorem}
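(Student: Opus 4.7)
The plan is to assemble the theorem from the lemmas and structural claims already established for the linearization $L(\alpha)$. I would handle the four assertions in order. Wait-freedom is immediate from Lemma~\ref{lem:cas_mwa_wf}. For linearizability, it suffices to verify that $L(\alpha)$ satisfies the three conditions of the linearizability definition: real-time-order preservation, and that the \rd{} and \wrt{} operations satisfy the register sequential specification---both claimed in the paragraph defining $L(\alpha)$---together with the audit semantics, which is subsumed by the already-asserted ``iff'' characterization of an \adt{}'s response set in terms of \rd{} operations linearized before it.

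For the first bullet, I would establish both directions of the biconditional. For the ``only if'' direction, if an \adt{} reports $(j,v)$, then by the audit-response characterization there is a \rd{} by $p_j$ returning $v$ linearized before it in $L(\alpha)$; Lemma~\ref{lem:eff_read_lin} then gives that this \rd{} is $v$-effective in $\alpha$. For the ``if'' direction, I would invoke Lemma~\ref{lem:lin_adt_after_eff_rd} applied to the prefix of $\alpha$ ending just before the \adt{}'s invocation: any \rd{} by $p_j$ that is $v$-effective at that point is included in the response set.

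The second and third bullets are direct restatements of Lemma~\ref{lem:writeNonAudit} and Lemma~\ref{lem:rd_non_auditable}, respectively, both under the standing assumption that $p_j$ performs only \rd{} operations. The main obstacle is not in the top-level assembly---which is mechanical---but in filling in the two asserted-but-unproved ingredients invoked above: that $L(\alpha)$ respects the register sequential specification across the mixed classification of silent, direct, and visible operations, and the ``iff'' characterization of audit response sets. Both would require a careful case analysis using the phase decomposition of Lemma~\ref{lem:epoch} and the linearization rules R1--R3, together with the invariants on $R$, $SN$, and the arrays $V$ and $B$ that the algorithm maintains---in particular, that $V[s]$ and $B[s]$ faithfully record the value and readers of sequence number $s$ by the time any \adt{} can observe them indirectly through those arrays.
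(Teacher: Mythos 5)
Your proposal is correct and follows essentially the same route as the paper: the theorem is assembled from Lemma~\ref{lem:cas_mwa_wf} (wait-freedom), the real-time-order and sequential-specification claims for $L(\alpha)$, the audit-response characterization combined with Lemma~\ref{lem:eff_read_lin} and Lemma~\ref{lem:lin_adt_after_eff_rd} for the first bullet, and Lemmas~\ref{lem:writeNonAudit} and~\ref{lem:rd_non_auditable} for the remaining bullets. The ``asserted-but-unproved ingredients'' you flag are exactly the ones the paper discharges in its appendix (Lemmas~\ref{lem:rt_order}, \ref{lem:read_valid}, \ref{lem:audit_completness}, \ref{lem:audit_accuracy}) via the phase decomposition and rules R1--R3, as you anticipate.
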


\section{An Auditable Max Register}
\label{sec:maxreg}

This section shows how to extend the register implementation of the previous section
into an implementation of a max register with the same properties.
A \emph{max register} provides two operations:
$\wrtm{}(v)$ which writes a value $v$ and \rd{} which returns a value. 
Its sequential specification is that a \rd{} returns 
the largest value previously written. 
An auditable max register also provides an \adt{} operation, 
which returns a set of pairs $(j,v)$. 
As in the previous section, 
reads are audited if and only if they are effective, 
and readers cannot compromise other $\wrtm{}$ operations, unless they read them,
or other \rd{} operations. 



Alg.~\ref{alg:maxreg} uses essentially the same \rd{} and \adt{} as in Alg.~\ref{alg:cas_mwa}. 
The \wrtm{} operation is also quite similar, with the following differences (lines in blue in the pseudo-code).
In Alg.~\ref{alg:cas_mwa}, a \wrt{}($w$) obtains a new sequence number $s+1$
and then attempts to  change $R$ to $(s+1,w,rand_{s+1})$. 
The operation terminates after it succeeds in doing so, 
or if it sees in $R$ a sequence number $s' \geq s+1$. 
In the latter case, a concurrent \wrt{}($w'$) has succeeded and may be seen as overwriting $w$,
so \wrt{}($w$) can terminate, even if $w$ is never written to $R$.
The implementation of  \wrtm{}  uses a similar idea, except that 
(1) we make sure that the successive values in $R$ are non-decreasing and 
(2) a \wrtm{}{}($w$) with sequence number $s+1$ is no longer abandoned when a sequence number 
$s'\geq s+1$ is read from $R$, but instead when $R$ stores a value $w' \geq w$. 


There is however, a subtlety that must be taken care of.
A reader may obtain a value $v$ with sequence number $s$, 
and later read a value $v+2$ with sequence number $s' > s+1$.
This leaks to the reader that some \wrtm{} operations occur in between 
its \rd{} operations, and in particular, that a $\wrtm{}(v+1)$ occurred, 
without ever effectively reading $v+1$.

To deal with this problem, we append a \emph{random nonce} $N$ 
to the argument of a \wrtm{} operation, where $N$ is a random number.
The pair $(w,N)$ is used as the value written $v$ was used in Alg.~\ref{alg:cas_mwa}.
The pairs $(w,N)$ are ordered lexicographically, that is, 
first by their value $w$ and then by their nonce $N$. 
Thus, the reader cannot guess intermediate values. 
The code for \rd{} and \adt{} is slightly adjusted in 
Alg.~\ref{alg:maxreg} 
versus Alg.~\ref{alg:cas_mwa}, 
to ignore the random nonce $N$ from the pairs when values are returned.


\begin{algorithm}[tb]
  \begin{algorithmic}[1]
    \setcounter{ALG@line}{20}
    \Statex \textbf{shared registers}
    \Statex\hspace{\algorithmicindent} $R,\mathit{SN}, V[0..+\infty], B[0..+\infty][0..m-1]$ as in Alg.~\ref{alg:cas_mwa}
    \Statex\hspace{\algorithmicindent} \ha{$M$: a (non-auditable) max register, initially $v_0 = (w_{0},N_0)$}
    \Statex\textbf{local variables: writer, reader, auditor, } as in Alg.~\ref{alg:cas_mwa}
  \Function{read($\:$), audit}{$\:$}: same as in Alg~\ref{alg:cas_mwa}
  \EndFunction

  \Function{writeMax}{$w$}
    \State \hspace{-3pt}\ha{$v \gets (w,N)$}, where $N$ is a  fresh random nonce
  \State \hspace{-3pt}\ha{$M.\mathsf{writeMax}(v)$}; $sn \gets \mathit{SN}.\mathsf{read}() +1$; \label{lmr:writer_read1_SN} \label{lmr:writer_write_M}
  \Repeat\label{lmr:writer_repeat}
  \State  $(lsn,lval,bits) \gets R.\mathsf{read}()$ \label{lmr:writer_read_R}
  \State \ha{\textbf{if} $lval \geq v$ \textbf{then} $sn \gets lsn$; \textbf{break}} \label{lmr:writer_break}
  \State \ha{\textbf{if} $lsn \geq sn$ \textbf{then}}
  \State \hspace{6pt}\ha{$\mathit{SN}.\mathsf{compare\&swap}(sn-1,sn)$;} 
  \State \hspace{6pt}\ha{$sn \gets \mathit{SN}.\mathsf{read}() +1$; \label{lmr:writer_read2_SN} \textbf{continue}} \label{lmr:writer_sn_used} 
  \State \ha{$mval \gets M.\mathsf{read}()$}\label{lmr:writer_read_M}
  \State $V[lsn].\mathsf{write}(lval.value)$; 
  \State $B[lsn][j].\mathsf{write}(\emph{true})$ $\forall j$, s.t. $bits[j] \neq rand_{lsn}[j]$ \label{lmr:writer_record_reads}
  \Until{$R.\mathsf{compare\&swap}((lsn,lval,bits),(sn,\ha{mval},rand_{sn}))$}\label{lmr:writer_cas_R}
   \State $\mathit{SN}.\mathsf{compare\&swap}(sn-1,sn)$; \Return{}\label{lmr:writer_return}
  \EndFunction
\end{algorithmic}
\caption{Auditable Max Register}
\label{alg:maxreg}
\end{algorithm}

In the algorithm, a (non-auditable) max-register $M$ is shared among the writers. 
A \wrtm($w$) by $p$ starts by writing the pair $v=(w,N)$ of the value $w$ and the nonce $N$ to $M$, 
before entering a repeat loop. 
Each iteration is an attempt to store in $R$ the current value $mval$ of $M$, 
and the loop terminates as soon as $R$ holds a value equal to or larger than $mval$. 
Like in Alg.~\ref{alg:cas_mwa}, $R$ holds a triplet $(s,val,bits)$ 
where $s$ is $val$’s sequence number, $val$ is the current value, 
and $bits$ is the encrypted set of readers of $val$. 
Before attempting to change $R$, $val$ and the set of readers, 
once deciphered, are stored in the registers $V[s]$ and $B[s]$, 
from which they can be retrieved with \adt{}.

In each iteration of the repeat loop, the access pattern of \wrt{} in 
Alg.~\ref{alg:cas_mwa} to the shared register $\emph{SN}$ and $R$ is preserved. 
After obtaining a new sequence number $s+1$, 
where $s$ is the current value of $SN$ (line~\ref{lmr:writer_read1_SN} 
for the first iteration, line~\ref{lmr:writer_read2_SN} otherwise), 
a triple $(lsn,lval,bits)$ is read from $R$. 
If $lval \geq v$, the loop breaks as a value that is equal to or larger than $v$ 
has already been written. 
As in Alg.~\ref{alg:cas_mwa}, before returning  
we make sure that the sequence number in $SN$ is at least as large as $lsn$, 
the sequence number in $R$.


\section{Auditable Snapshot Objects and Versioned Types}
\label{sec:snapshot}

We show how an auditable max register (Section~\ref{sec:maxreg})
can be used to make other object types auditable.


\subsection{Making Snapshots Auditable}

We start by showing how to implement an auditable $n$-component snapshot object,
relying on an auditable max register. 
Each component has a state, initially $\bot$, and a different designated writer process. 
A \emph{view} is an $n$-component array, each cell holding a value written by a process in its component. 
A \emph{atomic object} \cite{AfekADGMS93} provides two operations: $\upd(v)$ that changes the process's component to $v$, and \scn{} 
that returns a view. 
It is required that in any sequential execution, in  the view returned by a \scn{}, each component contains the value of the latest \upd{} to this component (or $\bot$ if there is no previous \upd{}). 
As for the auditable register, an \adt{} operation returns a set of pairs $(j,view)$. 
In a sequential execution, there is  such a pair if and only if  the operation is preceded by a \scn{} by process $p_j$ that returns $view$. 
Here, we want that audits report exactly those \scn{s} that have made enough progress to infer the current $view$ of the object.

Denysuk and Woeffel~\cite{DenysyukW15} show that a strongly-linearizable max register 
can be used to transform a linearizable snapshot into its strongly linearizable counterpart. 
As we explain next, with the same technique, non-auditable snapshot objects can be made auditable. 
Algorithm~\ref{alg:snapshot} adds an \adt{} operation to their algorithm.
Their implementation is lock-free, 
as they rely on a lock-free implementation of a max register. 
Algorithm~\ref{alg:snapshot} is wait-free since we use the
\emph{wait-free} max-register implementation of Section~\ref{sec:maxreg}. 

Let $S$ be a linearizable, but non-auditable snapshot object.
The algorithm works as follows: each new state (that is, whenever one component is  updated) is associated with a unique and increasing \emph{version number}.  
The version number is obtained by storing a sequence number $sn_i$ 
in each component $i$ of $S$, in addition to its current value.
Sequence number $sn_i$ is incremented each time the $i$th component is updated 
(line~\ref{lsp:upd_write_S}).  
Summing the sequence numbers of the components yields 
a unique and increasing version number ($vn$) for the current view. 

The pairs $(vn,view)$, where $vn$ is a version number and $view$ 
a state of the  auditable snapshot, 
are written to an auditable max register $M$.
The pairs are ordered according to the version number,
which is a total order since version numbers are unique. 
Therefore, the latest state can be retrieved by reading $M$, 
and the set of past \scn{} operations can be obtained by auditing $M$ 
(line~\ref{lsp:adt_M}). 
The current view of the auditable snapshot is stored in $S$. 

In an \upd{}($v$), process $p_i$ starts by updating the $i$th 
component of $S$ with $v$ and incrementing the sequence number field $sn_i$. 
It then scans $S$, thus obtaining a new  view of $S$ that includes its update.
The view $view$ of the implemented auditable snapshot is obtained by removing the sequence number 
in each component (line~\ref{lsp:upd_new_view}). 
The version number $vn$ associated with this view is 
the sum of the sequence numbers. 
It then writes $(vn,view$) to the max-register $M$ (line~\ref{lsp:upd_return}). 
A \scn{} operation reads a pair $(vn,view)$ from $M$ and returns 
the corresponding $view$ (line~\ref{lsp:snp_return}). 
Since $M$ is auditable, the views returned by the processes that 
have previously performed a \scn{} can thus be inferred by auditing $M$ 
(line~\ref{lsp:adt_M}).

\begin{algorithm}[tbp]
  \caption{$n$-component auditable snapshot objects.}
  \label{alg:snapshot}
  \begin{algorithmic}[1]
    \Statex\textbf{shared registers}
    \Statex\hspace{\algorithmicindent} $\mathit{M}$: auditable  max register, initially $(0,[\bot,\ldots,\bot])$
    \Statex\hspace{\algorithmicindent} $\mathit{S}$: (non-auditable) snapshot object, 
    \Statex\hspace{\algorithmicindent}\hspace{\algorithmicindent} initially $[(0,\bot),\ldots,(0,\bot)]$  
    \Statex\textbf{local variable: writer} $p_i, 1 \leq i \leq n$
    \Statex\hspace{\algorithmicindent} $sn_i$ local sequence number, initially $0$ 
    \Function{update}{$v$} \Comment{code for writer $p_i, i \in \{1,\ldots,n\}$}
    \State $sn_i \gets sn_i + 1$; $S.\mathsf{update}_i((sn_i,v))$\label{lsp:upd_write_S}
    \State $sview \gets S.\mathsf{scan}()$;  $vn \gets \sum_{1 \leq j \leq n} sview[j].sn$\label{lsp:upd_scan_S}
    \State $view \gets $ the $n$-component array of the values in $sview$ \label{lsp:upd_new_view}
    \State $M.\mathsf{writeMax}((vn,view))$; \Return\label{lsp:upd_return}
    \EndFunction

    \Function{scan}{$\,$} 
      \State $(\_,view) \gets M.\mathsf{read}()$; \Return $view$\label{lsp:snp_return}
   \EndFunction

   \Function{audit}{$\,$} 
     \State $\mathit{MA} \gets M.\mathsf{audit}()$; 
     \State \Return $\{(j,view) : \exists ~\mbox{an element}~ (j,(*,view)) \in \mathit{MA}\}$\label{lsp:adt_return} \label{lsp:adt_M} 
     \EndFunction
    \end{algorithmic}
\end{algorithm}

The \adt{} and \scn{} operations interact with the implementation by applying 
a single operation (audit and read, respectively) to the auditable max register $M$.
The algorithm therefore lifts the properties of the implementation of $M$ 
to the auditable snapshot object.
In particular, when the implementation presented in Section~\ref{sec:maxreg} is used, 
effective \scn{} operations are auditable, 
\scn{} operations are uncompromised by other scanners, and
\upd{} operations are uncompromised by scanners. 

\subsection{Proof of Correctness}
\label{app:proof of snapshot}

Let $\alpha$ be a finite execution of Algorithm~\ref{alg:snapshot}. 
To simplify the proof, we assume the inputs of \upd{} by the same process are unique. 

We assume that the implementation of $M$ is wait-free and linearizable.
In addition, it guarantees effective linearizability and that \rd{} operations 
are uncompromised by other readers. 
We also assume that the implementation of $S$ is linearizable and wait-free
(e.g.,\cite{AfekADGMS93}).
Inspection of the code shows that \upd, \scn{} and \adt{} operations are wait-free. 

Since $S$ and $M$ are linearizable and linearizability is composable, 
$\alpha$ can be seen as a sequence of steps applied to $S$ or $M$.
In particular, we associate with each high-level operation $op$ a step $\sigma(op)$ 
applied by $op$ either to $S$ or to $M$. 
The linearization $L(\alpha)$ of $\alpha$ is the sequence formed by ordering the operations 
according to the order their associated step occurs in $\alpha$. 

For a \scn{} and an \adt{} operation $op$, $\sigma(op)$ is, respectively, 
the $\mathsf{read}$ and the $\mathsf{audit}$ steps applied to $M$. 
If $op$ is an \upd{} with input $x$ by process $p_i$, 
then let $vn_x$ be the sum of the sequence numbers $sn$ in each component of $S$ 
after $\emph{update}(x)$ has been applied to $S$ by $p_i$. 
$\sigma(op)$ is the first $\mathsf{write}$ to $M$ of a pair $(vn,view)$ 
with $vn \geq vn_x$ and $view[i] = x$. 
If there is no such $\mathsf{write}$, $op$ is discarded. 

We first show that the linearization $L(\alpha)$ respects 
the real-time order between operations.

  \begin{lemma} 
  \label{lem:Snapshot-real-time-o}
  If an operation $op$ completes before an operation $op'$ is invoked in $\alpha$, then $op$ precedes $op'$ in $L(\alpha)$. 
  \end{lemma}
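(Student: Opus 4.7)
The plan is to show that for every operation $op$ appearing in $L(\alpha)$, the associated step $\sigma(op)$ occurs in real time within $op$'s execution interval, i.e., between its invocation and its response. Once this is established, if $op$ completes before $op'$ is invoked, then $\sigma(op)$ occurs no later than the response of $op$, which strictly precedes the invocation of $op'$, which in turn occurs no later than $\sigma(op')$; hence $\sigma(op)$ precedes $\sigma(op')$ in $\alpha$, and by the definition of $L(\alpha)$ the desired ordering $op$ before $op'$ follows.

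For a \scn{} or \adt{} operation $op$, the claim is immediate from the definition: $\sigma(op)$ is $op$'s own $M.\mathsf{read}$ at line~\ref{lsp:snp_return} or $M.\mathsf{audit}$ at line~\ref{lsp:adt_M}, respectively, which is a sub-operation of $op$ performed entirely between its invocation and its response.

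The interesting case is an \upd{} operation $op$ with input $x$ by process $p_i$, for which $\sigma(op)$ is defined as the first writeMax on $M$ whose input pair $(vn,view)$ satisfies $view[i]=x$ and $vn \geq vn_x$. For the lower bound, any such writeMax is issued by a process only after that process has performed a scan of $S$ observing $(sn_i,x)$ in the $i$th component; by linearizability of $S$, such a scan is linearized after $p_i$'s $\mathsf{update}_i((sn_i,x))$ at line~\ref{lsp:upd_write_S}, so the qualifying writeMax is invoked after that step, and in particular after the invocation of $op$. For the upper bound, assuming $op$ completes, $op$'s own $M.\mathsf{writeMax}((vn,view))$ at line~\ref{lsp:upd_return} is itself a candidate: the pair it writes satisfies $view[i]=x$ (since $op$'s scan captures the value $x$ that $p_i$ has just written to component $i$, and $p_i$ performs no further write to $S$ before returning) and $vn \geq vn_x$ (component sequence numbers in $S$ are monotone nondecreasing and $op$'s scan is linearized after line~\ref{lsp:upd_write_S}). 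Hence $\sigma(op)$ is no later than the linearization point of $op$'s own writeMax, which lies within $op$'s execution interval.

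The mild subtlety — and the only thing one has to be careful about — is that the first qualifying writeMax may be performed by a concurrent \upd{} rather than by $op$ itself; the lower-bound argument above, resting on linearizability of $S$, handles this case uniformly, because the requirement $view[i]=x$ forces any such writeMax (regardless of who issued it) to follow $p_i$'s $\mathsf{update}_i$ at line~\ref{lsp:upd_write_S}.
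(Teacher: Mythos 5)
Your proof is correct and follows essentially the same route as the paper's: show that $\sigma(op)$ lies within $op$'s execution interval, trivially for \scn{} and \adt{}, and for \upd{} via a lower bound (no qualifying writeMax can precede $p_i$'s low-level update of $S$) plus an upper bound ($op$'s own writeMax qualifies, so $\sigma(op)$ is no later than it). The only cosmetic difference is that the paper derives the lower bound from monotonicity of the version number (any pair written to $M$ before the update has $vn < vn_x$), whereas you derive it from the $view[i]=x$ requirement together with linearizability of $S$; both are valid.
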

  \begin{proof}
  We show that that the linearization point of any operation $op$ is inside its execution interval;
  the claim is trivial for \scn{} or \adt{} operations.

  Suppose that $op$ is an \upd{} by a process $p_i$ with input $x$. 
  The sum of the sequence numbers in the components of $S$ increases each time an $\mathsf{update}$ is applied to it. 
  Hence, any pair $(vn,view)$ written to $M$ before $p_i$ has updated its component of $S$ to $x$ 
  is such that $vn < vn_x$. Therefore $\sigma(op)$, if it exists, is after $op$ starts. 
  If $op$ terminates, then it scans $S$ after updating the $i$th component of $S$ to $x$. 
  The $view$ it obtains and its associated version number satisfy $view[i] = x$ and $vn \geq vn_x$. 
  This pair is written to $M$. 
  If $\sigma(op)$ is not this step, then $\sigma(op)$ occurs before $op$ terminates. 
  If $op$ does not terminate and $\sigma(op)$ does exist, 
  it occurs after $op$ starts and thus within $op$'s execution interval. 
\end{proof}

\begin{lemma}\label{lem:snap-sequential-spec} 
  Each component $i$ of the view  returned by a \scn{} is the input of the last \upd{} by $p_i$ linearized before the \scn{}
  in $L(\alpha)$.
 \end{lemma}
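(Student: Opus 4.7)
Let the \scn{} in question return $view^*$, coming from the pair $(vn^*,view^*)$ it reads from $M$ at step $\sigma(\scn{})$. By linearizability of $M$, $(vn^*,view^*)$ is the lex-max pair linearized to $M$ before the read, and it was first written there by some \upd{} operation $\hat{op}$ that derived it from a scan of $S$ at some linearization point $\hat{t}$. Linearizability of $S$ tells us that $view^*[i]$ is the value of the $i$th component of $S$ at $\hat{t}$; call it $y$, and let $s_i^*$ be the accompanying sequence number. Then $y$ was placed in component $i$ of $S$ by the step $S.\mathsf{update}_i((s_i^*,y))$ of some \upd{}$(y)$ by $p_i$, which I will call $op_i$. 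The goal is to show that $op_i$ is the last \upd{} by $p_i$ that appears before the \scn{} in $L(\alpha)$.

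First I would establish that $op_i$ itself is linearized before the \scn{}. The sum of the component sequence numbers of $S$ is monotone non-decreasing in real time and equals $vn_y$ immediately after $op_i$'s $\mathsf{update}_i$ step; since $\hat{op}$'s scan of $S$ is linearized at $\hat{t}$, which is after $op_i$'s update, we have $vn^*\ge vn_y$. Together with $view^*[i]=y$, this means the write of $(vn^*,view^*)$ to $M$ performed by $\hat{op}$ already satisfies the criteria defining $\sigma(op_i)$, so $\sigma(op_i)$ occurs no later than that write, which in turn precedes $\sigma(\scn{})$.

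Next I would show that no other \upd{} $op_i'$ by $p_i$, with input $y'\neq y$ and sequence number $s_i'$, is linearized between $op_i$ and the \scn{}. Uniqueness of inputs gives $s_i'\neq s_i^*$, so it suffices to treat two cases according to $p_i$'s program order. If $s_i'>s_i^*$, then $op_i'$'s $\mathsf{update}_i$ is linearized after $op_i$'s, and any scan of $S$ witnessing $y'$ in component $i$ (required for $\sigma(op_i')$) is linearized strictly after $op_i'$'s update, hence strictly after $\hat{t}$; by monotonicity of the sum, such a scan produces a version number $vn'>vn^*$, so $\sigma(op_i')$ would write a pair lex-greater than $(vn^*,view^*)$ to $M$ before the \scn{}'s read, contradicting the max-register semantics. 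If instead $s_i'<s_i^*$, then $op_i'$ completes before $op_i$ starts in $p_i$'s program order; $op_i'$'s own $M.\mathsf{writeMax}$ is a candidate for $\sigma(op_i')$ and occurs before $op_i$'s $\mathsf{update}_i$ step, whereas $\sigma(op_i)$ necessarily occurs after $op_i$'s $\mathsf{update}_i$ (any write realizing $\sigma(op_i)$ arises from a scan of $S$ that sees $(s_i^*,y)$, hence after that update). Thus $\sigma(op_i')$ precedes $\sigma(op_i)$, placing $op_i'$ before $op_i$ in $L(\alpha)$.

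The main obstacle I anticipate is the second case analysis: bridging between the linearization order in $L(\alpha)$, defined through the chosen $\sigma$-steps on $M$, and the relative order of events on $S$, since $\sigma(op)$ for an \upd{} $op$ may be a write performed by a different operation that scanned $S$ at a favorable moment. The unique-inputs simplification and the monotonicity of the sum of component sequence numbers are what make the two worlds compatible, so I would state explicit auxiliary observations (scans of $S$ seeing $(s_i^*,y)$ in component $i$ are exactly those linearized between $op_i$'s and the next program-order update by $p_i$; and the sum at such a scan is always $\ge vn_y$) and then apply them uniformly across both cases.
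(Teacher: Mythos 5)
Your proof is correct and uses the same machinery as the paper's: strict monotonicity of the summed sequence numbers of $S$, the definition of $\sigma(op)$ for an \upd{} as the \emph{first} qualifying write to $M$, and the fact that the max register orders pairs by version number. The only difference is direction --- the paper fixes the last \upd{} by $p_i$ linearized before the \scn{} and shows in one pass that the returned view must carry its input, whereas you fix the returned value, locate the \upd{} that produced it, and prove it is linearized before the \scn{} with no other \upd{} by $p_i$ in between --- but this is the same argument read backwards, and your two-case maximality analysis (including the observation that $\sigma(op_i')$ for a later update would force a lex-greater pair into $M$ before the \scn{}'s read) is sound.
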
  
 \begin{proof}
  Consider a \scn{} operation $sop$ that returns  $view$, with $view[i] = x$. This view is read from the max register $M$ and has version number $vn$. Let $op$ be the last \upd{} by $p_i$ linearized before $sop$ in $L(\alpha)$, let $y$ be its input and $vn_y$ the version number (that is the sum of the sequence number stored in each component) of $S$ immediately after $S.\mathsf{update}(y)$ is applied by $p_i$. 
  
  We denote by $\sigma_u$ this low level $\mathsf{update}$. 
  Since the version number increases with each $\mathsf{update}$, 
  every pair $(vn',view')$ written into $M$ before $\sigma_u$ is such that $vn' < vn_y$.
  Also, every pair $(vn',view')$ written to $M$ after $\sigma_u$ and before $sop$ is linearized satisfies $vn' \geq vn_y \implies view'[i] = y$. Indeed, if $vn' \geq vn_y$, $view'$ is obtained by a $\mathsf{scan}$ of  $S$ applied after the $i$-th component is set to $y$. Hence, $view'[i] = y$ 
  because we assume that $op$ is the last \upd{} by $p_i$ linearized before $sop$ in $L(\alpha)$. 
  
  Finally, step  $\sigma(op)$ is a $\mathsf{write}$ of pair $(vn',view')$ to $M$ with $vn' \geq vn_y$ and $view'[i] =y$. $\sigma(op)$ occurs after $\sigma_u$ and before the max register $M$ is read by $sop$. It thus follows that the pair $(vn,view)$ read from $M$ in $sop$ satisfies $vn \geq vn_y$ and has been written after $\sigma_y$. Hence, $view[i] = y = x$. We conclude that each component $i$ of the view  returned by a \scn{} is the input of the last \upd{} by $p_i$ linearized before the \scn{}
  in $L(\alpha)$.
\end{proof}

\begin{lemma}
  \label{lem:snap_audit}
  An \adt{} reports $(j,view)$ if and only if $p_j$ has a $view$-effective\footnote{
    Namely, $p_i$ has a \scn{} operation that returns $view$ in all indistinguishable executions.} 
    \scn{} in $\alpha$.
  Each $\upd{}(v)$ is uncompromised by a scanner $p_j$ unless it has a $view$-effective \scn{} with one component of $view$ equal to $v$. Each \scn{} by $p_k$ is uncompromised by a scanner $p_j \neq p_k$. 
\end{lemma}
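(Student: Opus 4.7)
The plan is to reduce each of the three claims to the corresponding guarantee of the underlying auditable max register $M$ established in Section~\ref{sec:maxreg}, exploiting the key structural observation that a scanner interacts with the implementation exclusively through $M$ (never through $S$), and that both \scn{} and \adt{} are thin wrappers around $M.\mathsf{read}()$ and $M.\mathsf{audit}()$, respectively.

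For the first claim, a \scn{} by $p_j$ consists of a single $M.\mathsf{read}()$ returning a pair $(vn,view)$ followed by projecting out the $view$ component. By Definition~\ref{def:non_effective}, a \scn{} becomes $view$-effective in a prefix exactly when the underlying \rd{} of $M$ has become $(vn,view)$-effective for some $vn$ (indistinguishability to $p_j$ is the same property in both views, since projection is deterministic). The \adt{} operation invokes $M.\mathsf{audit}()$ and returns $\{(j,view):(j,(vn,view))\in MA\}$. Hence \adt{} reports $(j,view)$ iff $M$'s audit reports $(j,(vn,view))$ for some $vn$, which by the audit property of $M$ (from Section~\ref{sec:maxreg}) holds iff $p_j$ has a $(vn,view)$-effective \rd{} of $M$, i.e.\ a $view$-effective \scn{}.

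For the second claim, consider an $\upd{}(v)$ by $p_i$. The only shared accesses it performs are to $S$ (invisible to a scanner $p_j$, since \scn{} touches only $M$) and a single $M.\wrtm{}((vn,view))$ with $view[i]=v$. Assume $p_j$ has no $view$-effective \scn{} with some $view[i]=v$; then, by the equivalence established in the first claim combined with effective-\rd{} considerations, $p_j$ has no $(vn',view')$-effective \rd{} of $M$ with $view'[i]=v$. Applying the max register's ``writes uncompromised by readers'' property iteratively to each such \wrtm{} call, we obtain an execution $\alpha'$ indistinguishable from $\alpha$ to $p_j$ in which every such \wrtm{} is replaced by one with a different input. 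We can consistently alter the corresponding \upd{} inputs to some $v'\neq v$, which only affects $S$ and $p_i$'s local state—both invisible to $p_j$—so $\alpha'$ contains no \upd{}$(v)$ and $\alpha'\sim_{p_j}\alpha$.

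For the third claim, a \scn{} $sop$ by $p_k$ is realized by a single $M.\mathsf{read}()$. The max register's ``reads uncompromised by other readers'' property (Section~\ref{sec:maxreg}) yields an execution $\beta$ with $\beta\sim_{p_j}\alpha$ in which $p_k$ does not invoke that \rd{} of $M$; removing this \rd{} is precisely removing the invocation of $sop$ (the rest of $sop$ is purely local), so $sop$ is uncompromised by $p_j$. The main obstacle is the composition argument in the second claim: we must ensure that the iterated application of the max register's per-operation uncompromised guarantee can be combined into a single indistinguishable execution, which works because the modifications are local to writers and to $S$, both outside $p_j$'s observational footprint, so modifying one \wrtm{} at a time preserves indistinguishability at every stage.
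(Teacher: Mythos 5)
Your proposal is correct and follows essentially the same route as the paper's proof: both reduce all three claims to the corresponding guarantees of the auditable max register $M$, using the observations that \scn{} and \adt{} are single-step wrappers around $M$'s read and audit (so a \scn{} is effective exactly when its underlying read of $M$ is), and that the snapshot $S$ and the writers' local state lie outside a scanner's observational footprint, so the iterated replacement of \wrtm{} inputs with $view'[i]=v'\neq v$ yields an indistinguishable execution with no $\upd{}(v)$. Your treatment of the iteration in the second claim and of the third claim is, if anything, slightly more explicit than the paper's.
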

\begin{proof}
  A \scn{} applies a single operation on  shared objects, namely a $\mathsf{read}$ on $M$. It is linearized with this step, which determines  the view it returns. Therefore, a \scn{} is linearized if and only if it is effective. Hence $(j,view$) is reported by an \adt{} if and only if $p_j$ has a $view$-effective \scn{}.

  Let $v$ be the input of an \upd{} operation by some process $p_i$. If there is no $view$ with $view[i] = v$ written to $M$ (line 5), \upd{}($v$) can be replaced by \upd{}($v'$), $v' \neq v$ in an execution $\alpha'$, $\ind{\alpha}{p_j}{\alpha'}$. Otherwise, note that each $sview$ for which  $p_j$ has a $sview$-effective \scn{}, we have  $sview[i] \neq v$. Suppose that  $view$, with $view[i] = v$ is written to $M$ in $\alpha$. Then we can replace $view$ with an array $view'$, identical to $view$ except that $view'[i] = v' \neq v$ an execution $\ind{\alpha'}{p_j}{\alpha}$. This is because the write of $view$  is not compromised by $p_j$ in $M$. By repeating this procedure until all writes to $M$  of $view$s with $view[i] = v$ have been eliminated leads to an execution $\beta, \ind{\beta}{p_j}{\alpha}$ in which there is no \upd{}($v$). 
\end{proof}

\begin{theorem}
  \label{thm:audtible_snapshot}
  Alg.~\ref{alg:snapshot} is a wait-free linearizable implementation 
  of an auditable snapshot object which audits effective \scn{} operations,
  in which \scn{} and \upd{} are uncompromised by scanners.
\end{theorem}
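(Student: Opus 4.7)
The plan is to assemble the theorem from Lemmas~\ref{lem:Snapshot-real-time-o}, \ref{lem:snap-sequential-spec} and \ref{lem:snap_audit}. First I would dispatch wait-freedom by inspection: a \scn{} performs a single $\mathsf{read}$ on the auditable max register $M$; an \adt{} performs a single $\mathsf{audit}$ on $M$; an \upd{} performs one component update and one $\mathsf{scan}$ on $S$ followed by one $\mathsf{writeMax}$ on $M$. Since the implementation of $M$ from Section~\ref{sec:maxreg} is wait-free and the underlying non-auditable snapshot $S$ admits a wait-free implementation~\cite{AfekADGMS93}, each high-level operation terminates in a finite number of its own steps.

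For linearizability, the order $L(\alpha)$ is already defined via the mapping $\sigma(op)$. I would verify three properties. Every completed operation has $\sigma(op)$ defined: for \scn{} and \adt{} this is immediate; for a completed $\upd{}(x)$ by $p_i$ with version number $vn_x$, the final $M.\mathsf{writeMax}((vn_x,view))$ with $view[i]=x$ supplies the required witness. Real-time order is then preserved by Lemma~\ref{lem:Snapshot-real-time-o}, and the sequential specification of the snapshot object---each component of the returned view equals the last preceding \upd{} to that component---follows directly from Lemma~\ref{lem:snap-sequential-spec}. The three auditability clauses of the theorem are precisely the conclusions of Lemma~\ref{lem:snap_audit}, and they lift from the analogous guarantees of $M$ because every interaction of a \scn{} or \adt{} with the shared memory goes through a single step applied to $M$.

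The main obstacle is handling non-terminating \upd{} operations within the linearizability argument. Such an operation is discarded from $L(\alpha)$ precisely when $\sigma(op)$ does not exist, i.e.\ when no pair $(vn,view)$ with $vn \geq vn_x$ and $view[i]=x$ is ever written to $M$. I would argue that this discarding is consistent with Lemma~\ref{lem:snap-sequential-spec}: if no such pair ever appears in $M$, then no \scn{} ever returns a view whose $i$-th component equals $x$, so the discarded \upd{} is never ``visible'' to any linearized \scn{}. Conversely, if such a pair is eventually written, then $\sigma(op)$ exists and falls inside $op$'s execution interval, by the same monotonicity of version numbers used in the proof of Lemma~\ref{lem:Snapshot-real-time-o}. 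Once this edge case is handled, the three lemmas compose cleanly into the full theorem.
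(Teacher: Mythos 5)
Your proposal is correct and follows essentially the same route as the paper: wait-freedom by code inspection (relying on the wait-free implementations of $M$ and $S$), linearizability via the mapping $\sigma(op)$ together with Lemmas~\ref{lem:Snapshot-real-time-o} and~\ref{lem:snap-sequential-spec}, and the auditability clauses from Lemma~\ref{lem:snap_audit}. Your explicit treatment of discarded non-terminating \upd{} operations is a slight elaboration of what the paper handles implicitly when defining $\sigma(op)$, but it does not change the argument.
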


\subsection{Versioned Objects}

Snapshot objects are an example of a \emph{versioned type}~\cite{DenysyukW15},
whose successive states are associated with unique and increasing version numbers. 
Furthermore, the version number can be obtained from the object itself, 
without resorting to external synchronization primitives. 
Essentially the same construction can be applied to any versioned object.

An object $t \in \m{T}$ is specified by a tuple  $(Q,q_0,I,O,f,g)$, 
where $Q$ is the state space,  
$I$ and $O$ are respectively the input and output sets of $\mathsf{update}$ and $\mathsf{read}$ operations. 
$q_0$ is the initial state and functions $f: Q \to O$ and 
$g: I\times{}Q \to Q$ describes the sequential behavior of $\mathsf{read}$ and $\mathsf{update}$. 
A $\mathsf{read}()$ operation leaves the current state $q$ unmodified and returns $f(q)$. 
An $\mathsf{update}(v)$, where $v \in I$ changes 
the state $q$ to $g(v,q)$ and does not return anything.

A linearizable \emph{versioned} implementation of a type $t \in \m{T}$ 
can be transformed into a strongly-linearizable one~\cite{DenysyukW15}, 
as follows. 
Let $t = (Q,q_0,I,O,f,g)$ be some type in $\m{T}$. 
Its \emph{versioned} variant $t' = (Q',q_0',I',O',f',g')$ has 
$Q' = Q \times \mathbb{N}$, $q_0' = (q_0,0)$, $I' = I$, $O' = O \times \mathbb{N}, f':Q' \to O\times \mathbb{N}$ and $g': I \times{}Q' \to Q'$. 
That is, the state of $t'$  is augmented  with a version number, 
which increases with each $\mathsf{update}$ and is 
returned by each $\mathsf{read}$: $f'((q,vn)) = (f(q),vn)$ and $g'((q,vn)) = (g(q),vn')$ with $vn < vn'$. 

A versioned implementation of a type $t \in \m{T}$ can be transformed into 
an auditable implementation of the same type using an auditable register. 
The construction is essentially the same as presented in Algorithm~\ref{alg:snapshot}.
In the auditable variant $T_a$ of $T$, to perform an \upd($v$), 
a process $p$ first update the versioned implementation $T$ before reading it. 
$p$ hence obtains a pair $(o,vn)$ that it writes to the auditable max register $M$. 
For a \rd{}, a process returns what it reads from $M$. 
As \rd{} amounts to read $M$, to perform an \adt{} a process simply audit the max-register $M$. 
As we have seen for snapshots, 
$T_a$ is linearizable and wait-free. 
Moreover, $T_a$ inherits the advanced properties of the underlying max-register:
If $M$ is implemented with Algorithm~\ref{alg:maxreg}, 
then it correctly audits effective \rd{}, and \rd{} and \upd{} are uncompromised.

\begin{theorem}[versioned types are auditable]
  \label{thm:auditable_versioned}
  Let  $t \in \m{T}$, and let $T$ be a  versioned implementation of $t$ that is linearizable and wait-free. There exists a wait-free, linearizable and auditable implementation of $t$ from $T$ and auditable max-registers in which \rd{} and \upd{} are uncompromised by readers  and \adt{} reports only effective \rd{} operations. 
\end{theorem}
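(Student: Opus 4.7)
The plan is to adapt the proof strategy of Theorem~\ref{thm:audtible_snapshot} to a general versioned type $t = (Q,q_0,I,O,f,g)$ with wait-free linearizable versioned implementation $T$. The construction $T_a$ is: \upd{}$(v)$ applies $T.\mathsf{update}(v)$, then $T.\mathsf{read}()$ to obtain a pair $(o,vn)$, and writes $(vn,o)$ to the auditable max register $M$ (ordered lexicographically by $vn$); \rd{}$()$ returns $o$ from $(vn,o) \gets M.\mathsf{read}()$; \adt{}$()$ returns $\{(j,o) : (j,(*,o)) \in M.\mathsf{audit}()\}$. Wait-freedom of $T_a$ follows immediately from wait-freedom of $T$ and $M$ since each high-level operation performs a bounded number of their operations.

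For linearization, I would mirror the choice of linearization points from Lemma~\ref{lem:Snapshot-real-time-o}: a \rd{} or \adt{} is linearized at its single step on $M$; an \upd{}$(v)$ that obtains version $vn_v$ from $T$ is linearized at $\sigma(op)$, the first $\mathsf{writeMax}$ to $M$ of a pair $(vn',o')$ with $vn' \geq vn_v$ (discarded if none exists). Because version numbers of $T$ strictly increase per $T.\mathsf{update}$, any pair written to $M$ before the operation's own $T.\mathsf{update}$ step has version $< vn_v$, so $\sigma(op)$ lies after the invocation; if the \upd{} completes, it writes $(vn_v,o_v)$ to $M$ itself, bounding $\sigma(op)$ by the response. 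Real-time order preservation then follows exactly as in Lemma~\ref{lem:Snapshot-real-time-o}.

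The main technical step is establishing the sequential specification of $t$. For every \rd{} in $L(\alpha)$ returning $o$ from a pair $(vn,o)$ read from $M$, I would prove that $o = f(q)$, where $q$ is the state reached from $q_0$ by sequentially applying (via $g$) the inputs of all \upd{}s linearized before this \rd{}. The argument rests on two facts: (i) because $M$ is a max register ordered by $vn$, the pair read has maximum version among pairs written to $M$ before $\sigma(\text{\rd{}})$, and (ii) by the versioned property of $T$, this pair was produced by a $T.\mathsf{read}$ performed in the unique state of $T$ of version $vn$, namely the state obtained by applying precisely those $T.\mathsf{update}$s whose resulting version is at most $vn$. The main obstacle is proving that this set of $T.\mathsf{update}$s corresponds exactly to the \upd{}s of $T_a$ linearized before the \rd{}: I would show this by induction along $L(\alpha)$, using that $\sigma(\text{\upd{}}(v))$ is placed at the first $M$-write of a pair with version $\geq vn_v$, so the ordering by version on $M$ coincides with the order of the underlying $T.\mathsf{update}$s.

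Finally, the auditability properties lift from $M$, because a \rd{} or \adt{} of $T_a$ interacts with shared memory through a single corresponding operation on $M$. By Lemma~\ref{lem:eff_read_lin} applied to $M$, a \rd{} of $T_a$ is $o$-effective iff its underlying $M.\mathsf{read}$ is $(vn,o)$-effective, and \adt{} is defined as the projection of $M.\mathsf{audit}$, so \adt{} reports exactly the effective \rd{}s. Uncompromisedness of \rd{} by other readers reduces immediately to uncompromisedness of $M.\mathsf{read}$ by other readers. For uncompromisedness of \upd{}$(v)$ by a reader, I would replace the corresponding $\mathsf{writeMax}$s to $M$ of pairs containing the output induced by $v$ with $\mathsf{writeMax}$s of different pairs, using the uncompromisedness of $M.\mathsf{writeMax}$ as in Lemma~\ref{lem:snap_audit}, yielding an indistinguishable execution in which the original \upd{}$(v)$ is not invoked.
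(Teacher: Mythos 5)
Your proposal follows essentially the same route as the paper, which proves this theorem by reusing the snapshot construction and argument of Theorem~\ref{thm:audtible_snapshot} verbatim: \upd{} writes its versioned output to the auditable max register $M$, \rd{} and \adt{} each reduce to a single operation on $M$, and the auditability and uncompromisedness properties are lifted directly from $M$. Your only substantive deviation is dropping the snapshot-specific condition $view[i]=x$ from the linearization point $\sigma(op)$ of an \upd{} and relying purely on the version-number threshold, which is the natural (and correct) generalization since version numbers are unique and strictly increasing; otherwise the decomposition, the key lemmas, and the lifting argument all match the paper's.
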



\section{Discussion}

This paper introduces novel notions of auditability that deal 
with curious readers. 
We implement a wait-free linearizable auditable register 
that tracks effective reads 
while preventing unauthorized audits by readers.
This implementation is extended into an auditable max register,
which is then used to implement auditable atomic snapshots and versioned types. 

Many open questions remain for future research.
An immediate question is how to implement an auditable register in which
\emph{only auditors can audit}, i.e., reads are uncompromised by writers. 
%
A second open question is how to extend auditing to additional objects.
These can include, for example, \emph{partial snapshots}~\cite{AttiyaGR08} in which a reader 
can obtain an ``instantaneous'' view of a subset of the components.
Another interesting object is a \emph{clickable} atomic 
snapshot~\cite{JayantiJJ24}, 
in particular, variants that allow arbitrary operations on the 
components and not just simple updates (writes).

The property of uncompromising other accesses can be seen as 
an \emph{internal} analog of \emph{history independence}, 
recently investigated for concurrent objects~\cite{AttiyaBFOS24}.
A history-independent object does not allow an external observer, \emph{having access 
to the complete system state}, to learn anything about operations applied to the object,
but only its current state. 
Our definition, on the other hand, does not allow an internal observer, 
e.g., a reader that only reads shared base objects, 
to learn about other \rd{} and \wrt{} operations applied in the past. 
An interesting intermediate concept would allow several readers \emph{collude} 
and to combine the information they obtain in order to learn more than what they are allowed to. 



\begin{acks}
H. Attiya is supported by the Israel Science Foundation (grant number 22/1425). 
A. Fernández Anta has been funded by project PID2022-140560OB-I00 (DRONAC) funded by MICIU / AEI / 10.13039 / 501100011033 and ERDF, EU. 
A. Milani is supported by the France 2030 ANR project ANR-23-PECL-0009 TRUSTINCloudS. 
C. Travers is supported in part by ANR projects DUCAT (ANR-20-CE48-0006). 
\end{acks}

\bibliographystyle{ACM-Reference-Format}
\bibliography{refs}

\appendix
\section{Additional Proofs for Algorithm~\ref{alg:cas_mwa} (Auditable Register)}
\label{app:proof of register}

Simple code inspection (line~\ref{line:mwmrma:reader_cas}, 
line~\ref{line:mwmrma:writer_cas}, and line~\ref{line:mwmrma:audit_return})
shows:

\begin{invariant}
\label{obs:R_S_increase 1}
The successive values of $SN$ are $0,1,2,\ldots$.
\end{invariant}

\begin{restatable}{invariant}{InvTwo}
\label{obs:R_S_increase 2}
The successive values of $R.seq$ are strictly increasing. 
\end{restatable}


\begin{proof}
The proof is by induction on the length of the execution;
the invariant clearly holds for an empty execution.
Consider a step that changes $R.seq$ to $x$,
which only happens when a successful \cas{} is applied by some process $p$,
in line~\ref{line:mwmrma:writer_cas}.
Before this step is applied, $p$ reads $R$ (line~\ref{line:mwmrma:writer_read_sn}) to make sure that $R.seq$ is strictly smaller than $x$ (otherwise, the repeat loop terminates without applying a \cas{} to $R$ (line~\ref{line:mwmrma:writer_break})). 
If the \cas{} of line~\ref{line:mwmrma:writer_cas} succeeds, $R.seq$ has not been modified since it was last read by $p$, and its value increases to $x$. 
\end{proof}


\begin{restatable}{lemma}{LemmaSNR}
\label{lem:SN_R}
After any finite execution $\alpha$, 
and for any integer $x \geq 0$, (1) $SN = x \implies R.seq \geq x$
and, (2) $R.seq = x \implies SN \geq x-1$.
\end{restatable}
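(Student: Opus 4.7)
The plan is to prove both implications by induction on the length of the execution, relying heavily on Invariants~\ref{obs:R_S_increase 1} and \ref{obs:R_S_increase 2}. Both invariants will be used as monotonicity hooks: once $R.seq$ reaches a value $x$ it stays $\geq x$, and once $SN$ reaches $x$ it stays $\geq x$. So it suffices, for each implication, to argue that at the exact step that first makes the antecedent hold, the consequent is also established.

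For part~(1), I would inspect the only three places that modify $SN$: the reader CAS at line~\ref{line:mwmrma:reader_cas}, the writer CAS at line~\ref{line:mwmrma:writer_return}, and the auditor CAS at line~\ref{line:mwmrma:audit_return}. Each is of the form $SN.\cas(x-1,x)$ for some $x$ obtained locally. In the reader case, $x = sn$ was fetched from $R$ at line~\ref{line:mwmrma:read_fetch}, so $R.seq = x$ at some earlier configuration; by Invariant~\ref{obs:R_S_increase 2}, $R.seq \geq x$ at the moment of the $SN$ CAS. In the auditor case, $x = rsn$ was read from $R$ at line~\ref{line:mwmrma:audit_read}, and the same monotonicity argument applies. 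In the writer case, the writer exits the repeat loop either via a successful $R.\cas$ at line~\ref{line:mwmrma:writer_cas} that sets $R.seq$ to $sn = x$, or via the break at line~\ref{line:mwmrma:writer_break} after reading $R.seq \geq sn = x$; in both sub-cases, $R.seq \geq x$ by Invariant~\ref{obs:R_S_increase 2}.

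For part~(2), the only step that changes $R.seq$ is a successful $R.\cas$ executed by some writer $p$ at line~\ref{line:mwmrma:writer_cas}. Suppose this step sets $R.seq$ to $x$. Then the local variable $sn$ of $p$ equals $x$, which means that at line~\ref{line:mwmrma:writer_read_sn}, $p$ read some value $y$ from $SN$ and set $sn = y+1 = x$, so $y = x-1$. By Invariant~\ref{obs:R_S_increase 1}, $SN$ is monotonically non-decreasing, hence $SN \geq x-1$ at the moment the successful $R.\cas$ takes effect, and thereafter.

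The main obstacle is essentially bookkeeping: the two quantities $SN$ and $R.seq$ are updated by different primitives at different lines, and the writer's local $sn$ is populated far earlier than its eventual $R.\cas$. However, because both invariants give monotonicity, I only need to show the relation at the transition step, and the relation is then preserved forever. A minor technical care needed is to note that once $R.seq = x$ first occurs, later updates to $R$ only push $R.seq$ even higher (so the implication in (2) continues to hold), and once $SN = x$ first occurs, both $SN$ and $R.seq$ can only grow (so the implication in (1) persists). With these observations the inductive step reduces to the per-line case analysis sketched above.
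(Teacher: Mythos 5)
Your proof is correct and follows essentially the same route as the paper's: an induction on the execution length that examines each step modifying $SN$ or $R.seq$, establishes the required inequality at that transition using the local value the process previously obtained, and then appeals to the monotonicity Invariants~\ref{obs:R_S_increase 1} and~\ref{obs:R_S_increase 2} to preserve it. The per-line case analysis you sketch matches the paper's almost exactly.
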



\begin{proof}
The proof is by induction on the length of $\alpha$, 
and both claims trivially hold after the empty execution, 
since $R.seq = SN = 0$ in the initial configuration. 
Assume that both claims hold after a finite prefix $\alpha$,
and consider the first step that modifies $R.seq$ or $\mathit{SN}$. 

If the step modifies $\mathit{SN}$, then it is 
a successful \cas{} applied by some process $p$ when performing 
a \rd{} (line~\ref{line:mwmrma:reader_cas}), 
a \wrt{} (line~\ref{line:mwmrma:writer_return}), or 
an \adt{} (line~\ref{line:mwmrma:audit_return}). 
Let $x$ be the new value of $SN$ after the \cas{} is applied. 
If $p$ is performing a \rd{} or an \adt{}, 
$p$ has previously read $x$ from $R.seq$ (line~\ref{line:mwmrma:read_fetch} or line~\ref{line:mwmrma:audit_read}). 
Since the values of $R.seq$ do not decrease, 
$R.seq \geq x$ after the successful \cas{} applied to $SN$ by $p$.
If $p$ is performing a \wrt{}, it has previously read $x$ from $R.seq$ (line~\ref{line:mwmrma:writer_read_R}) or has changed its value to $x$ (line~\ref{line:mwmrma:writer_cas}) by applying a successful \cas{}. 
Since successive values of $R.seq$ are increasing (Invariant~\ref{obs:R_S_increase 2}), $R.seq \geq x$ after $p$ changes $SN$ to $x$. 
(2) also holds since it holds before this step, 
and continues to hold because the value of $SN$ increases. 

If the step sets $R.seq$ to $x$, then it is 
a successful \cas{} applied to $R$ by some process $p$ 
while performing a \wrt{} operation  (line~\ref{line:mwmrma:writer_cas}). 
Before applying this \cas{}, $p$ reads $x-1$ from $SN$ (line~\ref{line:mwmrma:writer_read_sn}). 
Since successive values of $SN$ are increasing (Invariant~\ref{obs:R_S_increase 1}), 
$SN \geq x-1$ after this step. 
(1) also holds after this step since $R.seq$ 
is changed to a value larger than $x$.
\end{proof}

\begin{restatable}{lemma}{ReadOnce}
  \label{lem:reader_once}
  Let $\sigma,\sigma'$ be two \fx{} applied to $R$ by the same reader $p$. Let $(sr,vr,br)$ and $(sr',vr',br')$ be the values of $R$ immediately before these steps are applied, respectively, then $sr \neq sr'$. 
\end{restatable}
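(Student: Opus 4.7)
I plan to prove the stronger fact that consecutive \fx{} operations applied to $R$ by the same reader $p$ strictly increase the observed value of $R.seq$; the lemma then follows by transitivity. Order $p$'s \fx{} operations on $R$ as $\sigma_1, \sigma_2, \ldots$, let $sr_i$ denote $R.seq$ immediately before $\sigma_i$, and let $R_a$, $R_b$ be the \rd{} operations containing $\sigma_i$ and $\sigma_{i+1}$ respectively. The target is $sr_{i+1} \geq sr_i + 1$.

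First, I would show that by the end of $R_a$, $SN \geq sr_i$. Lemma~\ref{lem:SN_R}(2) applied at $\sigma_i$ yields $SN \geq sr_i - 1$, and by Invariant~\ref{obs:R_S_increase 1} $SN$ never decreases. Hence when $R_a$ executes $SN.\cas(sr_i-1, sr_i)$ at line~\ref{line:mwmrma:reader_cas} the register is in $\{sr_i-1, sr_i, sr_i+1, \dots\}$; after this step $SN \geq sr_i$ irrespective of whether the \cas{} succeeded. Moreover, after $R_a$ completes, $p$'s local variable $prev\_sn$ equals $sr_i$, and no silent \rd{} by $p$ alters it before $R_b$ starts.

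Second, I would trace $R_b$: its read of $SN$ at line~\ref{line:mwmrma:read_sn} returns some $\hat{sn} \geq sr_i$ by monotonicity. Since $R_b$ actually performs $\sigma_{i+1}$, it did not take the early-return branch at line~\ref{line:mwmrma:no_new_write}, so $\hat{sn} \neq prev\_sn = sr_i$, whence $\hat{sn} \geq sr_i + 1$. Lemma~\ref{lem:SN_R}(1) applied at this read of $SN$, combined with Invariant~\ref{obs:R_S_increase 2} (monotonicity of $R.seq$), then gives $sr_{i+1} \geq \hat{sn} \geq sr_i + 1$.

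The main subtlety I expect is bookkeeping the non-atomic coupling between $R$ and $SN$: their sequence-number fields evolve in lock-step only up to a one-step lag, and the whole chain of inequalities rests on using both directions of Lemma~\ref{lem:SN_R} together with the two monotonicity invariants. The algorithmic insight that makes the argument go through is that $R_a$'s helping $SN.\cas$ pushes $SN$ up to $sr_i$ in time for $R_b$'s early-exit check to correctly detect ``no new write'' whenever the sequence number in $R$ has not advanced.
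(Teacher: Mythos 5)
Your proposal is correct and follows essentially the same route as the paper's proof: after the first \fx{}, the helping $SN.\cas$ forces $SN \geq sr$, so the next direct \rd{} must see a strictly larger value in $SN$ (else it returns early at line~\ref{line:mwmrma:no_new_write}), and Lemma~\ref{lem:SN_R}(1) plus monotonicity of $R.seq$ then force a strictly larger sequence number at the second \fx{}. Restricting to consecutive \fx{} steps and concluding by transitivity is only a minor (and slightly cleaner) repackaging of the paper's argument for an arbitrary pair.
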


\begin{proof}
  Suppose that $\sigma$ is applied before $\sigma'$. By the code, they are applied when $p$ is performing two distinct \rd{} operations denoted  $rop$ and $rop'$ respectively. 
  By line~\ref{line:mwmrma:read_fetch}, after $\sigma$, the value of the local variable $sn$ at process $p$ is $sr$ and by Lemma~\ref{lem:SN_R}, $SN \geq sr - 1$.  Then, $p$ applies a \cas{} (line~\ref{line:mwmrma:reader_cas}) with parameter $(sr-1,sr)$. After this step, the value of  $SN$ is thus  $\geq sr$, as successive value of $SN$ are increasing. 
  Note also that the local variable  $prev\_sn$ is set to $sr$. 

  In $rop'$, $p$  reads from $SN$ (line~\ref{line:mwmrma:read_sn}) a value $sn' > sr$. Otherwise, $sn' = prev\_sn$ and no \fx{} is applied to $R$. 
  But this means by Lemma~\ref{lem:SN_R}  that the sequence number stored in $R$ is also strictly greater than $sr$. 
  Therefore, as the successive sequence number stored in $R$ are increasing, $sr' > sr$. 
\end{proof}

The next lemma shows that every value is associated 
with a unique sequence number in $R$. 

\begin{restatable}{lemma}{LemmaR}
  \label{lem:R}
  Let $\alpha$ be a finite execution. There exists $k \geq 0$ and inputs  of \wrt{} operations $v_1,\ldots,v_k$ such that the sequence of values of the first two fields $(R.seq, R.val)$ of $R$  is   $(0,v_0), (1,v_1), \ldots, (k,v_k)$. 
\end{restatable}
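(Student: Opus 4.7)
The plan is to proceed by induction on the number of successful CAS operations applied to $R$ in $\alpha$. Before any such CAS, the configuration has $(R.seq, R.val) = (0, v_0)$, establishing the base case. The first observation is that these two fields can only change together, and only at line~\ref{line:mwmrma:writer_cas}: the only other primitive applied to $R$ is \fx{} at line~\ref{line:mwmrma:read_fetch}, which modifies only the $R.bits$ field. Consequently, the sequence of distinct values of $(R.seq, R.val)$ is indexed by the successful CAS events on $R$.

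For the inductive step, assume that after some prefix of $\alpha$ the sequence is $(0, v_0), (1, v_1), \ldots, (k, v_k)$, so that the current value of $R.seq$ is $k$. Consider the next successful CAS, performed by some writer $p$ with input $v$. Let $(lsn, lval, bits)$ be the triple that $p$ read from $R$ at line~\ref{line:mwmrma:writer_read_R} in the iteration in which its CAS succeeds, and let $x$ be the value $p$ previously read from $SN$ at line~\ref{line:mwmrma:writer_read_sn}, so that $sn = x+1$ is the new sequence number installed by the CAS. Since the CAS succeeds with old-value $(lsn, lval, bits)$, at that very moment $R = (lsn, lval, bits)$, and by the induction hypothesis $lsn = k$. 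It remains to show $sn = k+1$, i.e., $x = lsn$.

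The crux of the argument, and what I expect to be the main obstacle, is this equality $x = lsn$. For the upper bound $lsn \leq x$: if $lsn \geq sn = x+1$, then $p$ would have broken out of the repeat loop at line~\ref{line:mwmrma:writer_break} without ever applying the CAS; hence $lsn < x+1$. For the lower bound $lsn \geq x$: at the instant $p$ read $x$ from $SN$, Lemma~\ref{lem:SN_R}(1) yields $R.seq \geq x$ at that instant; by Invariant~\ref{obs:R_S_increase 2}, the values of $R.seq$ never decrease, so at the later moment of the successful CAS we still have $R.seq = lsn \geq x$. Combining the two bounds gives $lsn = x$ and therefore $sn = k+1$. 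Setting $v_{k+1} = v$, which is by definition the input of $p$'s \wrt{} operation, completes the induction.

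Once the lock-step identity $sn = lsn + 1$ is in place, the rest is straightforward bookkeeping. The only real subtlety is this coupling between $R.seq$ and $SN$, which is precisely what requires combining the local code-based observation that the loop did not break with the global invariant of Lemma~\ref{lem:SN_R}; strict monotonicity of $R.seq$ alone would be insufficient.
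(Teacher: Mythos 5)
Your proof is correct and follows essentially the same route as the paper's: an induction driven by the observation that only the successful \cas{} at line~\ref{line:mwmrma:writer_cas} can alter $(R.seq,R.val)$, with the new sequence number pinned to $k+1$ by combining the break test at line~\ref{line:mwmrma:writer_break} (upper bound) with Lemma~\ref{lem:SN_R}(1) and the monotonicity of $R.seq$ from Invariant~\ref{obs:R_S_increase 2} (lower bound). The only cosmetic difference is that you evaluate $R.seq \geq x$ at the moment of the successful \cas{} rather than at the read of $R$ in line~\ref{line:mwmrma:writer_read_R}, which is equivalent.
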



\begin{proof}
  Note that the initial value of $R$ is $(0,v_0)$. Suppose that there exists inputs of \wrt{} operations $v_1,\ldots,v_\ell$ such that the first $\ell+1$ values of the couple $(R.seq,R.val)$ are
  $(0,v_0), \ldots, (\ell,v_\ell)$. If $(R.seq,R.val)$ no longer changes after it is set to  $(\ell,v_\ell)$, the Lemma is true. Otherwise, let us consider the first step $\sigma$ that changes $R$ from $(\ell,v_\ell,b)$ to some triple $(\ell',v',b')$ with $(\ell,v_\ell) \neq (\ell',v')$. This step is a successful  \cas{} applied during a \wrt{} whose input is $v'$ by some process $p$ at line~\ref{line:mwmrma:writer_cas}, since this is the only place in which  $R.seq$ or $R.val$ is changed (each \fx{} applied to $R$ by a reader changes only one of the last $m$ bits of $R$, leaving the first two fields unmodified). By the code, at the beginning of this  \wrt{}, $p$ reads $\ell'-1$ from $SN$  (line~\ref{line:mwmrma:writer_read_sn}). Hence, it follows from Lemma~\ref{lem:SN_R} that immediately after this read, $\ell'-1 \leq R.seq$. 

  Before  $(R.seq,R.val)$ is changed to $(\ell',v')$, $R$ is read (line~\ref{line:mwmrma:writer_read_R}). The triple returned is $(\ell,v_\ell,b)$, since otherwise the \cas{} is not successful as it is the first step in which   $(R.seq,R.val)$ changes from  $(\ell,v_\ell)$ to a different value. Note that $\ell < \ell'$, since otherwise $p$ exits the repeat loop without applying a \cas{} to $R$ (line~\ref{line:mwmrma:writer_break}). 

  Moreover, as the read of $SN$ (line~\ref{line:mwmrma:writer_read_sn}, after which we have $\ell'-1 \leq R.seq$) occurs before $(\ell,v_\ell,b)$ is read from $R$, and as sequence numbers in $R$ are increasing (Invariant~\ref{obs:R_S_increase 2}), then
  $\ell'-1 \leq \ell$. Hence $\ell'-1 \leq \ell < \ell'$, from which we conclude that $\ell' = \ell+1$. Therefore,  after step $\sigma$, the new value of $(R.seq,R.val)$ is  $(\ell+1,v_{\ell+1})$, where  $v_{\ell+1} = v'$ is the input of $p$'s \wrt{} operation. 
\end{proof}

\begin{lemma}
  \label{lem:unique_visible}
  Let $x \geq 0$ such that there is in $H'$ a \rd{} or \wrt{} operation associated with sequence number $x$. There exists a unique visible \wrt{} operation $wop$ with sequence number $sn(wop) = x$. 
\end{lemma}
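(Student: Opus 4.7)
My plan is to establish uniqueness and existence separately, both by leveraging Invariant~\ref{obs:R_S_increase 2} (strictly increasing values of $R.seq$) together with Lemmas~\ref{lem:SN_R} and~\ref{lem:R}.

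For \textbf{uniqueness}, I observe that any visible \wrt{} operation $wop$ with $sn(wop) = x$ must, by the classification rule, have read $x-1$ from $SN$ and then applied a successful \cas{} to $R$ that changed $R.seq$ from some value $lsn < x$ to $x$ (with the triple $(x,v,\mathit{rand}_x)$). Since successive values of $R.seq$ are strictly increasing (Invariant~\ref{obs:R_S_increase 2}, and more explicitly Lemma~\ref{lem:R}), $R.seq$ takes the value $x$ through at most one successful \cas{}. Hence at most one visible \wrt{} can have sequence number $x$.

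For \textbf{existence}, I will do a case analysis on the operation $op$ in $H'$ with sequence number $x$. In every case, my goal is to show that at some point during $\alpha$ the value of $R.seq$ equals $x$; then by Lemma~\ref{lem:R} this value was produced by a unique successful \cas{} at line~\ref{line:mwmrma:writer_cas}, executed by some writer that had previously read $x-1$ from $SN$ (otherwise the \cas{} would have attempted a different sequence number). That writer's \wrt{} is therefore visible with sequence number $x$, as required. The cases to handle are:
\begin{itemize}
\item \emph{Direct} \rd{} with $sn(op) = x$: it applied an \fx{} to $R$ whose response has sequence-number field $x$, so $R.seq = x$ immediately before that step.
\item \emph{Silent} \rd{} with $sn(op) = x$: it read $x$ from $SN$, and by Lemma~\ref{lem:SN_R}(1), at that moment $R.seq \geq x$. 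Since $R.seq$ starts at $0$ and takes each value in succession (Lemma~\ref{lem:R}), it was equal to $x$ at some earlier point.
\item \emph{Visible} \wrt{} with $sn(op) = x$: the operation itself is the required visible \wrt{}.
\item \emph{Silent} \wrt{} with $sn(op) = x$: it terminated via the break at line~\ref{line:mwmrma:writer_break}, so in some iteration it read a triple $(lsn,lval,bits)$ from $R$ with $lsn \geq sn = x$, and again $R.seq$ was equal to $x$ at some earlier moment.
\end{itemize}

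The main obstacle I anticipate is the silent-write case, since such an operation may never access $R$ successfully with the current sequence number; the key observation is that termination of the repeat loop forces it to have witnessed $R.seq \geq x$ at some point (otherwise it would not have exited). Once $R.seq$ is shown to reach $x$ in every case, the conclusion is uniform: the (unique) step raising $R.seq$ to $x$ pins down the visible \wrt{} with sequence number $x$, establishing both existence and uniqueness.
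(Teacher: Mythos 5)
Your proof is correct and follows essentially the same route as the paper's: in every case show that $R.seq$ must reach the value $x$, then use the strict monotonicity of $R.seq$ (Invariant~\ref{obs:R_S_increase 2} and Lemma~\ref{lem:R}) to pin down the unique successful \cas{} that sets it, whose issuing writer necessarily read $x-1$ from $SN$ and is therefore the visible \wrt{} with sequence number $x$. The only detail you omit is the case $x=0$, where no \cas{} ever sets $R.seq$ to $0$; there the required visible \wrt{} is the fictitious initial \wrt{}($v_0$) with sequence number $0$ that the construction of $H'$ prepends.
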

\begin{proof}
  The lemma is true for $x=0$. For $x > 1$, let us first suppose that there exists a silent \wrt{} operation $op$ by some process $p$ with $sn(op) = x$. As $op$ is silent, $p$ reads from $R$ a sequence number $lsn \geq x$ at line~\ref{line:mwmrma:writer_break}. It follows from Lemma~\ref{lem:R} that before this read, the  field $R.seq$ of $R$  has been set to $x$. If $op$ is a silent \rd{} operation, $sn(op)$ is the  sequence number read from $SN$ at line~\ref{line:mwmrma:read_sn} and also the sequence number read from $R$ in some previous \rd{} operation ($sn = prev\_sn$, line~\ref{line:mwmrma:no_new_write}). Hence, as in the case of a silent \wrt{}, $R.seq = x$ before $SN$ is read in $op$.

  By Lemma~\ref{lem:R}, there exists a unique value $v_x$ such that while $R.seq = x$, we have $R.val = v_x$. To change $(R.seq,R.val)$ to $(x,v_x)$, a successful \cas{} is applied to $R$ at line~\ref{line:mwmrma:writer_cas} by some process $p'$ while performing a \wrt{}($v_x$)  operation $op'$. $op'$ is visible and by definition $sn(op') = x$. For uniqueness, suppose that there is another visible \wrt{} operation $op''$ with $sn(op'') = x$. The \cas{} applied to $R$ by this operation has arguments of the form $(\_,lsn,\_),(\_,x,\_)$ with $lsn < x$ (otherwise, the repeat loop terminate with the break statement at line~\ref{line:mwmrma:writer_break}). But once $R.seq$ is changed to $x$, such a \cas{} cannot succeed as sequence numbers stored in $R.seq$ are increasing (Invariant~\ref{obs:R_S_increase 2}). 
\end{proof}

\begin{lemma}
  \label{lem:rd_wrt_terminate}
  If an operation (\rd{}, \wrt{} or \adt{}) $op$ terminates in $H'$,
  then $\mathit{SN} \geq sn(op)$. 
\end{lemma}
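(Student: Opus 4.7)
The plan is to do a case analysis on the type of the operation $op$, and in each case identify the last (or a suitable) step of $op$ that forces $\mathit{SN} \geq sn(op)$, exploiting the fact that the successive values of $SN$ are exactly $0,1,2,\ldots$ (Invariant~\ref{obs:R_S_increase 1}) and that $SN$ never decreases.

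First, if $op$ is a silent \rd{} that terminates in $H'$, then by definition $sn(op)$ is the value $x$ returned by the $SN.\mathsf{read}()$ at line~\ref{line:mwmrma:read_sn}; since $SN$ is non-decreasing, $\mathit{SN} \geq x = sn(op)$ thereafter. If $op$ is a direct \rd{}, it fetches a triple $(sn,val,\_)$ from $R$ at line~\ref{line:mwmrma:read_fetch} (so $R.seq$ is at least $sn$ at that moment) and then executes $SN.\cas(sn-1,sn)$ at line~\ref{line:mwmrma:reader_cas}. By Lemma~\ref{lem:SN_R}(2), $\mathit{SN} \geq sn-1$ when the \cas{} is attempted; since $\mathit{SN}$ takes only integer values and only increases, either the \cas{} succeeds, giving $\mathit{SN}=sn$, or it fails because $\mathit{SN}$ is already $\geq sn$. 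In both subcases $\mathit{SN} \geq sn = sn(op)$ immediately after line~\ref{line:mwmrma:reader_cas}.

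For a \wrt{} operation with sequence number $sn(op)=sn$, termination happens either because $op$ breaks out of the repeat loop at line~\ref{line:mwmrma:writer_break} after seeing $lsn \geq sn$ in $R$, or because the \cas{} at line~\ref{line:mwmrma:writer_cas} changes $R.seq$ to $sn$. In both cases $R.seq \geq sn$ at some point before $op$ reaches line~\ref{line:mwmrma:writer_return}, so by Lemma~\ref{lem:SN_R}(2) $\mathit{SN} \geq sn-1$ when the final $SN.\cas(sn-1,sn)$ is executed; exactly as in the direct-\rd{} case, after that step $\mathit{SN} \geq sn$. An analogous argument handles \adt{}: it reads $R$ at line~\ref{line:mwmrma:audit_read} obtaining $(rsn,\ldots)$ with $sn(op)=rsn$, then does $SN.\cas(rsn-1,rsn)$ at line~\ref{line:mwmrma:audit_return}; Lemma~\ref{lem:SN_R}(2) again gives $\mathit{SN} \geq rsn-1$ beforehand, so afterwards $\mathit{SN} \geq rsn = sn(op)$.

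The only mildly delicate point is the reasoning ``$\mathit{SN} \geq sn-1$ and the \cas{} may fail, yet we still conclude $\mathit{SN} \geq sn$''; this is where Invariant~\ref{obs:R_S_increase 1} is essential, since it rules out non-integral or skipped values of $SN$ and lets us upgrade strict inequality to the next integer. Everything else is direct code inspection together with the monotonicity facts already established.
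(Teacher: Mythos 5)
Your proof is correct and follows essentially the same route as the paper's: a case split on the operation type, using the monotonicity of $\mathit{SN}$ (Invariant~\ref{obs:R_S_increase 1}) together with Lemma~\ref{lem:SN_R} to establish $\mathit{SN} \geq sn(op)-1$ at the final $\mathit{SN}.\cas(sn(op)-1,sn(op))$, and then observing that $\mathit{SN} \geq sn(op)$ holds afterwards whether or not that \cas{} succeeds. The only cosmetic difference is that for \wrt{} the paper gets $\mathit{SN} \geq sn(op)-1$ directly from the read of $\mathit{SN}$ at line~\ref{line:mwmrma:writer_read_sn}, whereas you route through $R.seq \geq sn(op)$ and Lemma~\ref{lem:SN_R}(2); both are valid.
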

\begin{proof}
  If $op$ is a direct \rd{}, a  \wrt{} or an \adt{}, this follows from the \cas{} applied to $SN$ before the operation returns (line~\ref{line:mwmrma:reader_cas},   line~\ref{line:mwmrma:writer_return} or line~\ref{line:mwmrma:audit_return}) that tries to change the value of $SN$ from $sn-1$ to $sn$. In each case,  the value of the local variable $sn$ is the sequence number $sn(op)$ associated with $op$. Moreover, when this \cas{} is applied,  $SN \geq sn(op) -1$. Indeed, if $op$ is a \wrt{}, $\mathit{SN} = sn(op) -1$ when it is read at the beginning of $op$ (line~\ref{line:mwmrma:writer_read_sn}). In the other cases, $sn(op)$ is fetched or read from $R.seq$, therefore, by Lemma~\ref{lem:SN_R}, $SN \geq sn(op)-1$ immediately after this step.
  Since $SN$ is increasing (Invariant~\ref{obs:R_S_increase 1}), $SN \geq sn(op) -1$ when $SN.\cas(sn(op)-1,sn(op))$ is applied, and hence $SN \geq sn(op)$ after this step whether or not the \cas{} fails.  

  If $op$ is a silent \rd{}, $sn(op)$ is the value read from $\mathit{SN}$ at the beginning of the operation (line~\ref{line:mwmrma:read_sn}. Since the values stored in $\mathit{SN}$ are increasing, $\mathit{SN} \geq sn(op)$ when $op$  terminates. 
\end{proof}

\begin{restatable}{lemma}{RTOrder}
  \label{lem:rt_order}
  If the response of an operation $op$ precedes the invocation of $op'$ in $H'$,
  then $op$ precedes $op'$ in $L(\alpha)$. 
\end{restatable}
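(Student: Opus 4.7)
The plan is to prove the lemma by case analysis on which of the three rules R1, R2, R3 places each of $op$ and $op'$ in $L(\alpha)$. The workhorses will be the monotonicity of $SN$ and $R.seq$ (Invariants~\ref{obs:R_S_increase 1} and~\ref{obs:R_S_increase 2}), the relation $SN \leq R.seq$ extracted from Lemma~\ref{lem:SN_R}, and Lemma~\ref{lem:rd_wrt_terminate}, which yields $SN \geq sn(op)$ at the completion of $op$ and, by monotonicity, at every subsequent point, including the invocation of $op'$ and every step $op'$ takes thereafter.

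First I would establish a uniform structural property of R1: for every R1 operation $op$ with $sn(op) = y$, the step $ls(op)$ lies in $[\rho_y,\rho_{y+1})$, i.e., while $R.seq = y$ (or $R.seq$ becomes $y$, in the case of the visible \wrt{} whose $ls$ is exactly $\rho_y$). This is immediate for a direct \rd{} or an \adt{} (whose $ls$ fetches or reads $R.seq = y$) and for a silent R1 \rd{} (whose $ls$ is the $SN.\mathsf{read}$ in $E_y$). Symmetrically I would record where each R2 or R3 operation takes its first informative step: an R2 silent \rd{} with $sn(op') = x$ performs its $SN.\mathsf{read}$ inside $D_{x+1}$, so $R.seq = x+1$ and $SN = x$ at that step, while an R3 silent \wrt{} with $sn(op') = x$ performs an $SN.\mathsf{read}$ at line~\ref{line:mwmrma:writer_read_sn} returning $x-1$. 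Combined with Lemma~\ref{lem:rd_wrt_terminate} and monotonicity, these observations deliver the key numerical bound: $sn(op) \leq sn(op')$ when $op'$ is placed by R1 or R2, and $sn(op) \leq sn(op')-1$ when $op'$ is placed by R3.

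The conclusion then follows in each of the nine cases from the block structure of $L(\alpha)$ around every visible write with sequence number $y$, which reads: the R1 operations whose $ls$-step falls in $(\rho_{y-1},\rho_y)$, then the R2 silent \rd{} operations with $sn = y-1$, then the R3 silent \wrt{} operations with $sn = y$, then the visible \wrt{} itself at $\rho_y$. The (R1,R1) case is handled by rule R1 via the real-time order of $ls$-steps; (R2,R2) with equal sequence numbers uses the R2 ordering on $SN.\mathsf{read}$-steps; and (R2,R3) or (R3,R2) with consecutive sequence numbers uses the R2-before-R3 convention. The main obstacle will be precisely these boundary situations, where $sn(op)$ and $sn(op')$ coincide or differ by one and the two operations fall into overlapping regions of $L(\alpha)$. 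For instance, in the (R3,R1) case with $sn(op') = sn(op) = y$, one must observe that $SN \geq y$ at $ls(op')$ together with $SN \leq R.seq$ forces $ls(op')$ into $E_y$, hence strictly after the visible \wrt{} with sequence number $y$, and thus after $op$, which sits in the R3 block just before that visible \wrt{}. The remaining edge sub-cases close in a similar way once the structural observations of the second paragraph are in hand.
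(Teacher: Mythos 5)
Your proposal is correct and follows essentially the same route as the paper's proof: a case analysis on which of rules R1, R2, R3 places $op$ and $op'$, driven by the monotonicity of $SN$ and $R.seq$, the relation between them from Lemma~\ref{lem:SN_R}, the bound $SN \geq sn(op)$ at completion from Lemma~\ref{lem:rd_wrt_terminate}, and the resulting block structure of $L(\alpha)$ around each visible \wrt{}. The paper phrases it as a contradiction over four grouped cases rather than nine, but the structural observations and the handling of the boundary sub-cases (equal or consecutive sequence numbers) match yours.
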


\begin{proof}
  Assume, towards a contradiction, 
  that $op$ completes before the invocation of $op'$ in $\alpha$, 
  but $op'$ is placed before $op$ in $L(\alpha)$. 
  We examine several cases, 
  according to the linearization rules used to place $op$ and $op'$ in $L(\alpha)$:
  \begin{itemize}
  \item Both $op$ and $op'$ are linearized using rule $R1$. $op$ and $op'$ are ordered in $\alpha$ following the order in which a step in their execution interval occur in $\alpha$. It is thus not possible that $op'$ is placed before $op$ in $L(\alpha)$.
    
  \item $op$ is linearized using rule $R1$, and $op'$ using rule $R2$ or $R3$. Let $x = sn(op)$ be the sequence number of $op$, and $ls$, its linearization step. This step either changes $R.seq$ to $x$ (step $\rho_x$) if $op$ is a visible \wrt{} or $R.seq = x$ when it is applied (if $op$ is a silent \rd{} linearized with rule $R1$, $\emph{SN} = R.seq =x$ when $\emph{SN}$ is read.). As $ls$ occurs in the execution interval of $op$ and $R.seq$ is increasing, $R.seq \geq x$ and,  by Lemma~\ref{lem:rd_wrt_terminate} $SN \geq x$ when $op$ terminates.

    As $op'$ starts after $op$ terminates, and as both $\mathit{SN}$ and $R.seq$ are increasing, we still have $R.seq \geq x$ and  $SN \geq x$ when $op'$ starts. Hence, $op'$ reads $x' \geq x$ from $SN$ (line~\ref{line:mwmrma:read_sn} or line~\ref{line:mwmrma:writer_read_sn}), and, following rules $R2/R3$ is linearized immediately before $\rho_{x'+1}$ (which changes $R.seq$ to $x'+1 > x$). It thus appears in $L(\alpha)$ after every operation with sequence number $x$ linearized with rule $R1$.

  \item $op$ is linearized using rule $R2$ or $R3$, and $op'$ using rule $R1$. Let $x$ be the value read from $SN$ in $op$ (line~\ref{line:mwmrma:read_sn} or line~\ref{line:mwmrma:writer_read_sn}). As $op$ is not placed using rule $R1$, $R.seq \geq x+1$ when $op$ terminates. Indeed, if $op$ is silent \rd{} $R.seq \geq x+1$ when $SN$ is read. Otherwise, $op$ is a silent \wrt{}, and thus $R.seq$ has already been updated to a value $\geq x+1$ when a $\mathsf{read}$ (line~\ref{line:mwmrma:writer_read_R}) or  a $\cas{}$ (line~\ref{line:mwmrma:writer_cas}) to $R$ is applied in $op$. Therefore, the linearization step of $op'$ is applied to a configuration in which $R.seq \geq x+1$, and thus occurs after $\rho_{x+1}$. Hence $op$ is placed in $L(\alpha)$ after the visible \wrt{} with sequence number $x+1$, whereas $op'$ is placed before by definition of rules $R2$/$R3$, which is a contradiction. 

  \item Rule $R1$ is not used to linearize $op$ and $op'$. Let $x$ and $x'$ be the values of $\mathit{SN}$ read at the beginning of $op$ and $op'$ respectively. As $op$ precedes $op'$ in $\alpha$, $x \leq x'$. If $x < x'$, $op'$ is placed after the visible \wrt{} with sequence number $x+1$, and $op$ before this \wrt{} in $L(\alpha)$. If $x = x'$, we remark that $op$ cannot be a \wrt{} operation. Indeed, if $op$ is a \wrt{}, $sn(op) = x+1$, and therefore by Lemma~\ref{lem:rd_wrt_terminate}, $SN \geq x+1$ when $op$ terminates and hence also when $op'$ starts. Hence, $op$ and $op'$ are placed using the same rule or $op$ is placed using rule R2 and  $op'$, rule $R3$. In the latter case, $op$ is placed before $op'$ by rule $R3$. In the former case,  $op$ cannot appear after $op'$ in $L(\alpha)$ as they are relatively ordered with respect to the order in which a step taken in their execution interval occurs in $\alpha$. \qedhere
  \end{itemize}
\end{proof}

\begin{lemma}
  \label{lem:read_valid}
  If a \rd{} operation $rop$ in $H'$ returns $v$,
  then $v$ is the input of the last \wrt{} that precedes $rop$ in $L(\alpha)$. 
\end{lemma}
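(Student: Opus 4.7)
The plan is to prove the register's sequential specification by identifying, for a \rd{} $rop$ in $H'$ with sequence number $x = sn(rop)$, both the value $rop$ returns and the last \wrt{} preceding $rop$ in $L(\alpha)$, and then showing that they match. The two key tools are Lemma~\ref{lem:R} (exactly one value, call it $v_x$, is ever paired with sequence number $x$ in $R$) and Lemma~\ref{lem:unique_visible} (there is a unique visible \wrt{} with sequence number $x$, whose input must be $v_x$ by Lemma~\ref{lem:R}, since it is the \cas{} that sets $(R.seq,R.val)$ to $(x,v_x)$).

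First I would compute the return value of $rop$. If $rop$ is direct, then by the code (lines~\ref{line:mwmrma:read_fetch}--\ref{line:mwmrma:return_val}, and the convention used to complete non-terminating direct reads in $H'$) it returns the value field of the triple fetched from $R$, which equals $v_x$ by Lemma~\ref{lem:R}, since $R.seq = x$ at the moment of the \fx{}. If $rop$ is silent, it returns $prev\_val$. By the silent classification, the value just read from $\mathit{SN}$ at line~\ref{line:mwmrma:read_sn} equals both $sn(rop) = x$ and $prev\_sn$. Since $prev\_sn$ and $prev\_val$ are jointly updated only by a direct \rd{} of the same process (line~\ref{line:mwmrma:return_val}), the most recent direct \rd{} of that process had sequence number $x$ and thus, by the direct case, returned $v_x$. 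Hence $rop$ returns $v_x$ in all cases.

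Next I would locate $rop$ in $L(\alpha)$ relative to the pivotal events $\rho_x$ and $\rho_{x+1}$. By rule R1, the visible \wrt{} with sequence number $x$ is linearized exactly at $\rho_x$, and every direct or silent-R1 \rd{} with sequence number $x$ falls strictly between $\rho_x$ and $\rho_{x+1}$ (its linearization step occurs in a configuration with $R.seq = x$). By rule R2, silent-R2 \rd{}s with sequence number $x$ are placed immediately after this window, still before $\rho_{x+1}$. By rule R3, a silent \wrt{} with sequence number $y+1$ is placed between the R2-block at sequence number $y$ and $\rho_{y+1}$; in particular, silent \wrt{}s with sequence number $\le x$ lie strictly before $\rho_x$, and silent \wrt{}s with sequence number $x+1$ lie strictly after the R2-block at sequence number $x$, hence after $rop$. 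Consequently, in every case the most recent \wrt{} preceding $rop$ in $L(\alpha)$ is the unique visible \wrt{} with sequence number $x$, whose input is $v_x$, matching the return value of $rop$.

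The main obstacle is the bookkeeping required to confirm that no \wrt{} at sequence number other than $x$ slips between $\rho_x$ and $rop$, especially handling silent \wrt{}s which, unlike visible ones, do not access $R$ at their linearization point. This reduces to inspecting rule R3 to verify that silent \wrt{}s at sequence number $x+1$ sit strictly after every \rd{} at sequence number $x$, and that silent \wrt{}s at sequence number at most $x$ are placed before $\rho_x$; both facts follow directly from the statement of R3 combined with the uniqueness result in Lemma~\ref{lem:unique_visible}. The boundary case $x=0$ is handled by the fictitious visible \wrt{}($v_0$) inserted at the beginning of $H'$.
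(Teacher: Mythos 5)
Your proof is correct and follows essentially the same route as the paper's: both identify the unique visible \wrt{} with sequence number $x=sn(rop)$ (whose input is $v_x$ by Lemma~\ref{lem:R}) as the last \wrt{} preceding $rop$, and both use rules R1--R3 to check that no other \wrt{} is placed between it and $rop$. The only difference is organizational --- you compute the return value for both read classes up front and then argue positions uniformly around $\rho_x$ and $\rho_{x+1}$, whereas the paper splits into the direct and silent cases and handles the silent case by reducing to the preceding direct \rd{} and a short contradiction argument.
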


\begin{proof}
  Let $x = sn(rop)$. Suppose that $rop$ is direct, then $v = v_x$, the value fetched from $R$ at line~\ref{line:mwmrma:read_fetch} and we have $x = R.seq$ when this \fx{} is applied. $(R.seq,R.val)$ is changed to $(x,v_x)$ by a \cas{} (line~\ref{line:mwmrma:writer_cas}) applied  in a \wrt{} operation with input $v_x$, and by Lemma~\ref{lem:R}, this step is unique. Let $wop$ be the operation that applies this \cas{}. $wop$ is a visible \wrt{}, linearized according to rule $R1$ before $op$, with the step $\rho_{x}$ (the \cas{} that changes $(R.seq,R.val)$ to $(x,v_x)$). Note that there  is no  visible  \wrt{} operation placed  between $wop$ and $rop$ in $L(\alpha)$ (otherwise $rop$ will not read $x$ from $R.seq$ at line~\ref{line:mwmrma:read_fetch}), and thus every silent \wrt{} is placed (according to rule R3) before $wop$  or after $rop$. $rop$ thus returns  the input of the last \wrt{} that precedes it in $L(\alpha)$.

  Otherwise, suppose that $rop$ is silent. 
  By the code, $rop$ is preceded by a direct \rd{} operation $dop$ performed by the same process,
  which returns the same value, and with the same sequence number $x$. 
  Let $wop$ be the last $\wrt{}$  operation that precedes $dop$ in $L(\alpha)$. 
  As shown above, the input of $wop$ is $v_x = v$ and $wop$ is the unique visible \wrt{}  with sequence number $x$. 
  If no visible \wrt{} is placed between $wop$ and $rop$, 
  then there is also no silent \wrt{}  between $wop$ and $rop$ in $L(\alpha)$ 
  (as rule $R3$ places  a silent \wrt{}  immediately before a visible \wrt{}). 
  Assume, towards a contradiction, that there is a visible \wrt{}($v_{x'}$) between $wop$ and $rop$ in $L(\alpha)$, 
  with $seq(wop') = x'$. By Lemma~\ref{lem:R}, $x' > x$. 
  If $rop$ is placed using rule $R1$, $\rho_{x'}$ occurs before $SN$ is read by $rop$ (line~\ref{line:mwmrma:read_sn}). As $rop$ is placed using  rule $R1$, we thus have $R.seq = SN \geq  x'$  when this $\mathsf{read}$ is applied since $R.seq$ is non-decreasing. Therefore, $sn(rop)\neq x$, which is a contradiction. $rop$ is thus placed using rule $R2$. It is thus before the visible \wrt{} with sequence number $x+1$, and  hence there is no visible \wrt{} between $wop$ (which the visible \wrt{} with sequence number $x$) and $rop$. 
\end{proof}

\begin{restatable}{lemma}{AuditComp}
\label{lem:audit_completness}
If an \rd{} operation $rop$ by process $p_j$ returns $v$ and appears before 
an \adt{} operation $aop$ in $L(\alpha)$,
then $(j,v)$ is contained in the response set of $aop$. 
\end{restatable}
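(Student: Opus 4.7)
My plan is to fix the execution $\alpha$, set $x = sn(rop)$ so that the unique value associated with $R.seq = x$ is $v_x = v$ (by Lemma~\ref{lem:R}), and let $\tau$ denote the linearization step of $aop$, namely the $R.\mathsf{read}$ at line~\ref{line:mwmrma:audit_read}, which returns $(rsn,rval,rbits)$. The goal is to show that $(j,v)$ belongs to the set $A$ returned by $aop$. The first step is to extract a key real-time fact: $p_j$ must have applied a step $\phi = R.\fx(2^j)$ while $R.seq = x$. If $rop$ is direct, $\phi$ is $rop$'s own \fx{}; if $rop$ is silent, $\phi$ lies in the direct \rd{} $dop$ by $p_j$ that set $prev\_sn$ to $x$. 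By Lemma~\ref{lem:reader_once}, $\phi$ is the only \fx{} $p_j$ applies to $R$ while $R.seq = x$, so bit $j$ of $R.bits$ differs from $\mathit{rand}_x[j]$ from $\phi$ onward until $\rho_{x+1}$.

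Next, I would observe that $rsn \geq x$ by a short case split on how $rop$ is linearized: whether $rop$ is direct (ordered by R1 at $\phi$), silent via R1 (ordered at its $SN.\mathsf{read}$ during $E_x$), or silent via R2 (placed immediately before $\rho_{x+1}$), the position of $rop$ in $L(\alpha)$ forces $\tau$ to occur in a configuration with $R.seq \geq x$, because $R.seq$ is monotone. In the subcase $rsn = x$, the argument reduces to showing $\phi$ precedes $\tau$: the direct and silent-via-R1 cases are immediate, since $\phi$ coincides with or precedes $rop$'s linearization step; the silent-via-R2 case is excluded, as it would force $\tau$ after $\rho_{x+1}$, contradicting $rsn = x$. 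Hence at $\tau$ bit $j$ of $rbits$ differs from $\mathit{rand}_x[j]$ and $rval = v_x$, so $(j,v)$ is inserted at line~\ref{line:mwmrma:audit_non_definitive}.

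The subcase $rsn > x$ requires recovering $(j,v)$ from $V[x]$ and $B[x][j]$. Let $wop_{x+1}$ be the unique visible \wrt{} with sequence number $x+1$ (Lemma~\ref{lem:unique_visible}) and $\pi$ the $R.\mathsf{read}$ of its successful loop iteration. I would argue that $\phi$ precedes $\pi$: $\phi$ lies before $\rho_{x+1}$ since $R.seq = x$ at $\phi$, and $\phi$ cannot fall between $\pi$ and $\rho_{x+1}$, or else $R.bits$ would be altered and the closing \cas{} of $wop_{x+1}$ would fail. Thus at $\pi$ bit $j$ differs from $\mathit{rand}_x[j]$, so $wop_{x+1}$ writes $V[x] \gets v_x$ and $B[x][j] \gets \emph{true}$ before $\rho_{x+1}$. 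Finally, exploiting that the auditor's local set $A$ is cumulative, the first audit $aop^\star$ of this auditor whose observed sequence number exceeds $x$ satisfies $lsa^\star \leq x < rsn^\star$, iterates $s = x$ in its loop, reads $B[x][j] = \emph{true}$ and $V[x] = v$ (both accesses occurring after $\rho_{x+1}$), and inserts $(j,v)$ into $A$, which persists until $aop$ returns.

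The main obstacle I expect is bridging linearization-order reasoning (placement via R1--R3) with the real-time order of the steps $\phi$, $\pi$, $\rho_{x+1}$, and $\tau$ --- especially for silent reads placed by rule R2 --- together with tracking the cumulative set $A$ when the insertion of $(j,v)$ is actually performed by an earlier audit of the same auditor rather than by $aop$ itself.
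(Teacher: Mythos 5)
Your proposal is correct and follows essentially the same route as the paper's proof: a case split on whether the audit's observed sequence number equals or exceeds $sn(rop)$, with the equal case handled by reading the tracking bit directly from $R$ at line~\ref{line:mwmrma:audit_non_definitive} and the greater case handled by observing that the visible \wrt{} performing $\rho_{x+1}$ must have recorded $V[x]$ and $B[x][j]$ before its successful \cas{} (else the \cas{} would fail), together with the cumulative audit set $A$. The only cosmetic difference is that the paper substitutes the preceding direct \rd{} for a silent $rop$ up front, whereas you keep $rop$ and locate its \fx{} step inside that earlier direct \rd{}; the reasoning is the same.
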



\begin{proof}
If $rop$ is silent, then it is preceded by a direct \rd{} $rop'$ by the same process,
which returns the same value. In that case, we consider $rop'$ instead of $rop$.
So, assume $rop$ is direct. 
Let $x_r$ and $x_a$ denote  respectively $sn(rop)$ and $sn(aop)$. 
Since both $aop$ and $rop$ are linearized by rule $R1$, 
$x_r \leq x_a$ since $rop$ precedes $aop$ in $L(\alpha)$. 

If $x_r = x_a = x$, the \fx{} applied by $rop$ is before $R$ is $\mathsf{read}$ 
in $aop$ at line~\ref{line:mwmrma:audit_read}. 
As $x_r= x_a = x$, this read step returns a triple $(x,rv,rb)$ 
where $rb[j] \neq rand_x[j]$ and  $rv = v$. 
Therefore, $(j,v)$ is included in the audit set $A$ 
(line~\ref{line:mwmrma:audit_non_definitive}).

If $x_r < x_a$, consider the visible \wrt{} operation in which $R.seq$ 
is changed from $x_r$ to $x_r+1$ (step $\rho_{x_r+1}$). 
Before  applying this step, 
a writer $p$ sets $B[x_r][j]$ to true and $V[x_r]$ to $v$ (line~\ref{line:mwmrma:writer_record_reads}). 
Indeed, if $R.bits$ is modified by $p_j$ after $p$ reads $R$ at line~\ref{line:mwmrma:writer_read_R}, 
the \cas{} at line~\ref{line:mwmrma:writer_cas} trying to change $R.seq$ to $x_r+1$ fails. 
Note also that by Lemma~\ref{lem:R} no other value $v' \neq v$ is written to $B[x_r]$. 
By the code, $B[x_r][j]$ is read in an \adt{} operation only after $R.seq$ is seen to be larger than or equal $x_r + 1$ at line~\ref{line:mwmrma:audit_read}. 
Hence, $B[x_r][j]$ and $V[s]$ are read by $aop$ or by a preceding \adt{} of the same process 
after $\rho_{x_r+1}$. 
It thus follows that $(j,v)$ is added to the audit set $A$ before $aop$ returns.
\end{proof}

\begin{restatable}{lemma}{AuditAccur}
\label{lem:audit_accuracy}
If a pair $(j,v)$ is contained in the response set of an \adt{} operation $aop$, 
then there is a \rd{} operation by process $p_j$ that returns $v$ 
and appears before $aop$ in $L(\alpha)$. 
\end{restatable}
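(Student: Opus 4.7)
The plan is to trace the pair $(j,v)$ in the response set of $aop$ back to a direct \rd{} operation by $p_j$ whose linearization step precedes that of $aop$. Pairs enter the local set $A$ at two places: line~\ref{line:mwmrma:audit_definitive} (indirectly, via $V[s]$ and $B[s][j]$) and line~\ref{line:mwmrma:audit_non_definitive} (directly, from the tracking bits in the triple just read from $R$). Since $A$ accumulates across successive \adt{} calls by the same auditor, I would first observe that $(j,v)$ was added by some \adt{} operation $aop^\star$ with $aop^\star = aop$ or $aop^\star$ preceding $aop$ in real time (hence, by Lemma~\ref{lem:rt_order}, also in $L(\alpha)$). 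It therefore suffices to exhibit a direct \rd{} of value $v$ by $p_j$ that is linearized before $aop^\star$.

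For the direct case, the triple $(rsn,rval,bits)$ that $aop^\star$ reads from $R$ satisfies $rval = v$ and $bits[j] \neq \mathit{rand}_{rsn}[j]$. The $j$-th tracking bit of $R$ can only be toggled by $R.\fx(2^j)$, which is executed exclusively by reader $p_j$ at line~\ref{line:mwmrma:read_fetch} of a direct \rd{}. By Lemma~\ref{lem:reader_once}, $p_j$ applies at most one \fx{} to $R$ while $R.seq$ holds a given value, so the unique \fx{} accounting for the discrepancy occurs while $R.seq = rsn$. The associated direct \rd{} $rop$ fetches the value in $R.val$ at that moment, which by Lemma~\ref{lem:R} is exactly $v$, and hence returns $v$. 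Since $rop$'s \fx{} step precedes $aop^\star$'s $\mathsf{read}$ of $R$, rule~R1 places $rop$ before $aop^\star$ in $L(\alpha)$.

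For the indirect case, I would chase the writer $p$ that set $B[s][j]$ to $\mathit{true}$ at line~\ref{line:mwmrma:writer_record_reads}, for some $s \in \{lsa,\ldots,rsn-1\}$. Just before this write, $p$ read a triple $(s,lval,bits')$ from $R$ with $bits'[j] \neq \mathit{rand}_s[j]$ and had already written $lval$ to $V[s]$; by Lemma~\ref{lem:R}, $V[s]$ is only ever written with the unique value associated with sequence number $s$, which must therefore be $v$. Applying the same tracking-bit argument as in the direct case, some direct \rd{} $rop$ by $p_j$ applied \fx{} to $R$ while $R.seq = s$, and by Lemma~\ref{lem:R} returns $v$. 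This \fx{} precedes $p$'s $\mathsf{read}$ of $R$, which in turn precedes any step that increases $R.seq$ beyond $s$; since $aop^\star$ reads from $R$ a sequence number $rsn \geq s+1$, $rop$'s linearization step precedes $aop^\star$'s, placing $rop$ before $aop$ in $L(\alpha)$ by rule~R1.

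The delicate part of the proof is the twofold use of Lemma~\ref{lem:R}: it ensures both that the value returned by $rop$ coincides with the $v$ reported by the audit, and that the value written to $V[s]$ is unambiguous, even when several writers race to populate that slot. The other key ingredient is Lemma~\ref{lem:reader_once} combined with the fact that only $p_j$ can flip the $j$-th tracking bit, which together pinpoint the precise sequence number at which $p_j$'s \fx{} took place and thus identify the exact direct \rd{} that justifies the audit entry.
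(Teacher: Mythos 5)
Your proof is correct and follows essentially the same route as the paper's: the same two-case split (pair extracted from the tracking bits of $R$ at line~\ref{line:mwmrma:audit_non_definitive} versus from $B[s][j]$ at line~\ref{line:mwmrma:audit_definitive}), the same use of Lemma~\ref{lem:R} and the tracking-bit/\fx{} argument to pinpoint the unique direct \rd{} by $p_j$, and the same appeal to rule R1 to order it before the audit. Your explicit handling of the fact that $A$ persists across successive \adt{} calls by the same auditor, so the pair may have been inserted by an earlier \adt{} $aop^\star$ preceding $aop$, is a detail the paper's proof glosses over, and making it explicit is a small improvement.
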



\begin{proof}
Let $x = sn(aop)$.  
One way for the pair $(j,v)$ to be included in the response set $A$ of $aop$, 
is if $j$ is extracted from the bit-string read from $R$ at line~\ref{line:mwmrma:audit_non_definitive}.  
Let $(rs,rv,rb)$ be the triple read from $R$ at line~\ref{line:mwmrma:audit_read}. 
Note that $x = sn(op) = rs$.  
Hence, $R$ is previously changed to $(x,rv=v_x,rand_{x})$ (in step $\rho_x$).
Since $rb[j] \neq rand_{x}[j]$, a \fx{} by $p_j$ is applied to $R$ 
after $\rho_x$ and before $R$ is read by $aop$. 
This $\fx$ is applied during a \rd{} by $p_j$ that returns $v_x = v$. 
This \rd{} is direct, and like $aop$, is linearized by rule $R1$,
implying it precedes $aop$ in $L(\alpha)$. 

Otherwise, $aop$ reads $true$ from a Boolean register $B[s][j]$
(line~\ref{line:mwmrma:audit_definitive}), for some $s < x$. 
Before $B[s][j]$ is read, a \wrt{} operation by some process $p_i$ sets $B[s][j]$ to true (line~\ref{line:mwmrma:writer_record_reads}). 
By the code, $p_i$ has previously read  a triple $(s,v',rb)$ from $R$ 
( where $rb[j] \neq rand_s[j]$.  
Therefore, as above, $p_j$ applies a \fx{} to $R$ when $R.seq = s$ in 
a \rd{} operation $rop$. 
This operation is a direct \rd{} that returns $v'$ 
(since by Lemma~\ref{lem:R}, $R.seq = s' \implies R.val =v'$), 
and its place in $L(\alpha)$ is determined by its linearization step $ls$, 
which is the \fx{} applied to $R$.
On the other hand, the linearization step of $aop$ is the $\mathsf{read}$ of $R$ 
(line~\ref{line:mwmrma:audit_read}), and $s < x = R.seq$, when this step occurs. 
Therefore, $R$ is \textsf{read} in $aop$ after $\rho_{s+1}$, 
that is after $R.seq$ is changed from $s$ to $s+1$. 
Before this step is applied,  $B[j][s]$ is set to true (line~\ref{line:mwmrma:writer_record_reads}), 
and hence the direct \rd{} $rop$ is linearized before $aop$. 
\end{proof}

\section{Correctness Proof for Algorithm~\ref{alg:maxreg} (Auditable Max Register)}
\label{sec:maxregister_proof}

The proof is divided into four parts. First we check that executions of the max register algorithm have a simple structure, as for the register implementation. Each execution $\alpha$ may be partitioned into phases, in which sequence numbers in registers $\mathit{SN}$ and $R$ are \emph{E}qual or \emph{D}iffer by one. Phases are associated with unique increasing values, which are the only values that can be returned by \rd{} operations. 
The second part then shows, essentially along the lines of the proof of wait-freedom of Algorithm~\ref{alg:cas_mwa}, that \rd{}, \wrtm{} and \adt{} operations are wait-free. Relying on the structural lemmas of the first part, we prove in part three that each execution $\alpha$ is linearizable. The basis of the linearization $L(\alpha)$ is a linearization $L(\beta)$ of an  execution $\beta$ of Algorithm~\ref{alg:cas_mwa} indistinguishable from $\alpha$ for any reader or auditor. The construction allows to lift the strong auditing properties of Algorithm~\ref{alg:cas_mwa} to the max register implementation. 
The fourth, and last, part of the proof establishes that \rd{} and \wrtm{} operations are uncompromised by the readers. 
Until the last part, pairs $(values,nonce)$ are considered as single opaque values, ordered lexicographically.

\paragraph{Partition into phases}
Recall that  $R.seq$, $R.val$  denote  respectively the sequence number and the value stored in $R$. We observe that the sequence numbers in $\mathit{SN}$ and $R.seq$ follow the same pattern as in  Algorithm~\ref{alg:cas_mwa}, namely the successive values of $(R.seq,\mathit{SN})$ are $(0,0),(1,0),(1,1),(2,1),\ldots$

Indeed, when $\mathit{SN}$ is changed, it is incremented by one (line~\ref{lmr:writer_read2_SN}, line~\ref{lmr:writer_return}) and whenever $R.seq$ is changed to $x+1$, $x$ has previously been read from $\mathit{SN}$ (at line~\ref{lmr:writer_read1_SN} or line~\ref{lmr:writer_read2_SN}). In fact, in Algorithm~\ref{alg:maxreg}, each iteration of the repeat loop behaves as a \wrt{} instance of Algorithm~\ref{alg:cas_mwa}. A sequence number $x$ is first read from $\mathit{SN}$  (in line~\ref{lmr:writer_read1_SN} for the first iteration, line~\ref{lmr:writer_read2_SN} otherwise), then if $R.seq < x+1$ (line~\ref{lmr:writer_read_R}),  an attempt to changes $R.seq$ to $x+1$ (together with $R.val$ and $R.bits$) is made by applying a \cas{} (line~\ref{lmr:writer_cas_R}) before, if successful, making sure that $\mathit{SN} \geq x+1$ (line~\ref{lmr:writer_return}). Lemma~\ref{lem:epoch} thus still holds. It is restated below for convenience:

\begin{lemma}
\label{lem:epoch_wrtm}
A finite execution $\alpha$ can be written either as 
$E_0 \rho_1 D_1 \sigma_1 E_1$ $\ldots \rho_k D_k \sigma_k E_k$ or as 
$E_0 \rho_1 D_1 \sigma_1 E_1 \ldots$  $\sigma_{k-1} E_{k-1} \rho_k D_k$, 
for some integer $k \geq 0$, where:
  \begin{itemize}
  \item $\rho_\ell$ and $\sigma_\ell$ are the steps that respectively 
  change the value of $R.seq$ and $\mathit{SN}$  from $\ell-1$ to $\ell$ 
  ($\rho_\ell$ is a successful \cas{} line~\ref{lmr:writer_cas_R}, 
  $\sigma_\ell$ is also a successful \cas{}, applied within a \rd{}, line~\ref{line:mwmrma:reader_cas}, 
  a \wrt{}, line~\ref{lmr:writer_sn_used} or line~\ref{lmr:writer_return},  or an \adt{}, line~\ref{line:mwmrma:audit_return}).
  \item in any configuration in $E_\ell$, $R.seq = \mathit{SN} = \ell$,  and in any configuration in $D_\ell$, $R.seq = \ell = \mathit{SN} +1$. 
  \end{itemize}
\end{lemma}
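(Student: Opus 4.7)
The plan is to mirror the proof of Lemma~\ref{lem:epoch}, re-establishing in Algorithm~\ref{alg:maxreg} the three structural facts used there: (i) $\mathit{SN}$ takes successive values $0,1,2,\ldots$ (Invariant~\ref{obs:R_S_increase 1}), (ii) $R.seq$ is strictly increasing (Invariant~\ref{obs:R_S_increase 2}), and (iii) the coupling $\mathit{SN} \leq R.seq \leq \mathit{SN}+1$ holds after every finite prefix (Lemma~\ref{lem:SN_R}). Once these are in place, the partition into phases $E_\ell,D_\ell$ follows verbatim from the argument for Lemma~\ref{lem:epoch}.

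First I would verify (i) and (ii) by code inspection. Every step that modifies $\mathit{SN}$ is a successful $\mathit{SN}.\cas(sn-1,sn)$: the one at line~\ref{line:mwmrma:reader_cas} (in \rd{}), the new ones at lines~\ref{lmr:writer_sn_used} and~\ref{lmr:writer_return} (in \wrtm{}), and the one at line~\ref{line:mwmrma:audit_return} (in \adt{}). Each increments $\mathit{SN}$ by exactly one, which gives (i). For (ii), $R.seq$ is modified only at line~\ref{lmr:writer_cas_R}; by the branches at lines~\ref{lmr:writer_break}--\ref{lmr:writer_sn_used}, a writer reaches line~\ref{lmr:writer_cas_R} only when $lsn<sn$, so each successful $R.\cas$ strictly increases $R.seq$.

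The main step is (iii), proved by induction on the length of the prefix, exactly as in Lemma~\ref{lem:SN_R}, considering the first step that changes $\mathit{SN}$ or $R.seq$. When $\mathit{SN}$ is changed to $x$, the local variable $sn=x$ was obtained either from $\mathit{SN}.\mathsf{read}()+1$ (writer) or fetched/read from $R.seq$ (reader or auditor), so by the induction hypothesis $R.seq \geq x$ at that moment, and monotonicity preserves this. For the change of $R.seq$, I would use the following key observation: in a \wrtm{} whose local $sn$ equals $\mathit{SN}+1$ at its latest refresh (line~\ref{lmr:writer_read1_SN} or~\ref{lmr:writer_read2_SN}), the hypothesis gives $R.seq \geq sn-1$ at that moment; then when the writer later reads $R.seq = lsn$ with $lsn<sn$, monotonicity of $R.seq$ forces $lsn = sn-1$. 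Hence the $\cas$ at line~\ref{lmr:writer_cas_R} can only succeed when $R.seq = sn-1$, so $R.seq$ jumps from $sn-1$ to $sn$ and $\mathit{SN} \geq sn-1$ at that time, preserving (iii).

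With (i)--(iii) established, the partition is immediate: the sequence of distinct values of $(R.seq,\mathit{SN})$ must be $(0,0),(1,0),(1,1),(2,1),(2,2),\ldots$, so defining $\rho_\ell$ as the unique successful $R.\cas$ that raises $R.seq$ to $\ell$ and $\sigma_\ell$ as the unique successful $\mathit{SN}.\cas(\ell-1,\ell)$ yields the required decomposition. The only subtlety, and thus the expected obstacle, is the local-variable bookkeeping in step (iii): unlike in Algorithm~\ref{alg:cas_mwa}, the local $sn$ of a \wrtm{} may be refreshed several times within a single operation (via the \textbf{continue} at line~\ref{lmr:writer_sn_used}), so one must carefully argue that at the point of the successful $R.\cas$, the value of $sn$ still satisfies $sn = \mathit{SN}+1$ at its latest refresh and hence $sn = lsn+1$.
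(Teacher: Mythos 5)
Your proof is correct and follows essentially the same route as the paper's: the paper argues that each iteration of the repeat loop of \wrtm{} reproduces the access pattern of a \wrt{} of Algorithm~\ref{alg:cas_mwa} on $\mathit{SN}$ and $R$, so that Invariants~\ref{obs:R_S_increase 1} and~\ref{obs:R_S_increase 2} and Lemma~\ref{lem:SN_R} carry over and Lemma~\ref{lem:epoch} still holds, which is exactly the reduction you make explicit. You additionally spell out the induction and correctly isolate the one genuinely new point --- that the local $sn$ may be refreshed several times via the \textbf{continue} branch, so the coupling must be argued from its latest refresh --- which the paper handles only implicitly.
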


Whenever $R.val$ is  changed to $v$, $v$ has previously been read from the max register $M$ (line~\ref{lmr:writer_read_M}).
Therefore, is easy to see that:

\begin{invariant}
\label{obs:R_val_increase}
The successive values of $R.val$ are strictly increasing. 
\end{invariant}

During two consecutive phases $D_xE_x$, neither the sequence number nor the  value stored in $R$  change, 
and as we have just seen, $R.val$ is increasing, at most one unit ahead of $\mathit{SN}$. Therefore, similarly to Lemma~\ref{lem:R} for Algorithm~\ref{alg:cas_mwa}, we have

\begin{lemma}
  \label{lem:SN_R_val_increase}
  Let $\alpha$ be a finite execution. There exists $k\geq 0$ and 
  $v_1< \ldots < v_k$ such that the sequence of values of the  fields $(R.seq,R.val)$ is $(0,v_0),(1,v_1),\ldots,(k,v_k)$. 
\end{lemma}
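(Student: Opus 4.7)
The plan is to mirror the proof of Lemma~\ref{lem:R} from the register algorithm, but leveraging the already-established Invariant~\ref{obs:R_val_increase} and the phase decomposition of Lemma~\ref{lem:epoch_wrtm} to avoid redoing the sequence-number analysis. The target is to extract from the execution a well-defined value $v_\ell$ for each $\ell \in \{0,1,\dots,k\}$ such that $R.val = v_\ell$ whenever $R.seq = \ell$, then invoke Invariant~\ref{obs:R_val_increase} for strict monotonicity.

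First I would fix $\alpha$ and use Lemma~\ref{lem:epoch_wrtm} to obtain the partition into the phases $E_0, D_1, E_1, D_2, E_2, \ldots$, where $R.seq = \ell$ throughout the union $D_\ell \cup E_\ell$ (with $R.seq = 0$ during $E_0$). Let $k$ be the maximal index reached. Next I would enumerate every step of Alg.~\ref{alg:maxreg} that can modify the contents of $R$: these are (i) a successful $R.\cas$ at line~\ref{lmr:writer_cas_R} inside a \wrtm{}, which atomically rewrites all three fields $(R.seq, R.val, R.bits)$; and (ii) an $R.\fx$ at line~\ref{line:mwmrma:read_fetch} inside a \rd{}, which only flips one tracking bit and therefore leaves both $R.seq$ and $R.val$ intact. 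Since every change of $R.seq$ must come from case (i), and any change to $R.val$ can only come from case (i), it follows that $R.val$ is constant throughout each maximal interval during which $R.seq$ is constant.

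Defining $v_\ell$ to be the common value of $R.val$ during the phase where $R.seq = \ell$ then yields a well-defined sequence $(0,v_0),(1,v_1),\ldots,(k,v_k)$ matching the successive states of $(R.seq, R.val)$. It remains to show $v_0 < v_1 < \cdots < v_k$: this is immediate from Invariant~\ref{obs:R_val_increase}, since the distinct values $v_0, v_1, \ldots, v_k$ are exactly the successive values ever held by $R.val$ along $\alpha$ (each new $v_\ell$ being written by the $\cas$ step $\rho_\ell$ of Lemma~\ref{lem:epoch_wrtm}), and the invariant asserts these form a strictly increasing sequence.

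The only mild subtlety is ensuring that no ``spurious'' change of $R.val$ is missed — in particular, arguing that $R.\fx$ truly preserves $R.val$. This is a direct consequence of how $\fx(2^j)$ is specified in the preamble to Section~\ref{sec:register} (it XORs the argument into the stored word, and $2^j$ only affects the $j$-th tracking bit), together with the layout of $R$ as a triple with disjoint fields for the sequence number, value, and bitset. With this observation in hand, the rest of the argument is a straightforward bookkeeping exercise, and there is no heavy combinatorial or fairness argument required.
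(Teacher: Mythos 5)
Your proof is correct and follows essentially the same route as the paper, which likewise derives the lemma from the phase decomposition (Lemma~\ref{lem:epoch_wrtm}) together with Invariant~\ref{obs:R_val_increase}, observing that only the successful \cas{} at line~\ref{lmr:writer_cas_R} can alter $R.val$ and that this step simultaneously advances $R.seq$. The one step you leave implicit --- that each such \cas{} writes a value \emph{strictly} larger than the one it replaces, because the break test at line~\ref{lmr:writer_break} guarantees $lval < v \leq mval$ with $M$ a max register --- is exactly what Invariant~\ref{obs:R_val_increase} is meant to encapsulate, so the appeal is legitimate and matches the paper's own level of detail.
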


\paragraph{Wait-freedom}
The code of \rd{} and \adt{} operations is the same in Algorithm~\ref{alg:cas_mwa} and in Algorithm~\ref{alg:maxreg}, 
so they are wait-free as shown in Appendix~\ref{app:proof of register}.
For a \wrtm{} operation $op$, as for \wrt{},  concurrent \rd{} operations may prevent $op$ from successfully applying a \cas{} to $R$ and hence from exiting the repeat loop. This happens at most $m$ times, where $m$ is the number of readers, as implied by Lemma~\ref{lem:reader_once} which still holds.  Unlike for \wrt{} operations, the repeat loop continues (skipping the remainder  of the current iteration) if the current sequence number $sn$ has already been associated with a value ($R.seq \geq sn$, line~\ref{lmr:writer_sn_used}). However, we show  that this can happen a constant number of times before $R.val$ becomes greater than the input of $op$.

\begin{lemma}[wait-freedom of \wrtm{}]
  \label{lem:wrtm_wait_free}
  Every \wrtm{} operation terminates within a finite number of its own steps. 
\end{lemma}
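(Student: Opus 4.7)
The plan is to bound the number of iterations of the repeat loop of our \wrtm{} operation, adapting the strategy of Lemma~\ref{lem:cas_mwa_wf} to accommodate the new termination condition at line~\ref{lmr:writer_break}, which depends on values rather than sequence numbers.

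The key observation is that after our writer's $M.\wrtm{}(v)$ call at line~\ref{lmr:writer_write_M}---call this moment $T_M$---any subsequent read of $M$ returns a value $\geq v$ by the max-register semantics. Consequently, any writer that reads $M$ at or after $T_M$ obtains a value $mval \geq v$, and if its $R.\cas$ at line~\ref{lmr:writer_cas_R} succeeds, then $R.val$ becomes $\geq v$. Conversely, a successful $R.\cas$ that sets $R.val < v$ after $T_M$ must belong to a writer whose most recent read of $M$ preceded $T_M$. Since $M$ is re-read at line~\ref{lmr:writer_read_M} in every iteration of the repeat loop, and each \wrtm{} operation performs at most one successful $R.\cas$ (after which it exits the loop), each other writer contributes at most one such ``stale'' successful $R.\cas$ after $T_M$. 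Thus $R.val$ reaches a value $\geq v$ after only a bounded number of successful $R.\cas$ events following $T_M$. By Invariant~\ref{obs:R_val_increase}, once $R.val \geq v$ it remains so, and in the very next iteration our writer observes $lval \geq v$ and exits via the break at line~\ref{lmr:writer_break}.

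It remains to bound our own iterations until $R.val \geq v$ is observed. Every non-terminating iteration either takes the \emph{continue} branch at line~\ref{lmr:writer_sn_used} or performs a failing $R.\cas$. Each \emph{continue} strictly increases the local variable $sn$: after the helping $\mathit{SN}.\cas(sn-1, sn)$ we have $\mathit{SN} \geq sn$, so the subsequent read at line~\ref{lmr:writer_read2_SN} yields a new $sn$ of at least $sn + 1$. The invariant $sn \leq R.seq + 1$, which follows from Lemma~\ref{lem:SN_R} as in Appendix~\ref{app:proof of register}, bounds the total number of continues by the growth of $R.seq$ during the window in which $R.val < v$, which is itself bounded as above. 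A failing $R.\cas$ is caused either by a reader's \fx{} on $R$ (bounded by $m$ per value of $R.seq$ by the adaptation of Lemma~\ref{lem:reader_once}, whose proof transfers verbatim) or by another writer's successful $R.\cas$ (also bounded). Since $R.seq$ takes only finitely many values while $R.val < v$, the total iteration count is finite, and each iteration executes $O(m)$ shared-memory steps.

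The main obstacle is the rigorous accounting of ``stale'' successful $R.\cas$ after $T_M$. The key subtlety is that a stale writer's iteration can straddle $T_M$: its read of $M$ occurs before $T_M$ but its $R.\cas$ occurs after. I would need to argue that once such a stale iteration completes---either the $R.\cas$ succeeds, in which case the writer exits the repeat loop, or it fails, in which case the writer's next iteration re-reads $M$ (line~\ref{lmr:writer_read_M}) and is no longer stale---that writer contributes no further stale CAS on $R$. A secondary subtlety is verifying that the invariant $sn \leq R.seq + 1$ is preserved across the \emph{continue} branch, in particular that the helping $\mathit{SN}.\cas$ does not break it.
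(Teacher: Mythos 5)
Your proof is correct and follows the same overall skeleton as the paper's: exploit the max-register semantics of $M$ to argue that, from the moment $T_M$ at which the value is written to $M$, the pair $(R.seq,R.val)$ can change only boundedly many times before $R.val$ reaches a value $\geq v$, and then fall back on the register argument (readers' \fx{} steps on $R$, bounded via Lemma~\ref{lem:reader_once}) once $R$ stabilizes. Where you genuinely differ is in the central counting step. You bound the ``stale'' successful \cas{}es---those whose matching read of $M$ precedes $T_M$---by one per writer, giving a bound linear in the number of writers, and you must then deal with the straddling iterations you identify as the main obstacle (your resolution of that obstacle is correct). The paper instead observes that $(R.seq,R.val)$ changes \emph{at most once} after $T_M$ before $p$ is forced to terminate: the writer performing a second change must have read from $R$ at line~\ref{lmr:writer_read_R} the value installed by the first change, hence after $T_M$; since the read of $M$ at line~\ref{lmr:writer_read_M} follows the read of $R$ within the same iteration, that writer's $mval$ is already $\geq v$, so the second change already makes $R.val \geq v$ and $p$ subsequently breaks. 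This yields a constant bound and dispenses with both the per-writer bookkeeping and the straddling analysis. Your handling of the \textbf{continue} branch---each continue strictly increases $sn$, and $sn \leq R.seq+1$ by the adapted Lemma~\ref{lem:SN_R}---is sound and in fact slightly more explicit than the paper's treatment. Both arguments establish wait-freedom; the paper's gives a tighter step bound and a shorter proof.
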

\begin{proof}
  Let $wop$ be a \wrtm{} operation by some process $p$ with input $w$, and assume towards a contradiction, that it does not terminate in some infinite execution $\alpha$. 

  We claim that  after $w$ is written to $M$ in line~\ref{lmr:writer_write_M}, $(R.seq, R.val)$ changes at most once. To see why, let  $(\ell,v_{\ell})$ be the value of $(R.seq,$ $R.val)$ immediately after $p$ writes $w$ to $M$. By Lemma~\ref{lem:SN_R_val_increase}, if $(R.seq,$ $R.val)$ is next changed at least twice, its two subsequent values are $(\ell+1,v_{\ell+1})$ and  $(\ell+2,v_{\ell+2})$ with $v_{\ell} < v_{\ell+1}  < v_{\ell+2}$. Let $q$ be the process that  changes $(R.seq,R.val)$ from
  $(\ell+1,v_{\ell+1})$ to $(\ell+2,v_{\ell+2})$ by applying  a successful \cas{} in line~\ref{lmr:writer_cas_R}. Before applying this \cas{}, $q$ in that order reads $(\ell+1,v_{\ell+1})$ from $R$ (line~\ref{lmr:writer_read_R}) and $v_{\ell+2}$ from $M$ (line~\ref{lmr:writer_read_M}). Each of these steps occur after $w$ is written to $M$ by $p$. Because $M$ is a max register, $v_{\ell+2} \geq w$, and therefore, $R.val \geq w$ after $(R.seq,R.val)$ has been changed to $(\ell+2,v_{\ell+2})$. 
  Since $wop$ does not terminate, and $p$ reads $R$ (line~\ref{lmr:writer_read_R}) in each iteration of repeat loop, it eventually discovers that $R.val \geq w$, and exits  the loop with the break statement  (line~\ref{lmr:writer_break}): a contradiction.

  Let therefore $(\ell',v_{\ell'})$ be the final value of $(R.seq,R.val)$. After $R.seq$ is set to $\ell'$, by Lemma~\ref{lem:epoch_wrtm}, $\mathit{SN} \in \{\ell'-1,\ell'\}$. In the first iteration  in which $p$ reads $(\ell',v_{\ell'})$ from $R$ (line~\ref{lmr:writer_read_R}), if $\ell' \geq x+1$, where $x$ is the last value read from $\mathit{SN}$ by $p$ before this iteration,  $p$ reads therefore $\ell'-1$ or $\ell'$ from $\mathit{SN}$ (line~\ref{lmr:writer_sn_used}). If $\ell'-1$ is read, then in the following iteration, if $\mathit{SN}$ has not meanwhile been changed to $\ell'$, the \cas{} applied to $\mathit{SN}$ succeeds, and $p$ finally reads $\ell'$ from $\mathit{SN}$. To summarize, there is a configuration $C$ in $\alpha$ after which the following always holds (1) $R.seq = \ell' = \mathit{SN}, R.val = v_{\ell'} < w$ and (2) for process $p$, $sn = \ell'+1$.

  The rest of the proof is the same as  the proof of wait-freedom for \wrt{} operation in Algorithm~\ref{alg:cas_mwa}. Consider an iteration of the repeat loop that starts after $C$, and let $(sr_1,vr_1,br_1)$ be the triple read from $R$ in this iteration. Note that $sr_1 = \ell' < sn = \ell'+1$ and $vr_1 = v_{\ell'} < w$. Therefore, $p$ applies a \cas{} to $R$ at the end of this iteration (line~\ref{lmr:writer_cas_R}), which fails since $wop$ does not terminate. Let $(sr_2,vr_2,br_2)$ be the value of $R$ immediately before this \cas{} is applied. 
  Since $R.seq$ and $R.val$  no longer change, $br_2 \neq br_1$: at least one reader applies a \fx{} to $R$ during this iteration of the repeat loop.
  The same reasoning applies to the  next iterations. In each of them, $R.seq$ and $R.val$ are the same, and thus a reader applies a \fx{} before the \cas{} of line~\ref{lmr:writer_cas_R}. By Lemma~\ref{lem:reader_once}, each reader applies at most once \fx{} to $R$ while it holds the same sequence number: a contradiction.
\end{proof}

\paragraph{Linearizability} Let $\alpha$ be a finite execution, and $H$ be the history of the \rd{}, \wrtm{} and \adt{} operations in $\alpha$. 

We  define an execution $\beta$ of Algorithm~\ref{alg:cas_mwa} that is indistinguishable from $\alpha$ for any reader and any auditor. This is made possible by the fact that \rd{} and \adt{} share the same code in both Algorithm~\ref{alg:cas_mwa} and Algorithm~\ref{alg:maxreg}. To linearize $\alpha$, we start from $L(\beta)$ (which contains every terminated \rd{} and \adt{} of $\alpha$), replace each \wrt{} with a \wrtm{} with the same input, and then place the remaining terminated \wrtm{} operations. These last operations are \emph{silent}, since their input is never read. 

The construction of execution $\beta$ is as follows. By   Lemma~\ref{lem:epoch_wrtm} and Lemma~\ref{lem:SN_R_val_increase}, there exists values  $v_1 < \ldots < v_k$ such that $\alpha$ can be written as $\alpha = E_0\rho_1D_1\sigma_1E_1\ldots\rho_kD_k\sigma_kE_k$ or $\alpha = E_0\rho_1D_1\sigma_1E_1\ldots\rho_kD_k$ $\sigma_kE_kD_k$. 
Let $q_1,\ldots,q_k$ be the (not necessarily distinct) processes that apply steps $\rho_1,\ldots,\rho_k$, respectively. Recall that $\rho_x$ changes $(R.seq,R.val)$ from $(x-1,v_{x-1})$ to $(x,v_x)$.  Before applying $\rho_x$, process $q_x$ reads $x$ from $\mathit{SN}$ (in line~\ref{lmr:writer_read1_SN} or line~\ref{lmr:writer_read2_SN}), reads a triple $(x-1,v_{x-1},b_{x-1})$ from  $R$ (in line~\ref{lmr:writer_read_R}), writes $v_{x-1}$ to $V[x-1]$ and depending on $b_{x-1}$, appropriately sets to \emph{true} some registers in the array $B[x-1]$ (line~\ref{lmr:writer_record_reads}). This sequence of steps is denoted $A_x$. A key observation is that   $A_x\rho_x$ is the sequence of steps applied by $q_x$ in a visible \wrt{} operation with input $v_x$ in some execution of Algorithm~\ref{alg:cas_mwa}. 

$\beta$ is the execution obtained by removing from $\alpha$ every step by \wrtm{} operations, except, for each $x, 1 \leq x \leq k$, steps $A_x,\rho_x$ and $\sigma_x$. Indeed, removed steps are failed attempts to modify $R$ or $\mathit{SN}$ or are reads  and writes to  $M$, which is not accessed by reader and auditor. They are therefore invisible for readers and auditors. A removed step may be also a $\mathsf{write}(v_x)$ to $V[x]$  or setting some register $B[x][j]$ to true. This is indiscernible for the auditors, since $V[x]$ and the $B[x]$ are set to their final value by $q_x$ when applying $A_x$, and no auditors access $V[x]$ and $B[x]$ before $\rho_x$. 

We then  remove   all invocations and responses of \wrtm{} operations and, instead, we place for each $x, 1 \leq x \leq k$, an invocation of $\wrt{}(v_x)$ by $q_x$ immediately before $A_x$, and a matching response (except perhaps for $x=k$) immediately after $\sigma_x$. Finally, in $\beta$, each step $\sigma_x$ is applied by $q_x$. We obtain:  

\begin{claim}
  \label{claim:valid_and_indistinguishable}
  $\beta$ is a valid execution of Algorithm~\ref{alg:cas_mwa} and, if $p$ is a reader or an auditor, then $\indp{\alpha}{\beta}$.  
\end{claim}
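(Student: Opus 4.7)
The plan is to establish the two halves of the claim separately. For validity of $\beta$ as an execution of Algorithm~\ref{alg:cas_mwa}, I would argue that the retained block $A_x,\rho_x,\sigma_x$, together with the inserted invocation and response of $\wrt{}(v_x)$ by $q_x$, is exactly the trace of a successful $\wrt{}(v_x)$ in Algorithm~\ref{alg:cas_mwa} whose repeat loop terminates on the first iteration. Concretely, I would walk through \wrt{} and match each step: the $\mathit{SN}.\mathsf{read}$ in $A_x$ sets the local variable $sn$ to $x$; the $R.\mathsf{read}$ returns $(x-1, v_{x-1}, b_{x-1})$, so $lsn < sn$ and the break of line~\ref{line:mwmrma:writer_break} is not triggered; the writes to $V[x-1]$ and $B[x-1]$ match line~\ref{line:mwmrma:writer_record_reads}; the successful \cas{} $\rho_x$ writes $(x, v_x, \mathit{rand}_x)$, matching line~\ref{line:mwmrma:writer_cas} with input $v_x$; and $\sigma_x$ is exactly the $SN.\cas(x-1,x)$ of line~\ref{line:mwmrma:writer_return}. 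Reader and auditor code is identical across the two algorithms, so their traces transfer verbatim.

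For indistinguishability, I would classify each step removed from $\alpha$ into one of three buckets: (a) accesses to the max register $M$, which no reader or auditor ever touches; (b) reads of $R$ and $\mathit{SN}$, failed $R.\cas$ attempts in non-terminating iterations, and writes to $V[\cdot]$ or $B[\cdot][\cdot]$ that merely rewrite values already set in some $A_x$ -- none of which alter the externally observable state of $R, \mathit{SN}, V, B$; and (c) $SN.\cas$ calls issued in line~\ref{lmr:writer_sn_used}, which either fail (no state change) or succeed, in which case they coincide with some $\sigma_x$ that is retained in $\beta$ (merely reattributed to $q_x$). Hence the sequence of state-changing events on $R, \mathit{SN}, V, B$ is identical in $\alpha$ and $\beta$, and its interleaving with the retained reader/auditor steps is preserved by construction. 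Since readers and auditors only touch those objects and their code is unchanged, every primitive they issue returns the same response in $\beta$ as in $\alpha$, giving $\alpha|_p = \beta|_p$.

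The main obstacle is the bookkeeping in part (i) when the same process issues several \wrtm{} operations: one must verify that the synthesized $\wrt{}(v_x)$ and $\wrt{}(v_y)$ for $q_x = q_y$ remain non-overlapping in $\beta$, and that the local state of the process at the start of each $A_x$ is compatible with beginning a fresh \wrt{} invocation. Both follow from Lemma~\ref{lem:epoch_wrtm}, which orders the phases $E_{x-1}\rho_x D_x \sigma_x E_x$ for $x=1,2,\ldots$, together with the fact that each $\rho_x$ sits inside a unique \wrtm{} instance of $q_x$, itself sequential in $\alpha$. A secondary subtlety is the reattribution of $\sigma_x$ to $q_x$ when in $\alpha$ it was applied by a helper: this is legitimate because line~\ref{lmr:writer_sn_used} and line~\ref{lmr:writer_return} both invoke $SN.\cas(x-1,x)$ with identical arguments, so the step is indistinguishable from one produced by the writer code of Algorithm~\ref{alg:cas_mwa}.
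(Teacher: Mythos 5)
Your decomposition (validity of the synthesized $\wrt{}(v_x)$ traces, plus a case analysis showing every removed step is invisible to readers and auditors) is essentially the paper's own justification, which is woven into the construction of $\beta$ rather than set out as a separate proof; your version is, if anything, more explicit about the step-by-step matching of $A_x\rho_x\sigma_x$ against a first-iteration-success \wrt{} and about the bookkeeping for a writer with several \wrtm{} instances. One point in your bucket (b) is stated incorrectly, though the conclusion survives: a removed write of $\mathit{true}$ to $B[x-1][j]$ by a writer whose \cas{} later fails may \emph{precede} $q_x$'s write of the same cell in $A_x$, so it is not a ``rewrite of a value already set'' and removing it genuinely changes the state of $B[x-1][j]$ during the interval between the removed write and $A_x$. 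The argument that this is nevertheless unobservable needs the timing fact the paper supplies: readers never access $V$ or $B$, and an auditor reads $V[x-1]$ and $B[x-1][\cdot]$ only after obtaining a sequence number $\geq x$ from $R$ at line~\ref{line:mwmrma:audit_read}, hence only after $\rho_x$, by which point $A_x$ has already written the final (superset) values. With that observation added, your bucket (b) closes and the rest of the proposal stands as written.
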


There are exactly $k$ \wrt{} operations in $\beta$: $\wrt{}(v_1), \ldots, \wrt{}(v_k)$. For each $x, 1 \leq x \leq k$, $\wrt{}(v_x)$  is classified as visible, since it applies a successful \cas{} to $R$, and $sn(\wrt{}(v_x)) = x$. As shown in Section~\ref{subsec:register_proof}, the linearization  $L(\beta)$ of $\beta$   includes in particular the operations $\wrt{}(v_1), \ldots, \wrt{}(v_k)$ in that order. 

A \wrtm{} operation $op$ with input $w$ is classified as  \emph{visible} if there exists $x, 1 \leq x \leq k$ such that  $w = v_x$ and step $\rho_x$ is in the execution interval of $op$. 
Otherwise, if $op$ terminates, it is classified as \emph{silent}. 
Note that for each $x, 1 \leq x \leq k$, a visible \wrtm{} exists,
since a $\wrtm{}(v_x)$ is invoked before $R.val$ is changed to $v_x$. 
This \wrtm{} operation cannot terminate before $R.val \geq v_x$ 
or before applying the  \cas{} $\rho_x$ that changes $R.val$ to $v_x$.

The next two technical lemmas will be used for showing that $L(\alpha)$ extends the real-time order between operations.

\begin{lemma}
  \label{lem:sigma_visible_wrtm}
  If operation \wrtm{}($v_x$) is visible, then $\sigma_x$ is in the execution interval of $op$. 
\end{lemma}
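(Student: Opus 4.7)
My plan is to show the two endpoints separately: $\sigma_x$ falls after $op$'s invocation and before $op$'s response. The first is immediate, since visibility puts $\rho_x$ in $op$'s interval and Lemma~\ref{lem:epoch_wrtm} puts $\sigma_x$ strictly after $\rho_x$. The real work is the upper bound, which I would establish by a case analysis on how $op$ exits the repeat loop.

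First, I would handle the case where $op$ exits via the successful \cas{} at line~\ref{lmr:writer_cas_R}, which is some $\rho_{sn}$ where $sn$ is $op$'s local sequence number at that moment. Since $op$ wrote its input $v_x$ to $M$ at line~\ref{lmr:writer_write_M} before reading $mval$ from $M$ at line~\ref{lmr:writer_read_M}, the max-register semantics of $M$ yield $mval \geq v_x$, and Lemma~\ref{lem:SN_R_val_increase} then forces $sn \geq x$. If $sn > x$, Lemma~\ref{lem:epoch_wrtm} gives $\sigma_x < \rho_{x+1} \leq \rho_{sn}$, placing $\sigma_x$ before $op$'s own \cas{}. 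If $sn = x$, then $op$'s \cas{} is $\rho_x$ itself, and the following $SN.\cas(x-1,x)$ at line~\ref{lmr:writer_return} is either $\sigma_x$ (if it succeeds) or has been preempted by a helper step from another operation (at line~\ref{line:mwmrma:reader_cas}, line~\ref{lmr:writer_sn_used}, line~\ref{lmr:writer_return}, or line~\ref{line:mwmrma:audit_return}) that already performed $\sigma_x$; in either subcase $\sigma_x$ lies at or before $op$'s last step.

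Next, I would handle the break case (line~\ref{lmr:writer_break}). Here $op$ reads $(lsn, lval, bits)$ from $R$ with $lval \geq v_x$, so Lemma~\ref{lem:SN_R_val_increase} yields $lsn \geq x$. If $lsn > x$, the read itself is after $\rho_{lsn}$, and $\sigma_x < \rho_{x+1} \leq \rho_{lsn}$ by Lemma~\ref{lem:epoch_wrtm}, so $\sigma_x$ has already occurred when $op$ reads $R$. If $lsn = x$, then $op$ sets $sn = x$ and the same helping-or-succeeding argument as in the $sn = x$ subcase above places $\sigma_x$ no later than $op$'s $SN.\cas(x-1,x)$ at line~\ref{lmr:writer_return}. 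The subtle point I expect to need the most care is precisely these $sn = x$ and $lsn = x$ subcases: $\rho_x$ and $\sigma_x$ may be applied by different processes, so one must combine the strict phase alternation of Lemma~\ref{lem:epoch_wrtm} with the explicit helping \cas{} at line~\ref{lmr:writer_return} to ensure $\sigma_x$ is bracketed inside $op$'s execution interval even when $op$'s own $SN$-\cas{} fails.
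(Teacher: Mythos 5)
Your proof is correct and follows essentially the same route as the paper's: both reduce the upper bound to a case analysis in which either the phase structure of Lemma~\ref{lem:epoch_wrtm} forces $\sigma_x$ to occur before a later step of $op$ (when a sequence number beyond $x$ has already been reached), or the helping $\mathit{SN}.\cas(x-1,x)$ at line~\ref{lmr:writer_return} either is $\sigma_x$ or, by failing, witnesses that $\sigma_x$ has already occurred. The only detail the paper adds is the one-line observation that the claim is immediate when $op$ never terminates, which your framing (an upper bound is only needed when $op$ exits the loop) covers implicitly.
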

\begin{proof}
$\sigma_x$ is the successful \cas{} that changes $\mathit{SN}$ from $x-1$ to $x$.     By definition, $\rho_x$ is in the execution interval of $op$. 
Since $\sigma_x$ follows $\rho_x$ in $\alpha$, the lemma is true if $op$ does not terminate.

  If $op$ terminates, then since its input is $v_x$, it reads a value $v \geq v_x$ from $R$ or applies a successful \cas{} that changes $R.val$ to $v \geq v_x$ (line~\ref{lmr:writer_cas_R}). 
  Since the successive values of $R.val$ are $v_0 < v_1< \ldots < v_k$, $v \in \{v_x,\ldots,v_k\}$. If $v \in \{v_{x+1},\ldots,v_k\}$, it follows from Lemma~\ref{lem:epoch_wrtm} that $\emph{SN} \geq x$ when $R$ is read or the \cas{} of line~\ref{lmr:writer_cas_R} applied. $\sigma_x$  therefore occurs before this step.   
  Since $\rho_x$  is in the execution interval of $op$ and precedes $\sigma_x$, $\sigma_x$ is also in the execution interval of $op$.

  Suppose now that  $v = v_x$. If  the repeat loop terminates after $R$ is read (break statement of line~\ref{lmr:writer_break}),  $R.seq \geq x$ when this read is applied (Lemma~\ref{lem:epoch_wrtm}) and hence $sn \geq x$ at the end of the loop. Similarly, if the loop terminates after applying a successful \cas{} that changes $R.val$ to $v_x$, this step by Lemma~\ref{lem:epoch_wrtm} also changes $R.seq$ to $x$ and therefore $sn = x$ at the end of the loop. In both cases, the \cas{} in line~\ref{lmr:writer_return} tries to changes $\mathit{SN}$ from $x-1$ to $x$. If it succeeds, $\sigma_x$ is in the execution interval occurs. If not, $\sigma_x$ has already occurred, and, since it follows step $\rho_x$ which is in the execution interval of $op$,  $\sigma_x$ is also in this interval.
\end{proof}

\begin{lemma}
  \label{lem:rho_silent_wrtm}
  Let $op$ be a silent \wrtm{}  with input $w$ satisfying $v_{x-1} < w \leq v_x$, for some $x, 1 \leq x \leq k$. The last step of $op$ follows $\rho_x$ in $\alpha$. 
\end{lemma}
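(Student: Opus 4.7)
The plan is to split on how the silent operation $op$ exits its repeat loop. By inspection of the \wrtm{} code there are only two possibilities: $op$ exits via the break on line~\ref{lmr:writer_break}, or via a successful \cas{} on line~\ref{lmr:writer_cas_R}. Throughout, following the remark opening this section, I identify $op$'s input with the pair $v = (w,N)$ written to $M$ on line~\ref{lmr:writer_write_M} and use the lexicographic order, so $v > v_{x-1}$ and $v \leq v_x$.

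In the break case, $op$ has just read a triple $(lsn,lval,bits)$ from $R$ with $lval \geq v > v_{x-1}$. Since by Lemma~\ref{lem:SN_R_val_increase} the successive values of $R.val$ are $v_0 < v_1 < \dots < v_k$, this forces $lval \geq v_x$, hence the read happens after $\rho_x$. The last step of $op$ is the $\mathit{SN}.\cas$ on line~\ref{lmr:writer_return}, which is strictly after that read and therefore after $\rho_x$.

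In the successful \cas{} case, the \cas{} is itself some $\rho_y$, storing the value $mval$ in $R.val$, so $v_y = mval$. Because $mval$ was read from $M$ on line~\ref{lmr:writer_read_M} after $op$'s own $M.\mathsf{writeMax}(v)$ and $M$ is a max register, $mval \geq v > v_{x-1}$, giving $y \geq x$. The subtle point, and the main obstacle, is ruling out $y > x$: in that case the read of $R$ on line~\ref{lmr:writer_read_R} within the same iteration returned $lval = v_{y-1} \geq v_x \geq v$, which would have triggered the break on the next line, contradicting the fact that the iteration reached the \cas{}. Hence $y = x$, so $op$ itself performs $\rho_x$, and again its last step on line~\ref{lmr:writer_return} is after $\rho_x$. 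In both cases the last step of $op$ follows $\rho_x$, completing the proof.
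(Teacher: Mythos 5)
Your proof is correct and follows essentially the same route as the paper's: a case split on whether the silent \wrtm{} exits its loop via the break or via a successful \cas{}, using the monotonicity of the successive values of $R.val$ (Lemma~\ref{lem:SN_R_val_increase}) together with the fact that the final $\mathit{SN}.\cas$ strictly follows the exit step. The only difference is that in the \cas{} case you pin the successful \cas{} down to exactly $\rho_x$, whereas the paper only argues it is some $\rho_{x'}$ with $x' \geq x$ (which already suffices, since $\rho_{x'}$ follows $\rho_x$); your extra refinement is correct but unnecessary.
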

\begin{proof}
   $\rho_x$ is the successful \cas{} that changes $R.val$ from $v_{x-1}$ to $v_x$. Let $p$ be the process that performs $op$. 
  Since $op$ terminates, there exists $v \geq w$ such that $p$ reads $v$ from $R.val$ or successfully  applies a \cas{} that  changes $R.val$ to $v$.  In both   cases, since the successive values of $R.val$ are $v_0 < v_1,<  \ldots < v_k$, $v \geq v_x$. Therefore, in the first case $\rho_x$ precedes the read of $R$ in $op$.  In the second case, the successful \cas{} is either $\rho_x$ or $\rho_{x'}$ for some $x' \geq x$. 
  This is not the last step in $op$, since $p$ tries to update $\mathit{SN}$ before returning (line~\ref{lmr:writer_return}). 
\end{proof}

Let $H'$ be the complete history obtained by completing or removing non-terminated operations in $H$ as follows: \rd{} and \adt{} operations that do not appear in $L(\beta)$ are removed. 
These operations do not terminate in $\beta$, and thus, also in $\alpha$,
since $\indp{\alpha}{\beta}$ for any auditor or reader $p$. 
For every non-terminated \rd{} or \adt{} operation $op$ that appears in $L(\beta)$, 
we add a response for $op$, as in the sequential execution $L(\beta)$, at the end of $H$. 
For every $x, 1 \leq x \leq k$, and non-terminated visible $\wrtm(v_x)$, 
we add a response at the end of $H$, in arbitrary order. 
Every unclassified \wrtm{} operation is removed.

To define $L(\alpha)$, we start (1) from $L(\beta)$.   We then (2) replace for each $x, 1 \leq x \leq k$,  $\wrt{}(v_x)$ by the  set of visible  \wrtm{}($v_x$) operations, ordered arbitrarily. Finally, we  (3) place each remaining silent \wrtm{}($w$) operation $op$, respecting the real time precedence  between $op$ and already placed operations, and after \wrtm{}($v_x$), where $v_{x-1} < w \leq v_x$. 

For each $x, 1 \leq x \leq k$,  \wrt{}($v_x$) is a visible \wrt{} with sequence number $x$. It is thus linearized in $L(\beta)$ with $\rho_x$ (Rule R1, Section~\ref{subsec:register_proof}). Since $\rho_x$ is in the execution interval of  visible \wrtm{}($v_x$) operations, step (2) does not break the real-time precedence with operations linearized in $L(\beta)$ with one of their steps. 

For step (3), 
silent \wrtm{} operations are first sorted by their real time order. 
They are then placed one after the other as follows. 
Operation \wrtm{}($w$), with $v_{x-1} < w \leq v_x$ is placed after $\rho_x$, 
that is, after visible operations \wrtm{}($v_x$), 
and immediately after every already-placed operation that precedes it in $\alpha$. 
This always possible, since by Lemma~\ref{lem:rho_silent_wrtm}, $op$ does not terminate before $\rho_x$. 

\begin{lemma}
  \label{lem:wrtm_rt_order}
  Let $op$, $op'$ be two operations in $H'$. If $op$ ends before $op'$ starts in $\alpha$, $op$ precedes $op'$ in $L(\alpha)$. 
\end{lemma}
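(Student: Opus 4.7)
The plan is to prove the lemma by case analysis on the types of $op$ and $op'$ (each is one of \rd{}, \adt{}, visible \wrtm{}, silent \wrtm{}), exploiting the three-step construction of $L(\alpha)$ from $L(\beta)$. The central observation is that the ordering of every non-silent operation is determined by steps (1) and (2), while silent \wrtm{} operations are explicitly placed in step (3) to respect real-time precedences with already-placed operations.

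For the cases where neither $op$ nor $op'$ is a silent \wrtm{}, my plan is to reduce the question to Lemma~\ref{lem:rt_order} applied to $\beta$. The enabling facts are: (i) by Claim~\ref{claim:valid_and_indistinguishable}, every \rd{}/\adt{} operation in $H'$ has identical invocation and response events in $\beta$; and (ii) for every visible \wrtm{}$(v_x)$ in $H'$, its substitute \wrt{}$(v_x)$ in $\beta$ has its invocation placed immediately before $A_x$ and its response immediately after $\sigma_x$, and both $A_x$ and $\sigma_x$ lie inside the execution interval of \wrtm{}$(v_x)$ in $\alpha$ (the latter by Lemma~\ref{lem:sigma_visible_wrtm}). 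Hence, whenever $op$ ends before $op'$ starts in $\alpha$, the corresponding operations in $\beta$ satisfy the same real-time precedence in the history of $\beta$; Lemma~\ref{lem:rt_order} places them in this order in $L(\beta)$, and steps (1)--(2) transfer the order to $L(\alpha)$. The sub-case of two visible \wrtm{}$(v_x)$ with the same index $x$ is vacuous, since $\rho_x$ occurs only once in $\alpha$ and cannot lie in two disjoint execution intervals; in the distinct-index case $x<y$ is forced and preserved.

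The remaining cases, where at least one of $op$, $op'$ is a silent \wrtm{}, are governed by step (3), which places each silent \wrtm{} respecting real-time precedence with already-placed operations. The main obstacle I expect is verifying consistency: step (3) also requires a silent \wrtm{}$(w)$ with $v_{x-1}<w\le v_x$ to be placed after the visible \wrtm{}$(v_x)$, and one must check that this second constraint never clashes with the real-time constraint. To handle this, I would invoke Lemma~\ref{lem:rho_silent_wrtm}: the last step of such a silent \wrtm{}$(w)$ occurs after $\rho_x$ in $\alpha$. Thus, if a silent \wrtm{}$(w)$ ends before a non-silent $op'$ starts in $\alpha$, then $\rho_x$ precedes $op'$'s invocation, and the argument from the first family, applied to visible \wrtm{}$(v_x)$ versus $op'$, shows that the visible \wrtm{}$(v_x)$ already precedes $op'$ in $L(\alpha)$; this leaves room to insert the silent operation strictly between them. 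When both operations are silent \wrtm{}, the construction sorts silent \wrtm{}s by real-time order before placing them one after the other, so the one ending first is placed first. The mirror case where $op'$ is silent \wrtm{} and $op$ is not is symmetric.
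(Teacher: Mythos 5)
Your overall decomposition into cases matches the paper's, and several branches (visible \wrtm{} before \rd{}/\adt{}, the silent \wrtm{} cases handled by step (3) together with Lemma~\ref{lem:rho_silent_wrtm}, and the \rd{}/\adt{} versus \rd{}/\adt{} case) are sound. But your enabling fact (ii) is false in the part you leave unjustified, and this breaks the branch where $op'$ is a visible \wrtm{}. The steps $A_{x'}$ are performed by $q_{x'}$, the writer that applies the successful \cas{} $\rho_{x'}$, and this process is in general \emph{not} the process performing the visible $\wrtm{}(v_{x'})$: in Alg.~\ref{alg:maxreg} the value installed by $\rho_{x'}$ is $mval$ read from $M$, so $q_{x'}$ may be helping to install a pair written to $M$ by someone else. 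The first steps of $A_{x'}$ (the reads of $\mathit{SN}$ and of $R$ in $q_{x'}$'s final loop iteration) can occur before the $\wrtm{}(v_{x'})$ operation is even invoked --- only $q_{x'}$'s read of $M$ must follow that invocation. Hence in $\beta$ the substitute $\wrt{}(v_{x'})$, whose invocation is placed immediately before $A_{x'}$, may begin \emph{before} $op$ responds; the two operations are then concurrent in $\beta$, the hypothesis of Lemma~\ref{lem:rt_order} is not met, and your reduction yields nothing. (Your secondary argument for consistency of step (3) inherits the same problem, since it appeals to "the argument from the first family" for visible $\wrtm{}(v_x)$ versus $op'$ without the needed real-time precedence.)

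The conclusion in that branch is still true, but it must be obtained the way the paper does it: directly from the placement rules of $L(\beta)$ rather than from real-time order in $\beta$. Concretely, if $op$ is a \rd{} or \adt{} that completes before a visible $\wrtm{}(v_{x'})$ starts, then $op$'s associated read/fetch of $\mathit{SN}$ or $R$ occurs before $\rho_{x'}$, so $sn(op) < x'$, and rules R1/R2 place $op$ before the visible $\wrt{}(v_{x'})$ in $L(\beta)$; similarly two visible \wrtm{} operations are ordered because $\rho_x$ and $\rho_{x'}$ lie in their respective disjoint intervals, forcing $x < x'$ and hence the R1 order of the corresponding \wrt{} operations. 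I recommend replacing the appeal to Lemma~\ref{lem:rt_order} by this sequence-number argument wherever $op'$ is a visible \wrtm{}; the rest of your case analysis can stand.
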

\begin{proof}
  The lemma is true if $op$ or $op'$ is a silent \wrtm{}, since each silent \wrtm{} is placed in $L(\alpha)$ before every operation it precedes, and after every operation it follows in the real-time order. 
  The lemma is also true if $op$ and $op'$ are \rd{} or \adt{} operations. Indeed, $op$ ends before  $op'$ starts also in $\beta$, and therefore appears before $op'$ in $L(\beta)$, and thus also in $L(\alpha)$. We examine the remaining cases next:
  \begin{itemize}
  \item $op$ and $op'$ are two visible \wrtm{}. Let $v_{x}$ and $v_{x'}$ be their respective inputs. 
  By definition, $\rho_x$ and $\rho_{x'}$ are in the execution interval of $op$ and $op'$,  respectively.
  Therefore, $x < x'$ and $v_x < v_{x'}$, since $op$ ends before $op'$ starts. 
  In $L(\beta)$, $\wrt{}(v_x)$ and $\wrt{}(v_{x'})$ are linearized according to the order in which their associated linearization steps occur in $\beta$ (rule R1). These steps are $\rho_x$ and $\rho_{x'}$. 
  Therefore,  $\wrt{}(v_x)$ is before  $\wrt{}(v_{x'})$ in $L(\beta)$, and hence by step (2) of the construction of $L(\alpha)$,  $\wrtm{}(v_x)$ precedes also $\wrtm{}(v_{x'})$ in $L(\alpha)$.

  \item $op$ is a \rd{} or an \adt{} and $op$ is a visible \wrtm{}. Since $op$ is linearized in $L(\beta)$, it has a sequence number $x = sn(op)$, which is the value read from $\mathit{SN}$ (line~\ref{line:mwmrma:read_sn}) if $op$ is a silent \rd{}, fetched or read from $R$ (line~\ref{line:mwmrma:read_fetch}) if $op$ is a direct \rd{} or an \adt{}.  Let $v_{x'}$ be the input of $op$.   
  Since $op$ ends before $op'$ starts, and by definition of visible \wrtm{}, $\rho_{x'}$ is in the execution interval of $op'$, $\mathit{SN}$ is read or $R$ fetched/read  before $\rho_{x'}$. Hence $sn(op) < x'$, and it thus follows that $op$ is placed before the \wrt{} operation with input $v_{x'}$ in $L(\beta)$. Therefore, by step (2) of the construction of $L(\alpha)$, $op$ precedes $op'$ in $L(\alpha)$. 
    
  \item $op$ is a visible \wrtm{} and $op'$ is a \rd{} or an \adt{}. As in the previous case, let $v_x$ be the input of $op$, and let $x' = sn(op)$. If $op'$ has a linearization step $ls(op')$, that is $op'$ is an \adt{}, a direct \rd{} or a silent \rd{} in which $\mathit{SN}.\mathsf{read}$ is applied during $E_{x'}$, $ls(op')$ follows $\rho_x$ in $\alpha$ and thus also in $\beta$. Hence $op'$ appears after $\wrt{}(v_x)$ in $L(\beta)$. Therefore, by step (2) of the construction, $op'$ is after $op$ in $L(\alpha)$.

    It remains to examine the case in which $op'$ is a silent \rd{} without a linearization step. This means that $\emph{SN}$ is read in $op$ (in line~\ref{line:mwmrma:read_sn}) during phase $D_{x'+1}$. 
    Since $\rho_x$ and, by Lemma~\ref{lem:sigma_visible_wrtm}, also $\sigma_x$ are included in $op$'s execution interval, phase $D_x$ is contained in $op$'s execution interval. Since $op$ ends before $op'$ starts, we therefore have $x < x'+1$.  $op'$ is placed in $L(\beta{})$ according to rule $R2$ immediately before the \wrt{} operation \wrt{}($v_{x'+1}$). 
    Since $x < x'+1$, $op'$ is placed after \wrt{}($v_{x}$) in $L(\beta)$, and thus after $op= \wrtm{}(v_x)$ in $L(\alpha)$. \qedhere
  \end{itemize}
\end{proof}

We next prove that $L(\alpha)$ is a sequential execution of an auditable max register.

\begin{lemma}
  \label{lem:wrtm_read_return_max}
  If a \rd{} operation $rop$ by some process $p$ in $H'$ returns $v$, then $v$ is the largest input of a \wrtm{} that precedes $rop$ in $L(\alpha)$.
\end{lemma}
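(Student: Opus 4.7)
The plan is to reduce the claim for the max register to the corresponding fact already proven for the plain auditable register in Lemma~\ref{lem:read_valid}, and then argue that the extra silent \wrtm{} operations placed by step (3) of the construction cannot contribute an input larger than $v$.

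First, I would exploit the indistinguishability provided by Claim~\ref{claim:valid_and_indistinguishable}. Since $rop$ is performed by a reader and $\indp{\alpha}{\beta}$, $rop$ also appears in $\beta$ and returns the same value $v$ there. By Lemma~\ref{lem:read_valid} applied to the execution $\beta$ of Algorithm~\ref{alg:cas_mwa}, $v$ is the input of the last \wrt{} operation that precedes $rop$ in $L(\beta)$. By Lemma~\ref{lem:SN_R_val_increase}, the only \wrt{} operations in $\beta$ have the strictly increasing inputs $v_1 < v_2 < \cdots < v_k$, so $v = v_x$ for some $x \in \{1,\dots,k\}$, and moreover none of $\wrt{}(v_{x+1}),\dots,\wrt{}(v_k)$ precedes $rop$ in $L(\beta)$.

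Next I would transfer this information to $L(\alpha)$. By step~(2) of the construction, each $\wrt{}(v_y)$ is replaced by the (nonempty) set of visible \wrtm{}$(v_y)$ operations in the same position. Thus the visible \wrtm{} operations preceding $rop$ in $L(\alpha)$ have inputs drawn from $\{v_1,\dots,v_x\}$, and the visible \wrtm{}$(v_x)$ contributes an input equal to $v$, which exists because step $\rho_x$ lies in $\alpha$.

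It remains to show that no silent \wrtm{} placed by step~(3) violates the maximality claim. Consider a silent \wrtm{}$(w)$ with $v_{y-1} < w \leq v_y$ that precedes $rop$ in $L(\alpha)$. By the placement rule, it is put immediately after visible \wrtm{}$(v_y)$, so the latter also precedes $rop$ in $L(\alpha)$. By the previous paragraph, this forces $y \leq x$, hence $w \leq v_y \leq v_x = v$. Combining both cases, every \wrtm{} preceding $rop$ in $L(\alpha)$ has input at most $v$, and the visible \wrtm{}$(v_x)$ attains this bound, which is exactly the required max-register semantics.

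The main obstacle I expect is the bookkeeping around silent \wrtm{} operations: they do not modify $R$ in a way that is reflected in $L(\beta)$, and their placement by step~(3) could a priori insert a large input just before $rop$. The argument above relies crucially on Lemma~\ref{lem:rho_silent_wrtm}, which forces a silent \wrtm{}$(w)$ with $w$ in the range $(v_{y-1},v_y]$ to end only after $\rho_y$, so it can only be anchored after the visible \wrtm{}$(v_y)$; any attempt to place it earlier would break the real-time order preserved by Lemma~\ref{lem:wrtm_rt_order}.
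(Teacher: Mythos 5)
Your proof is correct and follows essentially the same route as the paper's: both reduce to the register linearization $L(\beta)$ via the indistinguishability of Claim~\ref{claim:valid_and_indistinguishable} and Lemma~\ref{lem:read_valid}, and both use the step-(3) placement rule to bound the inputs of silent \wrtm{} operations by the last preceding visible one. The only cosmetic difference is the direction of the argument (you start from the returned value $v$ and locate it among the $v_i$, whereas the paper starts from the last visible \wrtm{} preceding $rop$ and shows its input is what $rop$ returns), which does not change the substance.
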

\begin{proof}
  Let $v_x$ be the  input of the last visible \wrtm{} that precedes $rop$ in $L(\alpha)$. We claim that $v_x$ is  the largest input of the (silent or visible) \wrtm{} that precedes $rop$ in $L(\alpha)$. Let $w$ be the input of a silent \wrtm{} operation $wop$ that precedes $op$ in $L(\alpha)$. By step (3) of the construction of the linearization, $wop$ is preceded in $L(\alpha)$ by a visible $\wrtm{}$ operation with input $v_{x'} \geq w$. Therefore, $v_x \geq w$ and the claim follows.

  Since $\wrtm{}(v_x)$ is the last \wrtm{} operation that precedes $rop$ in $L(\alpha)$, $\wrt{}(v_x)$ is the last \wrt{} operation that precedes $rop$ in $L(\beta)$. 
  Since $L(\beta)$ is a linearization of an execution $\beta$ of  register implementation (Algorithm~\ref{alg:cas_mwa}), $rop$ returns $v_x$ in execution $\beta$. Therefore, since $\indp{\alpha}{\beta}$, $rop$ returns $v_x$  in execution $\alpha$. 
\end{proof}

\begin{lemma}
  \label{lem:wrtm_audit_return}
  A pair $(j,v)$ is contained in the response set of an \adt{} operation $aop$ if and only if there is a \rd{}  operation by process $p_j$ that returns $v$ and appears before $aop$ in $L(\alpha)$. 
\end{lemma}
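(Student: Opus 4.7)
The plan is to reduce this lemma to the corresponding audit accuracy and completeness results already proved for Algorithm~\ref{alg:cas_mwa} (Lemmas~\ref{lem:audit_completness} and \ref{lem:audit_accuracy}), by exploiting the indistinguishability between $\alpha$ and $\beta$ for readers and auditors (Claim~\ref{claim:valid_and_indistinguishable}) and the way $L(\alpha)$ is constructed from $L(\beta)$.

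First I would observe that since the code of \rd{} and \adt{} is identical in Algorithm~\ref{alg:cas_mwa} and Algorithm~\ref{alg:maxreg}, and since $\indp{\alpha}{\beta}$ for every reader and auditor $p$, every \rd{} or \adt{} operation $op$ present in $H'$ is also present in the history $H'(\beta)$ used to linearize $\beta$, and returns the same value (or, for \adt{}, has the same response set). In particular, $aop$ returns the same response set in $\alpha$ and in $\beta$, and a \rd{} by $p_j$ returns $v$ in $\alpha$ iff it returns $v$ in $\beta$.

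Next I would compare the orderings. Step (1) in the definition of $L(\alpha)$ starts from $L(\beta)$; step (2) replaces each $\wrt{}(v_x)$ with the visible \wrtm{}($v_x$) operations inserted at the same position; step (3) inserts silent \wrtm{} operations. None of these steps changes the relative order of \rd{} and \adt{} operations. Hence, for any \rd{} operation $rop$ by $p_j$ returning $v$, $rop$ precedes $aop$ in $L(\alpha)$ iff $rop$ precedes $aop$ in $L(\beta)$.

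Putting the two observations together: $(j,v)$ is in the response set of $aop$ in $\alpha$ iff it is in the response set of $aop$ in $\beta$ iff (by Lemmas~\ref{lem:audit_completness} and \ref{lem:audit_accuracy} applied to $\beta$) there is a \rd{} operation by $p_j$ returning $v$ that appears before $aop$ in $L(\beta)$ iff there is such a \rd{} appearing before $aop$ in $L(\alpha)$. The only mild subtlety, which I would mention but not belabor, is that the ``response set of $aop$ in $\beta$'' is well defined because $aop$ is included in $L(\beta)$ exactly when it is included in $L(\alpha)$, by the construction of $H'$ and the definition of $\beta$; this is a direct consequence of indistinguishability. I do not foresee a real obstacle here, as the heavy auditing argument was already done in the register proof and we are only transporting it along the indistinguishability relation.
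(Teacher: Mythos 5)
Your proposal is correct and follows essentially the same route as the paper's own proof: both directions are obtained by transporting the statement to the register execution $\beta$ via the indistinguishability $\indp{\alpha}{\beta}$ for readers and auditors, invoking Lemmas~\ref{lem:audit_completness} and~\ref{lem:audit_accuracy} there, and using the fact that the construction of $L(\alpha)$ from $L(\beta)$ preserves the relative order of \rd{} and \adt{} operations. The paper writes the two implications out separately rather than as a chain of equivalences, but the content is identical.
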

\begin{proof}
  Let $op$ be a \rd{} operation by process $p_j$ that returns $v$ and precedes $aop$ in $L(\alpha)$. Let $q$ be the process that invokes $aop$. By construction, $op$ precedes $aop$ also in $L(\beta)$. Since $\ind{\alpha}{p_j}{\beta}$, $op$ returns $v$ in $\beta$, and, as seen in the proof of Algorithm~\ref{alg:cas_mwa} (Lemma~\ref{lem:audit_completness}), the response set of $aop$ in $\beta$ contains $(j,v)$. 
  Since $\ind{\alpha}{q}{\beta}$, the response set of $aop$ contains $(j,v)$ also in $\alpha$.

  Reciprocally, suppose that $(j,v)$ is included in the response set of an \adt{} operation $aop$ by some process  $q$. 
  Since $\ind{\alpha}{q}{\beta}$, $aop$ reports $(j,v)$ also in execution $\beta$, 
  and therefore, there exists a \rd{} operation $rop$ by $p_j$ that precedes $aop$ in $L(\beta)$ and returns $v$ (Lemmal~\ref{lem:audit_accuracy}). By construction, $rop$ also precedes $aop$ in $L(\alpha)$, and since $\ind{\beta}{p_j}{\alpha}$,  returns $v$ in $\alpha$. 
\end{proof}

\paragraph{Auditabilty and uncompromised operation instances}
The characterization (recalled below) of effective \rd{} operations, established in Section~\ref{subsec:register_proof} for Algorithm~\ref{alg:cas_mwa}  holds, as the proof can be easily adapted.


\begin{claim}
  \label{claim:eff_charac}
  A \rd{} operation $rop$ by $p_j$  is $v$-effective in $\alpha$ if and only if it has returned $v$ or it is pending and either  (1) $p_j$ has read $x$ from $\mathit{SN}$, $x = prev\_sn$ (line~\ref{line:mwmrma:read_sn})  and $prev\_val = v$ 
or (2) $p_j$ has applied  \fx{} to $R$ (line~\ref{line:mwmrma:read_fetch}), from which it reads $v$ from $R.val$. 
\end{claim}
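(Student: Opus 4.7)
The plan is to adapt the proof of Lemma~\ref{lem:eff_read_lin}, with the only substantive change coming from the max register semantics.

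For the ``if'' direction, if $rop$ has already returned $v$ in $\alpha$, the conclusion is immediate from Definition~\ref{def:non_effective}. Otherwise, I would fix an extension $\alpha'$ of $\alpha$ in which $rop$ completes returning some value $a$, and any finite $\beta$ with $\ind{\alpha}{p_j}{\beta}$ together with an extension $\beta'$ of $\beta$ in which $rop$ returns $b$; the goal is to show $a = b = v$ under each of (1) and (2). Under (1), the test at line~\ref{line:mwmrma:no_new_write} passes at the start of $rop$ in $\alpha'$, so $rop$ immediately returns $prev\_val = v$. Indistinguishability to $p_j$ forces $prev\_sn$ and $prev\_val$ to coincide at the invocation of $rop$ in $\beta$, and makes the value subsequently read from $\mathit{SN}$ again $x$, so $rop$ returns $v$ in $\beta'$ as well. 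Under (2), the triple fetched from $R$ by the \fx{} at line~\ref{line:mwmrma:read_fetch} fully determines the return value (the value component $v$, after discarding the nonce), and the corresponding \fx{} in $\beta$ returns the same triple by indistinguishability.

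For the ``only if'' direction, I would assume $rop$ is pending and neither (1) nor (2) holds in $\alpha$, and build two extensions in which $rop$ returns distinct values, contradicting $v$-effectiveness. Let $X$ be the value of $\mathit{SN}$ at the end of $\alpha$, fix a writer $p_i$, and pick two values $w' \neq w''$ both strictly larger than every value previously written to $M$ in $\alpha$. In extension $\alpha'$, $p_i$ first completes its pending \wrtm{} (if any), then repeatedly invokes $\wrtm{}(w')$ with fresh nonces until one such invocation becomes visible, i.e., changes $R.val$ to a pair with value component $w'$; by Lemma~\ref{lem:wrtm_wait_free} and Lemma~\ref{lem:SN_R_val_increase}, this happens after finitely many $p_i$-steps and ensures $\mathit{SN} > X$. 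Then $p_j$ resumes $rop$: because neither (1) nor (2) holds, its next step is either the read at line~\ref{line:mwmrma:read_sn} or the \fx{} at line~\ref{line:mwmrma:read_fetch}, and in the first case the value fetched from $\mathit{SN}$ exceeds $prev\_sn$, so $p_j$ proceeds to the \fx{}. The value fetched from $R.val$ has component $w'$, so $rop$ returns $w'$. Constructing $\alpha''$ analogously with $w''$ yields a distinct return value, contradicting $v$-effectiveness.

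The main obstacle, compared with the register proof, is that a max register cannot be pushed to an arbitrary new value: successive values of $R.val$ are strictly increasing (Invariant~\ref{obs:R_val_increase}). I circumvent this by choosing $w'$ and $w''$ both larger than all previously written values, so that a subsequent \wrtm{}$(w')$ (resp.\ \wrtm{}$(w'')$) is guaranteed by wait-freedom and the loop invariants of \wrtm{} to eventually become visible and install the intended pair in $R$; the random nonces play no role since the \rd{} return value uses only the value component.
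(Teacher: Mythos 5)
Your proposal is correct and follows essentially the route the paper intends: the paper gives no separate proof of this claim, merely asserting that the argument for the register version (the claim inside the proof of Lemma~\ref{lem:eff_read_lin}) ``can be easily adapted,'' and your write-up is exactly that adaptation, with the one genuinely new ingredient --- choosing $w'\neq w''$ strictly above every value already written to $M$ so that the monotonicity of $R.val$ (Invariant~\ref{obs:R_val_increase}) does not block the two distinguishing extensions --- identified and handled correctly. The only cosmetic omission is the trivial residual cases of the ``only if'' direction (e.g., $rop$ completed returning $v'\neq v$, or condition (1)/(2) holds structurally but with a value other than $v$), which follow immediately from the uniqueness of the effective value established in your ``if'' direction.
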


Essentially, audit properties are lifted from the auditable register implementation, thanks to the construction of $L(\alpha)$ from a linearization $L(\beta)$ of an execution of that algorithm. 

\begin{lemma}
  \label{lem:rdm_effective_lin}
  A \rd{} operation $rop$ that is invoked in $\alpha$ is in $L(\alpha)$ if and only if $rop$ is effective in $\alpha$. 
\end{lemma}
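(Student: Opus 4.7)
The plan is to leverage the execution $\beta$ of Algorithm~\ref{alg:cas_mwa} constructed in the linearizability argument, together with Lemma~\ref{lem:eff_read_lin} already established for that algorithm. Recall that by Claim~\ref{claim:valid_and_indistinguishable}, $\beta$ is a valid execution of Algorithm~\ref{alg:cas_mwa} and $\indp{\alpha}{\beta}$ for every reader $p$, and by step~(1) of the construction of $L(\alpha)$, the set of \rd{} operations appearing in $L(\alpha)$ coincides with those appearing in $L(\beta)$ (steps~(2) and~(3) only touch \wrtm{} operations).

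First I would show that, for any reader $p_j$, the operation $rop$ is effective in $\alpha$ if and only if it is effective in $\beta$. The key is Claim~\ref{claim:eff_charac}, which characterizes $v$-effectiveness in terms of $p_j$'s local steps: either $rop$ has completed, or $p_j$ has read a matching $prev\_sn$ from $\mathit{SN}$ at line~\ref{line:mwmrma:read_sn} with $prev\_val=v$, or $p_j$ has applied \fx{} to $R$ at line~\ref{line:mwmrma:read_fetch} fetching $v$ from $R.val$. All of these conditions are entirely determined by $\alpha|_{p_j}$. Since $\indp{\alpha}{\beta}$, we have $\alpha|_{p_j}=\beta|_{p_j}$, so exactly the same condition holds in $\beta$. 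Hence $rop$ is $v$-effective in $\alpha$ iff it is $v$-effective in $\beta$.

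Next I would invoke Lemma~\ref{lem:eff_read_lin} applied to the execution $\beta$ of Algorithm~\ref{alg:cas_mwa}: $rop$ is in $L(\beta)$ if and only if $rop$ is effective in $\beta$. Combined with the previous paragraph and with the observation that the \rd{} operations of $L(\alpha)$ are exactly those of $L(\beta)$, we obtain the chain
\[
rop\in L(\alpha)\ \Longleftrightarrow\ rop\in L(\beta)\ \Longleftrightarrow\ rop\ \text{effective in } \beta\ \Longleftrightarrow\ rop\ \text{effective in } \alpha,
\]
which is the desired equivalence.

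The main obstacle is the subtlety in the first step: strictly speaking, Definition~\ref{def:non_effective} quantifies over all execution prefixes indistinguishable to $p_j$ and all their extensions, not only over $\alpha$ itself, so one must argue that the effectiveness of $rop$ in $\alpha$ is truly captured by the local-state condition of Claim~\ref{claim:eff_charac}. I would handle this by noting that Claim~\ref{claim:eff_charac} is proved exactly by exhibiting, whenever neither (1) nor (2) holds, two indistinguishable extensions in which $rop$ returns different values, and conversely by showing that when (1) or (2) holds every indistinguishable extension forces the same return value. Once effectiveness is reduced to a purely local property of $\alpha|_{p_j}$, the indistinguishability $\indp{\alpha}{\beta}$ does the rest and the remaining steps are routine.
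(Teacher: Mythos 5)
Your proposal is correct and follows essentially the same route as the paper: both reduce the claim to Lemma~\ref{lem:eff_read_lin} applied to the indistinguishable execution $\beta$ of Algorithm~\ref{alg:cas_mwa}, using Claim~\ref{claim:eff_charac} to argue that effectiveness is a local property preserved under $\indp{\alpha}{\beta}$, and the fact that the \rd{} operations of $L(\alpha)$ are exactly those of $L(\beta)$. Your extra care about why the local characterization captures the quantification over all indistinguishable prefixes is a welcome elaboration of what the paper states more tersely.
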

\begin{proof}
  Suppose that \rd{} operation  $rop$ by $p_j$ is effective in $\alpha$. $rop$ is also effective in $\beta$ since $\ind{\alpha}{p_j}{\beta}$ and being effective is a local property. Indeed, it follows from the characterization (Claim~\ref{claim:eff_charac}) that to determine if a given \rd{} operation by some process $q$ is effective, it is enough to examine the steps of $q$.  The same  lemma holds for the register implementation (Lemma~\ref{lem:eff_read_lin}), and hence $rop$ is in $L(\beta)$. 
  Since $L(\alpha)$ extends $L(\beta)$, $rop$ is in $L(\alpha)$ as well.

  Conversely, suppose that $rop$ is in $L(\alpha)$. 
  By construction, it is also in $L(\beta)$ and hence $rop$ is effective in $\beta$ by Lemma~\ref{lem:eff_read_lin}. 
  Since $\ind{\beta}{p_j}{\alpha}$, as explained above, $rop$ is effective in $\alpha$.
\end{proof}

As in the proof of Algorithm~\ref{alg:cas_mwa}, Lemma~\ref{lem:wrtm_audit_return} and Lemma~\ref{lem:rdm_effective_lin} imply:

\begin{lemma}
  \label{lem:wrt:audit_effective}
If an \adt{} operation $aop$  is invoked and returns in an extension $\alpha'$ of $\alpha$, and $\alpha$ contains a $v$-effective \rd{} operation by process $p_j$, then $(j,v)$  is contained in the response set of $aop$.  
\end{lemma}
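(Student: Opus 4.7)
The plan is to replay the short argument used in the register case (Lemma~\ref{lem:lin_adt_after_eff_rd}), now combining Lemma~\ref{lem:rdm_effective_lin} (a $\rd{}$ lies in $L(\alpha')$ iff it is effective in $\alpha'$) with Lemma~\ref{lem:wrtm_audit_return} (an $\adt{}$ reports $(j,v)$ iff some $\rd{}$ by $p_j$ returning $v$ is linearized before it in $L(\alpha')$). The only substantive work is verifying the ordering of $rop$ and $aop$ inside $L(\alpha')$.

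First I would note that $v$-effectiveness is monotone under extension: Definition~\ref{def:non_effective} quantifies over all extensions of prefixes indistinguishable to $p_j$, so every extension of $\alpha$ (in particular $\alpha'$) is still a witness for effectiveness. Thus the given $rop$ is $v$-effective in $\alpha'$, and Lemma~\ref{lem:rdm_effective_lin} places $rop$ in $L(\alpha')$. Lemma~\ref{lem:wrtm_read_return_max} then forces the value returned by $rop$ in $L(\alpha')$ to be $v$.

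The second step is to show $rop$ precedes $aop$ in $L(\alpha')$. Using the characterization of Claim~\ref{claim:eff_charac}, at the end of $\alpha$ one of the following already holds: (i) $rop$ has returned; (ii) $p_j$ has executed the $\mathit{SN}$-read of line~\ref{line:mwmrma:read_sn} with $x=prev\_sn$ (silent case); or (iii) $p_j$ has executed $R.\fx{}$ at line~\ref{line:mwmrma:read_fetch} (direct case). In cases (i) and (iii), and in case (ii) when the $\mathit{SN}$-read falls inside $E_x$, the controlling step for $rop$'s placement in $L(\alpha')$ is a step of $rop$ itself (rule R1), and that step is in $\alpha$. In case (ii) when the $\mathit{SN}$-read falls inside a phase $D_{x+1}$, rule R2 places $rop$ immediately before the visible $\wrtm{}(v_{x+1})$, which in turn is linearized at $\rho_{x+1}$; but $\rho_{x+1}$ is the very step that opens $D_{x+1}$, so it also lies in $\alpha$. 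By contrast $aop$ is invoked only in $\alpha'\setminus\alpha$, so its linearization step, the $R.\mathsf{read}$ of line~\ref{line:mwmrma:audit_read}, occurs strictly after every event of $\alpha$, and hence strictly after the step that controls $rop$'s position. Therefore $rop$ precedes $aop$ in $L(\alpha')$.

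Applying Lemma~\ref{lem:wrtm_audit_return} to $\alpha'$ with this $rop,aop$ pair then yields $(j,v)$ in the response set of $aop$, completing the proof. The only delicate point I expect is the ordering argument in case (ii) under rule R2, where $rop$'s position depends on a visible $\wrtm{}$ rather than on a step of $rop$ itself; everything else is routine bookkeeping about extensions and a direct appeal to the audit characterization already established.
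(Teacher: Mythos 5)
Your proof follows the paper's route exactly: the paper derives this lemma directly from Lemma~\ref{lem:wrtm_audit_return} and Lemma~\ref{lem:rdm_effective_lin}, and you supply the same derivation together with the supporting details (monotonicity of effectiveness under extensions, and the placement of $rop$ before $aop$ in $L(\alpha')$) that the paper leaves implicit. The only small imprecision is citing Lemma~\ref{lem:wrtm_read_return_max} for the fact that $rop$ returns $v$ --- that value actually follows from the completion rule for pending \rd{} operations together with Claim~\ref{claim:eff_charac} --- but this does not affect the correctness of the argument.
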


So far, we have ignored the nonce $N$ adjoined to input $w$ of  \wrtm{} operations, treating $(w,N)$ as a single opaque value. 
We now use them to prove that a reader cannot compromise a \wrtm{}($v$), unless it actually reads $v$. 
\begin{lemma}[uncompromised \wrtm]
    \label{lem:wrtm_non_auditable}
  For every value $w$, and every reader $p_j$ either there is a \rd{} operation by $p_j$ in $\alpha$ that is $w$-effective, or there exists $\alpha'$, $\ind{\alpha'}{p_j}{\alpha}$ in which no \wrtm{} has input $w$. 
\end{lemma}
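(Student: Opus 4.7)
The plan is to follow the template of Lemma~\ref{lem:writeNonAudit}, taking care of the auxiliary max register $M$ and of the random nonces that accompany each value written. Suppose $p_j$ has no $w$-effective \rd{} in $\alpha$ (otherwise the first disjunct holds). By Claim~\ref{claim:eff_charac}, $p_j$'s cached $prev\_val$ is never $w$, and whenever $p_j$ applies \fx{} to $R$ (line~\ref{line:mwmrma:read_fetch}) the first coordinate of $R.val$ at that moment is not $w$. Combined with Lemma~\ref{lem:SN_R_val_increase}, this yields that for every index $x$ such that $v_x$ is a pair of the form $(w,\cdot)$, $p_j$ does not fetch $R$ during phase $D_x E_x$.

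The construction of $\alpha'$ then enumerates the \wrtm{}($w$) operations $op_1,\ldots,op_\ell$ in $\alpha$, each with its fresh random nonce $N_i$, and replaces each $op_i$'s input by a value $w_i'' \neq w$ together with a new nonce $N_i''$. The replacement pair $(w_i'', N_i'')$ is chosen so that its lexicographic rank among all pairs ever written to $M$ matches that of $(w, N_i)$, and so that the mutual ordering among the replacements themselves is preserved. With this rank-preservation in hand, every inter-pair comparison made by the algorithm---notably $lval \geq v$ at line~\ref{lmr:writer_break} and the pair returned by each $M.\mathsf{read}$---evaluates identically in $\alpha$ and $\alpha'$ modulo the relabeling. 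It follows that all processes take exactly the same sequence of steps, the successful \cas{}es to $R$ happen at the same points, and the sequence $v_0 < v_1 < \cdots < v_k$ stored in $R$ becomes, in $\alpha'$, the relabeled counterpart where each $v_x = (w, N_i)$ is replaced by $(w_i'', N_i'')$ and all other $v_x$ are unchanged.

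Concluding indistinguishability is then straightforward: $p_j$ accesses only $\mathit{SN}$ (whose values $0, 1, 2, \ldots$ are unchanged) and fetches of $R$. The $bits$ field of each fetched triple depends only on the masks $rand_x$ and on the set of readers having already applied \fx{}, both unchanged in $\alpha'$; the $val$ field of a fetched triple differs from $\alpha$ only during a phase $D_xE_x$ with $v_x = (w, \cdot)$, and by the first paragraph $p_j$ does not fetch $R$ during any such phase. Hence $\ind{\alpha'}{p_j}{\alpha}$. The main obstacle is the feasibility of the substitution: producing replacements $(w_i'', N_i'')$ with $w_i'' \neq w$ whose lexicographic ranks match the originals'. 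This rests on the nonces being drawn from a sufficiently rich random space, so that a replacement can be placed in any open lex-interval between neighboring pairs, together with the value domain containing at least one alternative value in the analogous position; when several $(w, N_i)$ pairs are order-adjacent in $M$, their substitutions must be coordinated to preserve mutual order, which the entropy of the nonce space makes possible.
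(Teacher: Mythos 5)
Your proposal is correct and follows essentially the same route as the paper's proof: a rank-preserving substitution of the pairs $(w,N_i)$ by other pairs, justified by the two facts that the algorithm only compares the (value,\,nonce) pairs lexicographically and that $p_j$ never applies \fx{} to $R$ while the first coordinate of $R.val$ is $w$ (otherwise some \rd{} would be $w$-effective). The feasibility obstacle you flag at the end is resolved in the paper by a concrete choice --- each $(w,N_i)$ is replaced, one operation at a time, by $(u,N')$ where $u$ is the largest \wrtm{} input smaller than $w$ and $N'$ is a fresh nonce placing $(u,N')$ strictly between the largest existing $(u,\cdot)$ pair and $(w,N_i)$ --- which is exactly the kind of rank-matching replacement you postulate.
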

\begin{proof}
  Suppose that $p_j$ has no $w$-effective \rd{} in $\alpha$. If there is no \wrtm{} operation with input $w$, taking $\alpha = \alpha'$ proves the lemma.

  Assume that $w$ is the input of a \wrtm{} operation $op$ in $\alpha$. Let $u$  be the largest input of \wrtm{}  in $\alpha$ smaller than $w$, and let $N$ be the nonce associated to it.  Execution $\alpha'$ is the same as execution $\alpha$, except that the input of $op$ is $u$, and the nonce is $N'$ where $N < N'$. Note that $(u,N) <  (u,N') < (w,N)$, since pairs $(value,nonce)$ are ordered lexicographically. Also, for any other pair $v = (Val,M)$ in $\alpha$, $(Val,M) < (u,N')$ or $(w,N) < (Val,M)$. Therefore, any comparison  between $(u,N')$ and another pair $v$ has the same outcome as a comparison between $(w,N)$ and $v$. Since in Algorithm~\ref{alg:maxreg} the pairs (value,nonce) are only tested for equality or compared, the sequence $(0,v_0), \ldots (k,v_k$) of (sequence number, pair) successively stored  in $(R.seq,R.val)$ is the same in $\alpha$ and $\alpha'$, except if in $\alpha$, $v_x = (w,N)$ for some $x, 1 \leq x \leq k$. In that case, $v_x = (u,N')$ in $\alpha'$.

  If $(w,N)$ is never written to $R$, neither is $(u,N')$ and therefore $\ind{\alpha}{p_j}{\alpha'}$. If $(w,N)$ is written to $R$, $p_j$ does not apply a \fx{} to  $R$ while $R.val = (w,N)$, since otherwise the corresponding \rd{} is $w$-effective. Therefore, $p_j$ does not apply a \fx{} to $R$ in $\alpha'$ while $R.val = (u,N')$ and hence $\ind{\alpha}{p_j}{\alpha'}$. If $\alpha'$ has no \wrtm{} with input $w$, this proves the lemma. Otherwise,  the same construction, applied to $\alpha'$ leads to an execution $\alpha''$, $\alpha\stackrel{p_j}{\sim}\alpha'\stackrel{p_j}{\sim}\alpha''$ which has one less \wrtm{} operation with input $w$. This can be repeated until every \wrtm{}($w$) has been eliminated. 
\end{proof}

\begin{lemma}[uncompromised \rd{}]
    \label{lem:rdm_non_auditable}
    Let  $p_j \neq p_k$ be two readers. There is an execution $\ind{\alpha'}{p_j}{\alpha}$ in which no \rd{} by $p_k$ is $v$-effective.
\end{lemma}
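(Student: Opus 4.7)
The plan is to mirror the argument used in Lemma~\ref{lem:rd_non_auditable} for the plain auditable register. This is legitimate because the code of \rd{} is identical in Algorithm~\ref{alg:maxreg} and in Algorithm~\ref{alg:cas_mwa}, and the only additional shared object in Algorithm~\ref{alg:maxreg}, the max register $M$, is accessed exclusively by writers. Hence readers interact with the shared state in the same way as in the register algorithm, and what $p_j$ observes depends only on $R$, $\mathit{SN}$, $V$, $B$ and its own local state.

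First, I would remove from $\alpha$ every silent $v$-effective \rd{} by $p_k$ (for any $v$). Such an operation only applies a $\mathsf{read}$ to $\mathit{SN}$ and thus leaves no shared side effect, so the resulting execution $\alpha_0$ satisfies $\indp{\alpha_0}{\alpha}$ for every process other than $p_k$, and in particular $\ind{\alpha_0}{p_j}{\alpha}$. Next, for each direct $v$-effective \rd{} $rop$ by $p_k$ remaining in $\alpha_0$, let $x$ be the value of $R.seq$ when $rop$ applies its $R.\fx(2^k)$, and let $b$ be the resulting value of $R.bits[k]$. Lemma~\ref{lem:reader_once} (whose proof is unchanged since the \rd{} code is unchanged) says that $p_j$ applies at most one \fx{} to $R$ while $R.seq = x$. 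If no such \fx{} by $p_j$ exists, or if it precedes $rop$'s \fx{}, then $rop$ can simply be excised; $p_j$ sees an identical sequence of responses from its own primitives.

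Otherwise, I would construct a new execution by simultaneously (1) flipping the $k$th bit of $rand_x$ and (2) deleting $rop$. Since $rand_x$ is a one-time pad known only to writers and auditors and is not visible to $p_j$, and since the two changes together leave $R.bits[k]$ exactly what it was at the moment of $p_j$'s own \fx{} on $R$ while $R.seq = x$, the modified execution is indistinguishable from the previous one to $p_j$. Iterating this transformation over all direct $v$-effective \rd{}s by $p_k$ (for all $v$) yields the desired $\alpha'$ with $\ind{\alpha'}{p_j}{\alpha}$.

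The main obstacle will be the bookkeeping for the iterated construction. Two points need care. First, by Lemma~\ref{lem:reader_once} applied to $p_k$, successive direct \rd{}s by $p_k$ that reach the \fx{} step use distinct values of $R.seq$, so the masks $rand_x$ altered in different iterations are independent and the transformations compose cleanly. Second, flipping a bit of $rand_x$ must not invalidate the execution as a trace of Algorithm~\ref{alg:maxreg}: because the deciphering in \wrt{} (line~\ref{lmr:writer_record_reads}) and in \adt{} (line~\ref{line:mwmrma:audit_non_definitive}, line~\ref{line:mwmrma:audit_definitive}) always compares $R.bits[k]$ to $rand_x[k]$, and the change in $rand_x[k]$ is exactly compensated by the removal of $p_k$'s \fx{}, the set of readers reported for the value with sequence number $x$ (by every writer and every auditor) is identical to the set that would have been reported had we simply removed $p_k$ from it in the original $\alpha$; hence the resulting trace is a genuine execution of Algorithm~\ref{alg:maxreg}, concluding the proof.
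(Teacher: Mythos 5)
Your proposal is correct in substance but takes a genuinely different route from the paper's. The paper does not redo the one-time-pad argument for Algorithm~\ref{alg:maxreg} at all: it reuses the execution $\beta$ of Algorithm~\ref{alg:cas_mwa} already built for the linearizability proof (Claim~\ref{claim:valid_and_indistinguishable}), which satisfies $\indp{\alpha}{\beta}$ for every reader and auditor, applies Lemma~\ref{lem:rd_non_auditable} to $\beta$ to obtain $\beta'$ with $\ind{\beta'}{p_j}{\beta}$ and no $v$-effective \rd{} by $p_k$, and concludes by transitivity of $\stackrel{p_j}{\sim}$ --- a three-line reduction. You instead re-derive the register lemma directly on executions of Algorithm~\ref{alg:maxreg}: delete $p_k$'s silent effective \rd{}s, then for each direct one flip the $k$th bit of $\mathit{rand}_x$ while excising the \fx{}. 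Your justification for why the argument transfers (identical \rd{} code, $M$ accessed only by writers, Lemma~\ref{lem:reader_once} unaffected) is exactly the right one. What the paper's reduction buys is brevity and reuse of machinery; what yours buys is that you never leave Algorithm~\ref{alg:maxreg}, so you avoid the (glossed-over) step of transporting ``no \rd{} by $p_k$ is effective'' from an execution of one algorithm to an execution of another, and you explicitly check that the modified trace is still a legal execution --- a point the paper leaves implicit even in the register case. One shared caveat, inherited from the paper's own proof of Lemma~\ref{lem:rd_non_auditable} rather than introduced by you: both treatments tacitly assume that excising $p_k$'s \fx{} (and, for a completed direct \rd{}, its $SN.\cas$) does not flip the outcome of a concurrent writer's \cas{} on $R$ nor delay the advancement of $\mathit{SN}$ in a way $p_j$ could observe; since your argument is at the same level of detail as the paper's, I do not count this as a gap in your proposal.
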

\begin{proof}
  The same lemma holds for the register implementation (Lemma~\ref{lem:rd_non_auditable}).
  Hence, there exists an execution $\beta', \ind{\beta'}{p_j}{\beta}$ of Algorithm~\ref{alg:maxreg} in which no \rd{} by $p_k$ is $v$-effective. 
  Since $\ind{\beta}{p_j}{\alpha}$, we have that $\ind{\beta'}{p_j}{\alpha}$, implying the lemma. 
\end{proof}

We conclude:
\begin{theorem}
  \label{thm:maxreg}
  Algorithm~\ref{alg:maxreg} is a wait-free, linearizable implementation of an auditable, multi-writer max register. Moreover, in any execution $\alpha$, an \adt{} reports $(j,v)$ if and only if $p_j$ has a $v$-effective \rd{} in $\alpha$, each \wrt{}($v$) is uncompromised by a reader $p_j$ unless it has a $v$-effective \rd{} and, each \rd{} by $p_k$ is uncompromised by a reader $p_j \neq p_k$.   
\end{theorem}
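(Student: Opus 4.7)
The plan is to assemble the lemmas already established in Appendix~\ref{sec:maxregister_proof}, since the theorem is essentially a consolidation statement. For wait-freedom, I would observe that \rd{} and \adt{} share their code with Algorithm~\ref{alg:cas_mwa} and therefore inherit wait-freedom from Lemma~\ref{lem:cas_mwa_wf}, while wait-freedom of \wrtm{} is exactly Lemma~\ref{lem:wrtm_wait_free}. The key intuition there is that once a \wrtm{}$(w)$ has announced its input in $M$, any subsequent completed iteration forces $R.val \geq w$, so the operation either breaks out of the loop or succeeds within at most one additional round of $m$ reader interferences.

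For linearizability, the strategy is to lift the linearization $L(\beta)$ from the companion execution $\beta$ of Algorithm~\ref{alg:cas_mwa} produced in Claim~\ref{claim:valid_and_indistinguishable}. I would then argue: (i)~the total order $L(\alpha)$, obtained by replacing each $\wrt(v_x)$ in $L(\beta)$ with the visible \wrtm{}$(v_x)$ operations and inserting silent \wrtm{} operations as specified, preserves real-time order by Lemma~\ref{lem:wrtm_rt_order}; (ii)~each \rd{} returns the largest input among all preceding \wrtm{} operations in $L(\alpha)$ by Lemma~\ref{lem:wrtm_read_return_max}; and (iii)~each \adt{} returns $(j,v)$ exactly when a \rd{} by $p_j$ returning $v$ precedes it in $L(\alpha)$, by Lemma~\ref{lem:wrtm_audit_return}. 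Together these establish that $L(\alpha)$ satisfies the sequential specification of an auditable max register.

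The three advanced properties then follow as corollaries. The ``if'' part of the audit characterization follows from Lemma~\ref{lem:wrt:audit_effective}: any $v$-effective \rd{} by $p_j$ is reported by a completed subsequent \adt{}. The ``only if'' part combines Lemma~\ref{lem:wrtm_audit_return} (an audit only reports pairs corresponding to linearized \rd{} operations) with Lemma~\ref{lem:rdm_effective_lin} (which shows a \rd{} appears in $L(\alpha)$ iff it is effective). The uncompromised properties are direct applications of Lemma~\ref{lem:wrtm_non_auditable} and Lemma~\ref{lem:rdm_non_auditable}.

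The main obstacle is conceptual rather than computational: one must verify that the indistinguishability bridge $\alpha \stackrel{p}{\sim} \beta$ for readers and auditors is correctly exploited in each direction. In particular, effectiveness and being uncompromised are \emph{local} properties of a process's view, so they transfer across $\stackrel{p}{\sim}$; but the audit-completeness argument requires that an auditor $q$ sees the same audit response in $\alpha$ as in $\beta$, which relies on $\alpha \stackrel{q}{\sim} \beta$. I would explicitly invoke this indistinguishability when piecing together the ``iff'' for audit reporting, making sure that the nonce construction used in Lemma~\ref{lem:wrtm_non_auditable} is only needed for the uncompromised-\wrtm{} clause and does not interfere with the other clauses, which treat $(w,N)$ pairs as opaque lexicographically-ordered values.
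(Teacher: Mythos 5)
Your proposal is correct and follows essentially the same route as the paper: the theorem is stated as a consolidation of the appendix lemmas, and you assemble exactly the right ones (Lemma~\ref{lem:wrtm_wait_free} for wait-freedom, the lift of $L(\beta)$ via Claim~\ref{claim:valid_and_indistinguishable} together with Lemmas~\ref{lem:wrtm_rt_order}, \ref{lem:wrtm_read_return_max}, and \ref{lem:wrtm_audit_return} for linearizability, and Lemmas~\ref{lem:wrt:audit_effective}, \ref{lem:rdm_effective_lin}, \ref{lem:wrtm_non_auditable}, and \ref{lem:rdm_non_auditable} for the auditability and uncompromised clauses). Your remark that effectiveness is a local property transferring across $\ind{\alpha}{p}{\beta}$, while audit completeness needs indistinguishability for the auditor, matches the reasoning already embedded in the proofs of Lemmas~\ref{lem:wrtm_audit_return} and \ref{lem:rdm_effective_lin}.
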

\end{document}
